\DeclareMathOperator{\R}{\mathbb{R}} 
\DeclareMathOperator{\N}{\mathbb{N}} 
\DeclareMathOperator{\A}{\mathcal{A}} 
\DeclareMathOperator{\E}{\mathbb{E}} 
\DeclareMathOperator{\wP}{\mathbb{P}} 
\let\phi\varphi 
\DeclareMathOperator*{\argmin}{arg\,min}
\DeclareMathOperator*{\argmax}{arg\,max}
\DeclareMathOperator{\sign}{sign} 
\begin{document}

\pagenumbering{gobble} 


\chapter*{Eidesstattliche Erklärung}
Hiermit erkläre ich, dass ich die vorliegende Arbeit selbstständig und eigenhändig sowie ohne unerlaubte fremde Hilfe und ausschließlich unter Verwendung der aufgeführten Quellen und Hilfsmittel angefertigt habe.
\\[.3cm]
Berlin, den 31.01.2024
\\[2cm]
Alexander Schütt

\cleardoublepage


\chapter*{Zusammenfassung}
In dieser Arbeit wird das Prinzip des Reinforcement Learnings auf ein reales Problem angewandt. Ziel der Arbeit ist die Injektionseffizienz von BESSY II zu optimieren.

Zuerst werden die Grundladen des Reinforcement Learnings zusammengetragen. Wir beweisen unter welchen Annahmen sich mit dem Optimalitätsprinzip von Bellman optimale Strategien finden lassen und stellen zwei auf dem genannten Prinzip basierenden Algorithmen vor. Zusätzlich wird gezeigt, wie die Parameter der Algorithmen automatisch auf das Problem optimiert werden können.

Danach widmen wir uns der Anwendung. Zunächst wird die Synchrotronstrahlungsquelle BESSY II vorgestellt. Wir werden auf die einzelnen Bauteile genauer eingehen, insbesondere auf den Non-Linear Kicker. Jener wird bei der zu optimierenden Injektion verwendet. Um ein Verständnis der Injektion von Elektronen mit dem Non-Linear Kicker aufzubauen, wird gezeigt unter welcher notwendigen Bedingung eine Injektion erfolgreich sein kann.

Daraufhin stellen wir drei verschiedene Modelle vor, wie man die Injektion mathematisch beschreiben kann. Die Modelle unterscheiden sich darin, wie viele Elektronen injiziert werden und wann entschieden werden kann in welcher Runde der Non-Linear Kicker aktiviert wird.

Um gute Modelle mit den vorgestellten Reinforcement Learning Algorithmen zu erhalten, zeigen wir, wie wir eine Simulation der BESSY II Injektion approximiert haben, um diese signifikant zu beschleunigen.

Zum Schluss zeigen wir die besten Injektionsstrategien in allen 3 Modellen und wie man diese für die reale Synchrotronstrahlungsquelle BESSY II benutzen kann.

\cleardoublepage

\tableofcontents
\vspace{-1cm}

\cleardoublepage
\pagenumbering{arabic}

\chapter{Introduction}
In this thesis, we show how stochastic control algorithms are used to increase the injection efficiency of BESSY II. We also give an outlook on how to fully automate the injection process at BESSY II.  

We start with an introduction to the electron injection at BESSY II and reinforcement learning.

\section{Electron Injection at BESSY II}
This thesis explores the application of stochastic control algorithms, specifically reinforcement learning algorithms, to a real world application. The goal is to improve the electron injection efficiency of BESSY II.

BESSY II is a synchrotron light source located in Adlershof, Berlin. First, it accelerates electrons in a synchrotron and then stores them in a storage ring. Then, the stored electrons are accelerated in various directions to generate synchrotron light. To maintain the energy of the synchrotron light at a constant level, new electrons need to be injected regularly. Typically, these injections happen roughly every three minutes.

Currently, electrons are injected by a 4-kicker bump:
\begin{figure}[H]
    \centering
    \includegraphics[width=12cm]{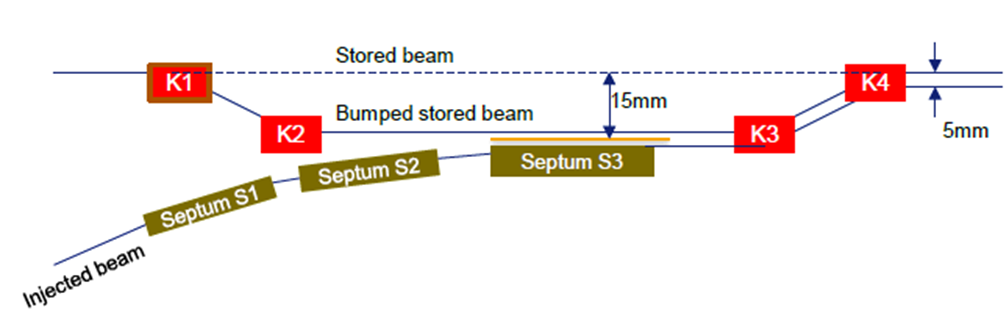}
    \caption{4-Kicker bump, by \cite{4kickerbump}.}
    \label{fig:4kicker_bump}
\end{figure}
In a first step, the stored electrons, that create the synchrotron light, are kicked out of their initial path by 2 magnets (called K1,K2 in figure \ref{fig:4kicker_bump}). The redirected electrons are then merged with newly injected electrons before being kicked back (by magnets K3,K4) to the stored beam position.

This process causes disturbances in the path of the stored electrons, which reduces the quality of the synchrotron light and additionally risks losing 
 a significant fraction of the electrons in the storage ring during this injection procedure.

A new injection procedure, called the Non-Linear Kicker Injection, has been introduced to BESSY II, see \cite{nlk_paper}. This method uses a single magnet with a non-linear magnetic field to inject the electrons, as shown in the figure below.
\begin{figure}[H]
    \centering
    \includegraphics[width=8cm]{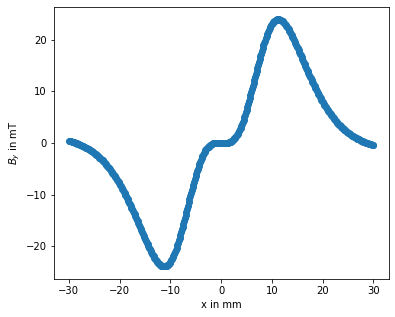}
    \caption{The magnetic field of the non-linear kicker. On the x axis the x-dimensional distance to the centre of the beam line is shown. On the y axis the corresponding amount of magnetic field strength is shown.}
    \label{fig:nonlinearkickerfield}
\end{figure}
This injection method has the advantage of only minimally affecting the stored electrons whose x values are very close to 0, as the magnetic field strength is almost 0 in this region (see figure \ref{fig:nonlinearkickerfield}). As a result, the stored electrons are only slightly disturbed and the quality of the synchrotron light is not deteriorated. 

This non-linear kicker magnet has already been installed at BESSY II in 2011, see \cite{nlk_paper}. It is currently not used because the injection efficiency is lower compared to the 4-kicker bump. Recently some progress has been made to increase the efficiency by optimising the strength of the magnets leading to the non-linear kicker.

\section{Reinforcement Learning}
Reinforcement learning is a subfield of machine learning that deals with decision-making.
Its foundation is based on learning by interaction, which is inspired by the way humans learn.

Through interactions, we humans can analyse how the environment around us responds to our actions. Given such a response, we get the information on whether the environment has changed in the desired way, or if we need to adapt our actions. 

Consider, for example, a game of chess. The outcome of a game is influenced by the actions taken, whether it results in a win, loss or draw. Based on the outcome, we may adjust our play style to improve our chances of winning future games.  

In reinforcement learning, this concept is applied. Reinforcement learning algorithms explore the environment to find a policy that maximise a numerical reward. 

The mathematical foundation of reinforcement learning are Markov Decision Models. They describe the environment by providing the state and action spaces, the dynamics between them, such that the Markov property holds and additionally a reward system that rewards/punishes the agent for taking good/bad actions. Social science speaks about positive/negative reinforcement.

Once we have defined Markov Decision Models, we use the Markov property of the environment to show fundamental mathematical connections that allow us to compute optimal policies. 

As these optimal policies can only be computed efficiently in discrete and finite cases, we describe different reinforcement learning algorithms. They use the mathematical connections, but in an approximate way, using neural networks. Allowing efficient calculations for continuous action and state spaces. Finally, we apply them to the injection problem. 

\section{Scope of the Thesis}
This thesis is structured as follow:

Chapter \ref{chap:rl} introduces the mathematical foundation of Reinforcement Learning. We define Markov Decision Processes and prove under which conditions the Bellman equation holds. After that, we present different algorithms for the search of good strategies. 

In chapter \ref{chap:physics}, the physics models are introduced, which are the basis of the used simulation. These shall give a concise description of the physics involved.

Chapter \ref{chap:problem_modelling} describes how the Markov decision models are formulated. In total we give 4 formulations: The first formulation describes the actual problem in BESSY II. After that we see three formulations that are able to be solved by reinforcement learning algorithms. We show the 1000-step single-electron injection, the 1000-step 1000-electron injection and finally the 1-step 1000-electron injection.

As the preceding chapters laid out the foundations, chapter \ref{chap:calculations} presents how the reinforcement learning algorithms are applied to the injection problem and gives its training results.

\clearpage\null\thispagestyle{empty}

\chapter{Reinforcement Learning}\label{chap:rl}
This chapter is based on chapter 2 of the book \textit{'Markov decision processes with applications to finance'} \cite{mdp_in_finance} by Nicole Bäuerle and Ulrich Rieder. Some adaptations have been made, to fit the setting of our application better. For example, the set of admissible actions is identical in each state, as this is the case in our application and it simplifies the model.

We start this chapter by presenting the foundations of reinforcement learning. After the definition of a Markov Decision Model/Process, policies and the value function, we show the Bellman equation and prove that under some conditions we can find optimal policies using the Bellman equation. Next, we present temporal difference learning and two algorithms to find good policies. After that, we introduce partially observable Markov decision processes, that are needed in the non-linear kicker injection problem. At the end, we describe the process of the hyperparameter search and present different algorithms for the hyperparameter search.   

\section{Markov Decision Processes}
\begin{defn}[Markov Decision Model]
    A Markov Decision Model with planning horizon $N\in\N$ is a tuple $(E,A,p,p_0,r,\beta,g)$, where:
    \begin{itemize}
        \item $E$ is a measurable space with $\sigma$-algebra $\mathcal{E}$, called state space
        \item $A$ is a measurable space with $\sigma$-algebra $\mathcal{A}$, called action space 
        \item $p$ is a stochastic transition kernel from $E\times A$ to $E$, i.e. for all $(x,a)\in E\times A$ and $B\in\mathcal{E}$, the function $B\mapsto p(B|x,a)$ is a probability measure on $\mathcal{E}$ and the mapping $(x,a)\mapsto p(B|x,a)$ is measurable. The value $p(B|x,a)$ is the probability that if at any time $n$ in state $x\in E$ the action $a$ is taken, the next state at time $n+1$ is in $B$. 
        \item $p_0$ is called initial probability on $E$
        \item $r\colon E\times A\to \R$ is a measurable function, called reward function
        \item discount factor $\beta\in(0,1]$
        \item $g\colon E\to\R$ is a measurable function, called terminal reward function
    \end{itemize}
\end{defn}

Our goal now is to define a random process on the Markov Decision Model $(E,A,p,p_0,r,\beta,g)$ with planning horizon $N\in\N$. The random process is going to take values in the state space, that can be influenced by the actions taken. For this, we consider the following measurable space:
\[
    \Omega\coloneqq E^{N+1}\quad\text{with $\sigma$-algebra}\quad \mathcal{F}\coloneqq\bigotimes_{n=0}^N\mathcal{E},
\]
where $(E,\mathcal{E})$ is the state space.
\begin{defn}[Markov Decision Process]
    Let $M = (E,A,p,p_0,r,\beta,g)$, $N\in\N$ and $(\Omega, \mathcal{F})$ be as above. For all $n\in\{0,\dots,N\}$ and $\omega = (\omega_0,\dots,\omega_N)\in\Omega$, we define
    \[
    X_n(\omega)=X_n((\omega_0,\dots,\omega_N)) \coloneqq \omega_n.
    \]
    The random process $X=\{X_n\}_{n=0,\dots,N}$ is called Markov Decision Process.
    
    Since $\mathcal{F}$ is the product $\sigma$-algebra on $\Omega$, the functions $X_n$ are measurable, as they are projections.
\end{defn}

In the following, we assume that $(E,A,p,p_0,r,\beta,g)$ denotes a Markov Decision Model with planning horizon $N\in\N$ and $X=\{X_n\}_{n=0,\dots,N}$ denotes a  Markov Decision Process on $(\Omega,\mathcal{F})$.

To allow  the Markov Decision Process to depend on the actions taken, we first define the set of valid actions: 
\begin{defn}[decision rule/ policy]
We call a measurable map $f\colon E\to A$ a decision rule. A sequence of decision rules $\pi = (f_0,\dots,f_{N-1})$ is called policy. The set of all policies is denoted by $\Xi$.
\end{defn}
Note that if we assume that the state space $E$ and action space $A$ are not empty, the set of all policies $\Xi$ is not empty, as all constant mappings $f\colon E\to A$ are measurable.

\begin{remark} \label{rem:Pdefn}
    Now, we can describe a Markov Decision Process which depends on a policy.  
    
    Let $\pi=(f_0,\dots,f_{N-1})$ be a fixed policy and $x\in E$ an initial state. 
    One can show that there exists a unique probability measure $\wP_x^\pi$ on $(\Omega,\mathcal{F})$, such that for all $B\in\mathcal{E}$:
    \begin{itemize}
        \item $\wP_x^\pi(X_0\in B)= \delta_x(B)$,
        \item $\wP_x^\pi(X_{n+1}\in B|X_1,\dots,X_n)=\wP_x^\pi(X_{n+1}\in B| X_n) = p(B|X_n,f_n(X_n))$.
    \end{itemize}
    The second property is called Markov property, which tells us, that the next state only depends on the current state and action chosen, and it does not depend on the history of states and chosen actions. 

    Note that in our analysis, we will not go into detail about the initial probability $p_0$, as the next definitions all depend on a current state $x\in\E$ at some time point $n$. 

    Further, note that in our definition of $\wP_x^\pi$, the actions are applied indirectly by having the measure $\wP_x^\pi$ depend on the policy $\pi$.

    From now on, we denote by $\wP_{x}^\pi$ the probability measure from above, given policy $\pi\in\Xi$ and initial state $x\in E$, and by $\E_x^\pi$ the expectation w.r.t. $\wP_x^\pi$. Further, we define $\wP_{nx}^\pi$ as conditional probability $\wP_{nx}^\pi(\cdot)\coloneqq \wP^\pi(\cdot \mid X_n = x)$, which is well-defined by the Markov property. Denote by $\E_{n,x}^\pi$ the associated expectation.
\end{remark}

We move on to compare the performances of different policies within a Markov Decision Model. 
In order to ensure that the following definitions are well defined, we impose following assumption:

Let $r^+\colon E\times A\to\R_{\geq0}$ and $g^+\colon E \to\R_{\geq0}$ denote the positive  part of r and g, i.e. for all $x\in E$ and $a\in A$
\[
r^+(x,a)=\max\{r(x,a),0\} \quad\text{and}\quad g^+(x)=\max\{ g(x),0\}.
\]
Henceforth, we assume that the following inequality holds for all $n\in\{0,\dots,N\}$ and $x\in E$:
\begin{align}
    \sup_\pi \E_{n,x}^\pi\left[\,\sum_{k=n}^{N-1}\beta^kr^+(X_k,f_k(X_k))+\beta^Ng^+(X_N)\right]<\infty\label{eq:eq1}\tag{$A_N$},
\end{align}
where the supremum is taken over all policies $\pi=(f_0,\dots,f_{N-1})\in\Xi$.

\begin{defn}[policy reward]\label{def:policy_reward}
    Let $\pi=(f_0,\dots,f_{N-1})\in\Xi$ be a policy. The \textit{expected total reward starting at time $n$}, given policy $\pi$ and state $x\in E$ is defined as
    \[
    V_{n,\pi}(x)\coloneqq\E_{n,x}^\pi\left[\,\sum_{k=n}^{N-1}\beta^kr(X_k,f_k(X_k))+\beta^Ng(X_N)\right]\in[-\infty,\infty).
    \]
    The \textit{maximal expected reward starting at time $n$} is given as
    \[
    V_n(x)\coloneqq \sup_\pi V_{n,\pi}(x)\in[-\infty,\infty),
    \]
    for $x\in E$. We call the function $V_n\colon E\to\R$ value function.
\end{defn}
\begin{remark}
    By Assumption \ref{eq:eq1} the mappings $V_{n,\pi}$ and $V_n$ are both well defined.

    If there exists a policy $\pi\in\Xi$ with $V_{0,\pi}(x)=V_0(x)$ for all $x\in E$, we call the policy $\pi$ \textit{optimal} and denote the optimal policy $\pi$ by $\pi^*$.
\end{remark}

\clearpage
\section{Bellman Equation}
In the last section we defined Markov Decision Processes on Markov Decision Models, together with policies. In this section we show how to iteratively calculate the value function $V_{n,\pi}$ of a policy $\pi$.

\begin{thm}[reward iteration]\label{thm:reward_iteration}
    Let $\pi=(f_0,\dots,f_{N-1})\in\Xi$ be a policy. Then, for all $n = 0,1\dots,N-1$:
    \begin{enumerate}
        \item $V_{n,\pi}(x) = \beta^nr(x,f_n(x))+\int_E V_{n+1,\pi}(x')p(\mathrm{d}x'|x,f_n(x)),\quad\text{for all }  x\in E,
        $\label{enum:enum1:1}
        \item $V_{N,\pi}(x)=\beta^Ng(x)$, for all $x\in E$. \label{enum:enum1:2}
    \end{enumerate}
\end{thm}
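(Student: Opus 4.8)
The plan is to prove part \ref{enum:enum1:2} first, as it follows directly from the definition of $V_{N,\pi}$ in Definition \ref{def:policy_reward}: setting $n = N$ makes the sum $\sum_{k=N}^{N-1}(\cdot)$ empty, so $V_{N,\pi}(x) = \E_{N,x}^\pi[\beta^N g(X_N)]$, and since $\wP_{N,x}^\pi(X_N = x) = 1$ by the Markov property (conditioning on $X_N = x$), this expectation is simply $\beta^N g(x)$.

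For part \ref{enum:enum1:1}, I would start from the definition
\[
V_{n,\pi}(x) = \E_{n,x}^\pi\left[\,\beta^n r(X_n, f_n(X_n)) + \sum_{k=n+1}^{N-1}\beta^k r(X_k, f_k(X_k)) + \beta^N g(X_N)\right]
\]
and split the expectation by linearity. The first term is deterministic given $X_n = x$, contributing $\beta^n r(x, f_n(x))$. For the remaining terms, I would apply the tower property, conditioning additionally on $X_{n+1}$: write the tail expectation as $\E_{n,x}^\pi\big[\E_{n,x}^\pi[\,\cdot \mid X_{n+1}]\big]$, and use the Markov property from Remark \ref{rem:Pdefn} to argue that, conditionally on $X_{n+1} = x'$, the inner expectation of $\sum_{k=n+1}^{N-1}\beta^k r(X_k, f_k(X_k)) + \beta^N g(X_N)$ equals $V_{n+1,\pi}(x')$. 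Finally, since the conditional law of $X_{n+1}$ given $X_n = x$ is $p(\cdot \mid x, f_n(x))$, the outer expectation becomes the integral $\int_E V_{n+1,\pi}(x')\,p(\mathrm{d}x'\mid x, f_n(x))$, which yields the claimed identity.

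The main obstacle is the careful handling of the conditional expectations: one must justify that the process $\{X_k\}_{k \geq n+1}$ under $\wP_{n,x}^\pi$, conditioned on $X_{n+1} = x'$, has the same distribution as the process under $\wP_{n+1,x'}^\pi$ — i.e.\ a time-homogeneity/consistency statement for the family $\wP_{n,x}^\pi$ that follows from the time-homogeneous transition kernel $p$ and the defining properties of $\wP_x^\pi$ in Remark \ref{rem:Pdefn}. A secondary technical point is integrability: Assumption \ref{eq:eq1} guarantees that all the expectations and the integral are well defined (the positive parts have finite supremum, so Fubini/Tonelli applies to split the sum and interchange with the integral), and this should be invoked explicitly to make the linearity-of-expectation and tower-property steps rigorous.
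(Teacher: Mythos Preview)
Your proposal is correct and follows essentially the same route as the paper: prove the terminal case directly from the empty sum and the conditioning $X_N=x$, then for the recursive step split off the $k=n$ term, apply the tower property conditioning on $X_{n+1}$, and use the Markov property from Remark~\ref{rem:Pdefn} to identify the inner expectation with $V_{n+1,\pi}(x')$ and the outer one with the integral against $p(\mathrm{d}x'\mid x,f_n(x))$. Your additional remarks on the consistency of the measures $\wP_{n,x}^\pi$ and on integrability via Assumption~\eqref{eq:eq1} are more explicit than what the paper records, but they are exactly the points that justify the computation.
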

\begin{proof}
    Equation \ref{enum:enum1:2} follows due to
    \begin{align*}
    V_{N,\pi}(x)&=\E_{N,x}^\pi\left[\,\sum_{k=N}^{N-1}\beta^kr(X_k,f_k(X_k))+\beta^Ng(X_N)\right]\\
    &= \E_{N,x}^\pi\left[\,\beta^Ng(X_N)\right]
    = \E^\pi\left[\,\beta^Ng(X_N)\mid X_N=x\right]=\beta^Ng(x),
    \end{align*}
    for all $x\in E$.

    Now, we show equation $\ref{enum:enum1:1}$. Let $x\in E$ be arbitrary. By definition, we get
    \begin{align*}
        V_{n,\pi}(x)
        &=\E_{n,x}^\pi\left[\,\sum_{k=n}^{N-1}\beta^kr(X_k,f_k(X_k))+\beta^Ng(X_N)\right]\\
        &=\E_{n,x}^\pi\left[\,\beta^nr(x,f_n(x))\right] +\E_{n,x}^\pi\left[\,\sum_{k=n+1}^{N-1}\beta^kr(X_k,f_k(X_k))+\beta^Ng(X_N)\right].
    \end{align*}
    Using the tower property of conditional expectations we get
    \begin{align*}
    V_{n,\pi}(x)
    &=\beta^nr(x,f_n(x))+\E_{n,x}^\pi\left[\, \E_{n,x}^\pi\left[\,\sum_{k=n+1}^{N-1}\beta^k r(X_k,f_k(X_k)) +\beta^Ng(X_N)\Big | X_{n+1}\right]\right]\\
    &=\beta^nr(x,f_n(x))+\int_E \E_{n+1,x'}^\pi\left[\,\sum_{k=n+1}^{N-1}\beta^k r(X_k,f_k(X_k)) +\beta^Ng(X_N)\right]p(\mathrm{d}x'|x,f_n(x))\\
    &=\beta^nr(x,f_n(x))+\int_E V_{n+1,\pi}(x')\,p(\mathrm{d}x'|x,f_n(x)),
    \end{align*}
    where the second equation follows remark $\ref{rem:Pdefn}$ $\wP_x^\pi(X_{n+1}\in B| X_n) = p(B|X_n,f_n(X_n))$.
\end{proof}

\clearpage
Given Theorem $\ref{thm:reward_iteration}$, we are able to calculate iteratively the value function $V_{n,\pi}$ of a policy $\pi$. For the remaining part of this section we concentrate on calculating an optimal policy. 
\begin{remark}
   Using a similar backward iteration as in Theorem $\ref{thm:reward_iteration}$, we make the following observation.
   For the maximal expected reward functions $(V_n)_{n\in\{0,\dots,N\}}$ the following equations should hold for all $x\in E$
   \begin{align*}
       V_N(x) &= \beta^Ng(x),\\
       V_n(x) &= \sup_{a\in A}\left\{\beta^nr(x,a)+\int_E V_{n+1}(x')p(\mathrm{d}x'|x,a)\right\},\quad \text{for }n\in\{0,\dots,N-1\}.\label{eq:bellman}\tag{Bellman equation}
   \end{align*}
    This equation is called the Bellman equation.
    
    By definition \ref{def:policy_reward}, the maximal expected reward at time $N$ starting in $x\in E$ has to be $\beta^Ng(x)$. 
    On the other hand, at times $n=0,\dots,N-1$, we expect the maximum expected reward $V_n$ to be the supremum over the direct reward and the maximum expected reward at the next time step, given the best action.
    
    In general, the second equality does not need to hold, since there might not be a $a\in A$, that takes the supremum value. In the case where the supremum is attained, the optimal obtained decision rules $f_n^*$, with 
    \[
    f_n^*(x)\in\mathrm{argmax}_{a\in A}\left\{\beta^nr(x,a)+\int_E V_{n+1}(x')p(\mathrm{d}x'|x,a)\right\},\quad \text{for }n\in\{0,\dots,N-1\},
    \]
    do not have to be measurable.

    We continue to analyse which assumptions are needed to use the Bellman equation and prove that the solutions obtained are optimal.  
    \end{remark}

\begin{thm}[Verification Theorem] \label{thm:verification}
    Let $(v_n\colon E\to [-\infty,\infty])_{n=0,\dots,N}$ be a family of measurable functions, satisfying the \ref{eq:bellman}, i.e.
    \begin{align*}
       v_N(x) &= \beta^Ng(x),\\
       v_n(x) &= \sup_{a\in A}\left\{\beta^nr(x,a)+\int_E v_{n+1}(x')p(\mathrm{d}x'|x,a)\right\},\quad \text{for }n\in\{0,\dots,N-1\}.
   \end{align*}
   \clearpage
   Then, the following holds:
   \begin{enumerate}
       \item For all $n=0,\dots,N$ the inequality $v_n\geq V_n$ holds. \label{enum:verf_thm1}
       \item If a decision rule $f_n^*$ exists such that 
        \[
        f_n^*(x)\in\mathrm{argmax}_{a\in A}\left\{\beta^nr(x,a)+\int_E v_{n+1}(x')p(\mathrm{d}x'|x,a)\right\},\quad \text{for all }n\in\{0,\dots,N-1\},
        \]
        then the policy $\pi^*=(f_0^*,\dots,f_{N-1}^*)$ is optimal.
   \end{enumerate}
\end{thm}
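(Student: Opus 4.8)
The plan is to prove both claims by backward induction on $n$, running from $N$ down to $0$, using the reward iteration (Theorem~\ref{thm:reward_iteration}) as the main tool.

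For part~\ref{enum:verf_thm1}, the base case $n=N$ is immediate: for every policy $\pi$ the reward iteration gives $V_{N,\pi}(x)=\beta^N g(x)=v_N(x)$, hence $V_N(x)=\sup_\pi V_{N,\pi}(x)=v_N(x)$, and in particular $v_N\ge V_N$. For the inductive step, assume $v_{n+1}\ge V_{n+1}$, fix an arbitrary policy $\pi=(f_0,\dots,f_{N-1})\in\Xi$ and a state $x\in E$, and apply the reward iteration to $\pi$:
\[
V_{n,\pi}(x)=\beta^n r(x,f_n(x))+\int_E V_{n+1,\pi}(x')\,p(\mathrm{d}x'\mid x,f_n(x)).
\]
Since $V_{n+1,\pi}\le V_{n+1}\le v_{n+1}$ pointwise, monotonicity of the integral against the kernel $p(\cdot\mid x,f_n(x))$ bounds the right-hand side by $\beta^n r(x,f_n(x))+\int_E v_{n+1}(x')\,p(\mathrm{d}x'\mid x,f_n(x))$, which is in turn at most $v_n(x)$, because $f_n(x)$ is one admissible action in the supremum defining $v_n(x)$. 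Taking the supremum over $\pi$ yields $V_n(x)\le v_n(x)$, which closes the induction.

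For the second part, suppose the maximisers $f_n^*$ exist and put $\pi^*=(f_0^*,\dots,f_{N-1}^*)$. I would show, again by backward induction on $n$, the stronger statement $V_{n,\pi^*}=v_n$. The base case $n=N$ is the reward iteration, as above. For the step, assume $V_{n+1,\pi^*}=v_{n+1}$; then the reward iteration applied to $\pi^*$ gives
\[
V_{n,\pi^*}(x)=\beta^n r(x,f_n^*(x))+\int_E v_{n+1}(x')\,p(\mathrm{d}x'\mid x,f_n^*(x)),
\]
and the defining property $f_n^*(x)\in\mathrm{argmax}_{a\in A}\{\beta^n r(x,a)+\int_E v_{n+1}(x')\,p(\mathrm{d}x'\mid x,a)\}$ identifies the right-hand side with $v_n(x)$. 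In particular $V_{0,\pi^*}(x)=v_0(x)\ge V_0(x)$ by part~\ref{enum:verf_thm1}, while $V_{0,\pi^*}\le V_0$ holds by definition of $V_0$ as a supremum over policies; hence $V_{0,\pi^*}=V_0$ and $\pi^*$ is optimal.

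The inductions themselves are routine; the step that needs care is the monotonicity bound $\int_E V_{n+1,\pi}\,p(\mathrm{d}x'\mid x,a)\le\int_E v_{n+1}\,p(\mathrm{d}x'\mid x,a)$, for which one must know that the positive parts of both integrands are $p(\cdot\mid x,a)$-integrable, so that no $\infty-\infty$ arises and monotonicity of the Lebesgue integral applies; this is exactly what Assumption~\ref{eq:eq1}, together with the well-posedness of the Bellman recursion defining the $v_n$, provides. I expect this measure-theoretic bookkeeping, rather than the algebra of the recursion, to be the only genuine obstacle.
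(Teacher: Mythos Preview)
Your proposal is correct and follows essentially the same route as the paper: both parts are proved by backward induction on $n$, using the reward iteration (Theorem~\ref{thm:reward_iteration}) together with the pointwise bound $V_{n+1,\pi}\le V_{n+1}\le v_{n+1}$ for part~(i), and the identity $V_{n,\pi^*}=v_n$ obtained from the maximiser property of $f_n^*$ for part~(ii). The only cosmetic difference is that the paper writes the chain in part~(i) starting from $v_n$ and descending to $V_{n,\pi}$, whereas you start from $V_{n,\pi}$ and ascend; and in part~(ii) the paper records the full equality $v_n=V_n=V_{n,\pi^*}$ at every $n$, while you keep only $V_{n,\pi^*}=v_n$ and deduce optimality at $n=0$ at the end.
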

\vspace{-.46cm}
\begin{proof}
    We prove both statements using a backward induction argument.
    \begin{enumerate}
        \item For $N\in\N$, we have by definition $v_N=\beta^Ng(x)=V_N$. Let $n\in\{0,\dots,N-1\}$ be arbitrary but fixed and assume that $v_{n+1}\geq V_{n+1}$ holds. Then, for all policies $\pi=(f_0,\dots,f_{N-1})\in\Xi$ we get
        \begin{align*}
            v_n&=\sup_{a\in A}\left\{\beta^nr(x,a)+\int_E v_{n+1}(x')p(\mathrm{d}x'|x,a)\right\}\\
            &\geq \sup_{a\in A}\left\{\beta^nr(x,a)+\int_E V_{n+1}(x')p(\mathrm{d}x'|x,a)\right\} \quad\text{(induction assumption)}\\
            &\geq \beta^nr(x,f_n(x))+\int_E V_{n+1,\pi}(x')p(\mathrm{d}x'|x,f_n(x))=V_{n,\pi}. 
            \end{align*}
         Since $v_n\geq V_{n,\pi}$ holds for all policies $\pi$, we get $v_{n+1}\geq \sup_\pi V_{n,\pi}=V_n$.
         \item In this case, we assume that decision rules $f_n^*$ exist with
         \[
         f_n^*(x)\in\mathrm{argmax}_{a\in A}\left\{\beta^nr(x,a)+\int_E v_{n+1}(x')p(\mathrm{d}x'|x,a)\right\},\quad \text{for }n\in\{0,\dots,N-1\}.
         \]
         Using another backward induction, we prove that $v_n=V_n=V_{n,\pi^*}$. For the time step $N$ these equalities hold by definition. Let $n\in\{0,\dots,N-1\}$ be arbitrary but fixed and assume that $v_{n+1}=V_{n+1}=V_{n+1,\pi^*}$ holds. Let $x\in E$ be arbitrary:
        \begin{align*}
            V_n(x)&\leq v_n(x) &\quad \text{by \ref{enum:verf_thm1}}\\
            &=\beta^nr(x,f_n^*(x))+\int_E v_{n+1}(x')p(\mathrm{d}x'|x,f_n^*(x))&\text{(existence assumption of $f_n^*$)}\\
            &=\beta^nr(x,f_n^*(x))+\int_E V_{n+1,\pi^*}(x')p(\mathrm{d}x'|x,f_n^*(x))&\text{(induction assumption)}\\
            &=V_{n,\pi^*}(x)&\text{(Theorem \ref{thm:reward_iteration})}\\
            &\leq V_n(x).&&
        \end{align*}
         With that we have shown that for all $n\in\{0,\dots,N\}$ $V_n=V_{n,\pi^*}$
    \end{enumerate}
\end{proof}
\begin{remark}
    Particularly, the theorem states that if a solution of the \textit{\ref{eq:bellman}} exists and the functions $f_n^*$ exist and are measurable, the resulting policy $\pi^*$ is optimal. We impose conditions  such that solutions of the  \textit{\ref{eq:bellman}} exist.
\end{remark}
\begin{defn}[Structure Assumption ($\mathrm{SA}_N$)]\label{def:structure_assumption}
    Let \\ $\mathbb{M}(E)\coloneqq \{v\colon E\to[-\infty,\infty)\mid v\text{ measurable}\}$. We assume that sets $\mathbb{M}_n\subset\mathbb{M}(E)$, $n\in\{0\dots,N\}$ exist, such that for all $n\in\{0,\dots,N-1\}$:
    \begin{enumerate}
        \item $g_N\in\mathbb{M}_N$ \label{enum:SAN:1}
        \item If $v\in\mathbb{M}_{n+1}$ we assume that the following function is well-defined
        \[
        E \ni x\mapsto\sup_a\left\{\beta^nr(x,a)+\int_E v(x')p(\mathrm{d}x'|x,a)\right\}\quad\text{and in $\mathbb{M}_n$}.
        \] \label{enum:SAN:2}
        \item For all $v\in\mathbb{M}_{n+1}$ there exists a function $f_n\colon E\to A$ with
         \[
         f_n(x)\in\mathrm{argmax}_{a\in A}\left\{\beta^nr(x,a)+\int_E v(x')p(\mathrm{d}x'|x,a)\right\},
         \]
         further we assume that $f_n$ is measurable. \label{enum:SAN:3}
    \end{enumerate}
\end{defn}
\begin{thm}[Structure Theorem]
    Assume that the \hyperref[def:structure_assumption]{Structure Assumption $(\mathrm{SA}_N)$} is fulfilled. Then, the following statements hold.
    \begin{enumerate}
        \item The sequence of maximal expected reward functions $(V_n)_{n\in\{0,\dots,N\}}$ fulfill the \ref{eq:bellman}.   \label{enum:struc_thm:1}
        \item There exists an optimal policy $\pi^*=(f_0^*,\dots,f_{N-1}^*)$, where for each \\ $n\in\{0,\dots,N-1\}$ and $x\in E$ the decision policy $f_n^*$ has form
        \[
        f_n^*(x)\in\mathrm{argmax}_{a\in A}\left\{\beta^nr(x,a)+\int_E V_{n+1}(x')p(\mathrm{d}x'|x,a)\right\}.
        \] \label{enum:struc_thm:2}
    \end{enumerate}
\end{thm}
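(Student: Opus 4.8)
The plan is to build, by backward recursion in time, an explicit candidate for the family of value functions and then hand it to the \hyperref[thm:verification]{Verification Theorem}, whose hypotheses the \hyperref[def:structure_assumption]{Structure Assumption $(\mathrm{SA}_N)$} is precisely tailored to supply. Concretely, I would set $v_N(x):=\beta^N g(x)$ and, for $n=N-1,\dots,0$,
\[
v_n(x):=\sup_{a\in A}\left\{\beta^n r(x,a)+\int_E v_{n+1}(x')\,p(\mathrm{d}x'|x,a)\right\}.
\]
The first task is to check that this recursion is legitimate, i.e.\ that each $v_n$ is a well-defined measurable function. I would do this by downward induction on $n$, proving the stronger claim $v_n\in\mathbb{M}_n$. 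The base case is part \ref{enum:SAN:1} of $(\mathrm{SA}_N)$: the terminal value $v_N=\beta^N g$ lies in $\mathbb{M}_N$. For the inductive step, if $v_{n+1}\in\mathbb{M}_{n+1}$, then part \ref{enum:SAN:2} of $(\mathrm{SA}_N)$ says exactly that the map $x\mapsto\sup_{a}\{\beta^n r(x,a)+\int_E v_{n+1}\,p(\mathrm{d}x'|x,a)\}$ is well-defined and again belongs to $\mathbb{M}_n$; that map is $v_n$. Hence the whole family $(v_n)_{n=0,\dots,N}$ consists of measurable functions and, by its very construction, solves the \ref{eq:bellman}.

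Next I would extract the maximizing decision rules. Running the induction once more, for each $n\in\{0,\dots,N-1\}$ the function $v_{n+1}$ is available in $\mathbb{M}_{n+1}$, so part \ref{enum:SAN:3} of $(\mathrm{SA}_N)$ yields a measurable map $f_n^*\colon E\to A$ with
\[
f_n^*(x)\in\mathrm{argmax}_{a\in A}\left\{\beta^n r(x,a)+\int_E v_{n+1}(x')\,p(\mathrm{d}x'|x,a)\right\}.
\]
At this point all hypotheses of the \hyperref[thm:verification]{Verification Theorem} hold for the family $(v_n)$ and the policy $\pi^*:=(f_0^*,\dots,f_{N-1}^*)$. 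Its first part gives $v_n\ge V_n$ for every $n$; its second part (and, inspecting its proof, the full equality chain established there) gives $v_n=V_n=V_{n,\pi^*}$, so $\pi^*$ is optimal. Substituting the identity $V_{n+1}=v_{n+1}$ back into the recursion that defines $v_n$ turns it into the \ref{eq:bellman} for $(V_n)_{n\in\{0,\dots,N\}}$, which is statement \ref{enum:struc_thm:1}; substituting the same identity into the argmax that defines $f_n^*$ gives statement \ref{enum:struc_thm:2}.

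The only genuinely load-bearing point — where $(\mathrm{SA}_N)$ earns its keep — is the joint induction that keeps $v_{n+1}$ inside $\mathbb{M}_{n+1}$ at every step. Without membership in $\mathbb{M}_{n+1}$ we could not conclude that the integral $\int_E v_{n+1}\,p(\mathrm{d}x'|x,a)$ and its supremum over $a$ produce a measurable function of $x$ (needed so that the $v_n$ form an admissible family for the Verification Theorem), nor could we invoke the measurable-selection clause in part \ref{enum:SAN:3}; the rest is bookkeeping, together with the harmless remark that scaling a function in $\mathbb{M}(E)$ by the positive constant $\beta^n$ keeps it measurable with values in $[-\infty,\infty)$.
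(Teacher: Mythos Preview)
Your argument is correct, but it is organized differently from the paper's. The paper never introduces an auxiliary family $(v_n)$; it works with the true value functions $V_n$ from the start and proves \emph{directly}, by backward induction, that $V_n\in\mathbb{M}_n$ and that $V_n$ equals the Bellman supremum. For the latter it shows both inequalities by hand: $V_n\ge\sup_a\{\cdots\}$ by exhibiting a policy $\pi_n^*=(h_0,\dots,h_{n-1},f_n^*,\dots,f_{N-1}^*)$ whose value attains the supremum (via the Reward Iteration Theorem~\ref{thm:reward_iteration}), and $V_n\le\sup_a\{\cdots\}$ by bounding $V_{n,\pi}$ for an arbitrary $\pi$. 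Your route --- build $(v_n)$ recursively, then invoke the Verification Theorem~\ref{thm:verification} to identify $v_n=V_n$ --- is more modular and reuses the earlier result instead of reproving it; the price is that the \emph{statement} of Theorem~\ref{thm:verification} only asserts optimality of $\pi^*$ (i.e.\ $V_{0,\pi^*}=V_0$), so you correctly have to dip into its proof to extract the full chain $v_n=V_n=V_{n,\pi^*}$ for all $n$, which is what you need for statement~\ref{enum:struc_thm:1}. Both approaches hinge on the same induction $V_n\in\mathbb{M}_n$ (resp.\ $v_n\in\mathbb{M}_n$), and your closing remark about the role of $(\mathrm{SA}_N)$ matches the paper's use of it exactly.
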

\begin{proof}
    We prove both statements together using a backward induction. For each $n\in\{0,\dots,N-1\}$, we show that decision rules $f_n^*,\dots,f_{N-1}^*$ exist, such that for the policy $\pi^*_n=(h_0, \dots,h_{n-1},f_n^*,\dots,f_{N-1}^*)\in\Xi$, where $h_0,\dots,h_{n-1}$ are any arbitrary decision rules, for all $x\in E$ holds
    \[
    V_{n}(x)=V_{n,\pi^*_n}(x)=\sup_{a\in A}\left\{\beta^nr(x,a)+\int_E V_{n+1}(x')p(\mathrm{d}x'|x,a)\right\}.
    \]
    Note. that for the definition of $V_{n,\pi^*_n}$ only the decision rules $f_n^*,\dots,f_{N-1}^*$ are relevant, the other $h_0,\dots,h_{n-1}$ are not and we are using them to fit the definition of a policy.

    Further, we add to the backward induction, that for all  $n\in\{0,\dots,N\}$ we have $V_n\in\mathbb{M}_n$.
    
    We start with the backward induction.
    
    For the time step $n=N$ we got $V_N=g_N$. So by assumption \hyperref[enum:SAN:1]{$(\mathrm{SA}_N)$ (i)}, we know that $V_N\in \mathbb{M}_N$. 

     Let $n\in\{0,\dots,N-1\}$ be arbitrary but fixed. We assume the following
     \begin{enumerate}
         \item $V_k\in\mathbb{M}_k$ for all $k\in\{n+1\dots,N\}$ \label{enum:struc_theorem:IA:1}
         \item decision rules $f_{n+1}^*,\dots,f_{N-1}^*$ exist, such that for all $k\in\{n+1\dots,N-1\}$
         \[
         f_k^*(x)\in\mathrm{argmax}_{a\in A}\left\{\beta^kr(x,a)+\int_E V_{k+1}(x')p(\mathrm{d}x'|x,a)\right\}
         \]
         \item $V_{n+1,\pi^*_{n+1}}=V_{n+1}$, where $\pi^*_{n+1}=(h_0, \dots,h_{n},f_{n+1}^*,\dots,f_{N-1}^*)\in\Xi$ and $h_0,\dots,h_{n}$ are any arbitrary decision rules \label{enum:struc_theorem:IA:2}
     \end{enumerate}
     
    By the induction assumption \ref{enum:struc_theorem:IA:1} we get $V_{n+1}\in\mathbb{M}_{n+1}$. Using
    assumption \hyperref[enum:SAN:3]{$(\mathrm{SA}_N)$ (iii)}, we know that for the function $V_{n+1}\in\mathbb{M}_{n+1}$ a measurable function $f_n^*\colon E\to A$ exists with
    \[
    f_n^*(x)\in\mathrm{argmax}_{a\in A}\left\{\beta^nr(x,a)+\int_E V_{n+1}(x')p(\mathrm{d}x'|x,a)\right\},
    \]
    for all $x\in E$.
    
    Now, consider the policy $\pi^*_n=(h_0, \dots,h_{n-1},f_n^*,\dots,f_{N-1}^*)\in\Xi$, where $h_0,\dots,h_{n-1}$ are any arbitrary decision rules and $f_{n+1}^*,\dots,f_{N-1}^*$ are from the induction assumption. With this policy we get for all $x\in E$
    \begin{align*}
        V_{n,\pi^*_n}(x)&=\beta^nr(x,f_n^*(x))+\int_E V_{n+1,\pi^*_n}(x')p(\mathrm{d}x'|x,f_n^*(x))&\text{(Theorem \ref{thm:reward_iteration})}\\
        &=\beta^nr(x,f_n^*(x))+\int_E V_{n+1,\pi^*_{n+1}}(x')p(\mathrm{d}x'|x,f_n^*(x))&\text{choose $h_n=f_n^*$ in $\pi^*_{n+1}$}\\
        &=  \beta^nr(x,f_n^*(x))+\int_E V_{n+1}(x')p(\mathrm{d}x'|x,f_n^*(x))&\text{(induction assumption \ref{enum:struc_theorem:IA:2})}\\  
        &=  \sup_a\left\{\beta^nr(x,a)+\int_E V_{n+1}(x')p(\mathrm{d}x'|x,a)\right\}.&\text{(definition of $f_n^*$)}
    \end{align*}
    With that we have
    \[
    V_{n}(x)\geq \sup_a\left\{\beta^nr(x,a)+\int_E V_{n+1}(x')p(\mathrm{d}x'|x,a)\right\}.
    \]
    On the other hand, we have for any arbitrary policy $\pi=(f_0,\dots,f_{N-1})\in\Xi$ and all $x\in E$
    \begin{align*}
        V_{n,\pi}(x)&=\beta^nr(x,f_n(x))+\int_E V_{n+1,\pi}(x')p(\mathrm{d}x'|x,f_n(x))&\text{(Theorem \ref{thm:reward_iteration})}\\
        &\leq\beta^nr(x,f_n(x))+\int_E V_{n+1}(x')p(\mathrm{d}x'|x,f_n(x))&\text{$V_{n+1}\geq V_{n+1,\pi}$}\\  
        &\leq\sup_a\left\{\beta^nr(x,a)+\int_E V_{n+1}(x')p(\mathrm{d}x'|x,a)\right\}.&
    \end{align*}
    Combining this inequality with the one from above and the fact that the policy $\pi_n^*$ attains the bound, we get 
    \[
    V_{n,\pi^*_n}(x)= \sup_a\left\{\beta^nr(x,a)+\int_E V_{n+1}(x')p(\mathrm{d}x'|x,a)\right\} =V_n(x),
    \]
    for all $x\in E$. By \hyperref[enum:SAN:2]{$(\mathrm{SA}_N)$ (ii)}, we now have $V_n\in\mathbb{M}_n$, which ends our induction.

    We have proven that under the \hyperref[def:structure_assumption]{Structure Assumption $(\mathrm{SA}_N)$} the functions $(V_n)_{n\in\{0,\dots,N\}}$ fulfil the \ref{eq:bellman} and we have shown how to calculate each $V_n$.
\end{proof}
\begin{remark}
    Under the \hyperref[def:structure_assumption]{Structure Assumption $(\mathrm{SA}_N)$}, the Structure Theorem gives us the following algorithm, to calculate $(V_n)_{n\in\{0,\dots,N\}}$ and optimal policies.

    \SetKwComment{Comment}{/* }{ */}
    \RestyleAlgo{ruled}
    \begin{algorithm}
    \caption{Backward Induction}\label{alg:backward_induction}
    \KwData{Markov Decision Model $(E,A,p,p_0,r,\beta,g)$, $N\in\N$}
    \KwResult{$(V_n)_{n\in\{0,\dots,N\}}$ and optimal policy $\pi^*=(f^*_0,\dots,f^*_{N-1})$}
    $V_N=g$\;
    \For{$n=N-1,\dots,0$}{
    For all $x\in E$ calculate $V_n(x)=\sup_{a\in A}\left\{\beta^nr(x,a)+\int_E V_{n+1}(x')p(\mathrm{d}x'|x,a)\right\}$\;
    Calculate measurable $f_n^*$ that takes the supremum values\;
    }
    \end{algorithm}
    This algorithm works great for the calculation of the maximal expected rewards $V_n$, $n=0,\dots,N$ in case of $A$ and $E$ being finite and if the transition kernel $p$ is known, as the supremum calculation for all $x\in E$ over the action space $A$ is in feasible time possible. In the next section, we see how to use the Bellman equation in case the transition kernel is unknown and in section \ref{sec:rl_algorihms} how to deal with uncountable sets $A$ and $E$.
\end{remark}
In the last part of this section we define the $Q$-functions.
\begin{defn}[$Q$-function]\label{def:Q_funciton}
    Let $\pi=(f_0,\dots,f_{N-1})\in\Xi$ be a policy. For all $n=0,\dots,N-1$ and $a\in A$ we consider the modified policy \\ $\pi_{n,a}=(f_0,\dots,f_{n-1},x\mapsto a,f_{n+1},\dots,f_{N-1})$. We then define for all $n=0,\dots,N-1$, $x\in E$, $a\in A$
    \[
    Q^\pi(n,x,a)=\E_{n,x}^{\pi_{n,a}}\left[\,\sum_{k=n}^{N-1}\beta^kr(X_k,f_k(X_k))+\beta^Ng(X_N)\right],
    \]
    and for time step $n=N$
    \[
    Q^\pi(N,x,a)=\beta^Ng(x),
    \]
    which we call the $Q$-function of the policy $\pi$. The \textit{optimal $Q$-function} or $\textit{action value function}$ is given by
    \[
    Q(n,x,a)=\sup_{\pi\in\Xi}Q^\pi(n,x,a).
    \]
\end{defn}
\begin{remark}
    Note that the modified policy $\pi_{n,a}$ defined in \ref{def:Q_funciton} is a valid policy since the function $f\colon E\to A$ with $x\mapsto a$ is measurable for all $a\in A$ and thus a decision rule. Further, the optimal $Q$-function is well defined by assumption \ref{eq:eq1}.

    Similarly as for the value function $V_n$,  the Bellman equation holds for the $Q$-function:
    \begin{align*}
       Q(n,x,a) &= \beta^nr(x,a)+\int_E \sup_{a'\in A}Q(n+1,x',a')p(\mathrm{d}x'|x,a),
    \end{align*}
    for all $n\in\{0,\dots,N-1\}$.
\end{remark}

\clearpage

\section{Temporal Difference Learning}
This section is based on the Book \textit{Reinforcement Learning: An Introduction} by R. S. Sutton and A. G. Barto \cite{rl_book} and on the lecture notes of the course \textit{Machine Learning with Financial Applications
} by Prof. Dr. Knochenhauer and Dr. Bayer \cite{mlfin}.
We show how to compute the expected total reward $V_{n,\pi}$ for a policy $\pi\in\Xi$ when the transition kernel $p$ is not known.
\subsection{Monte Carlo Method}
One possibility to approximate the function $V_{n,\pi}$ is to average the rewards obtained following the policy $\pi$. Let $n\in\{0,\dots,N\}$ and $x\in E$ be arbitrary and assume that $X^1,\dots, X^M$ are independent realisations of the Markov Decision Process following the policy $\pi=(f_0,\dots,f_{N-1})$, with $X_n^m=x$ for all $m\in\{1,\dots,M\}$. We define
\[
W_M^\pi(n,x)\coloneqq \frac1M\sum_{m=1}^M\left[
\sum_{k=n}^{N-1}\beta^kr(X_k^m,f_k(X_k^m))+\beta^Ng(X_N^m)
\right],
\]
as the average returned reward starting in state $x$ at time $n$ and following policy $\pi$. In the case where the rewards $r$ and $g$ are bounded, one can show with the law of large numbers that $W_M^\pi(n,x)\to V_{n,\pi}(x)$ for $M\to\infty$, in probability.

\begin{remark}
    We make the following observation for all $n\in\{0,\dots,N\}$ and $x\in E$
    \begin{align*}
W_{M+1}^\pi(n,x)&=\frac{1}{M+1}\frac{M}{M} \sum_{m=1}^{M+1} \left[
\sum_{k=n}^{N-1} \beta^k r(X_k^m,f_k(X_k^m)) +\beta^Ng(X_N^m)
\right]\\
&=\frac{1}{M+1}\left[
\sum_{k=n}^{N-1} \beta^k r(X_k^{M+1},f_k(X_k^{M+1})) +\beta^Ng(X_N^{M+1})
\right]\\
&+\frac{1}{M+1}\frac{M}{M}\sum_{m=1}^{M} \left[
\sum_{k=n}^{N-1} \beta^k r(X_k^m,f_k(X_k^m)) +\beta^Ng(X_N^m)
\right]\\
&=\frac{1}{M+1}\left[
\sum_{k=n}^{N-1} \beta^k r(X_k^{M+1},f_k(X_k^{M+1})) +\beta^Ng(X_N^{M+1})
\right]+ \frac{M}{M+1}W_{M}^\pi(n,x)\\
&=W_{M}^\pi(n,x)+\frac{1}{M+1}\left[\left(
\sum_{k=n}^{N-1} \beta^k r(X_k^{M+1},f_k(X_k^{M+1})) +\beta^Ng(X_N^{M+1})\right)-W_{M}^\pi(n,x)
\right].
    \end{align*}
This gives us an iterative way to update the approximation $W_M^\pi$, whenever more realisations of the Markov Decision Process have been made.
\end{remark}
\clearpage
\subsection{Temporal Difference Learning}
Temporal Difference Learning describes a similar approach to approximating the value function $V_{n,\pi}$ as the Monte Carlo Method. It is based on the ideas shown in the proof of the reward iteration theorem \ref{thm:reward_iteration}. In the proof of theorem \ref{thm:reward_iteration}, we found that the following equality holds for all $\pi\in\Xi$, $n\in\{0,\dots,N-1\}$ and $x\in E$
\begin{align*}
V_{n,\pi}(x)&\coloneqq\E_{n,x}^\pi\left[\,\sum_{k=n}^{N-1}\beta^kr(X_k,f_k(X_k))+\beta^Ng(X_N)\right]\\
&=\E^\pi_{n,x}\bigg[\beta^nr(x,f_n(X_n))+V_{n+1,\pi}(X_{n+1})\bigg].
\end{align*}
As for the Monte Carlo method let $n\in\{0,\dots,N\}$ and $x\in E$ be arbitrary and assume that $X^1,\dots, X^M$ are independent realizations of the Markov Decision Process following the policy $\pi=(f_0,\dots,f_{N-1})$, with $X_n^m=x$ for all $m\in\{1,\dots,M\}$.

With that we can define an approximation to $V_{n,\pi}(x)$ as
\[
W_M^\pi(n,x)\coloneqq\frac1M\sum_{m=1}^M\left[\beta^k r(X_n^m,f_k(X_n^m)) + W_M^\pi(n+1,X_{n+1}^m) 
\right],
\]
with $W_M^\pi(N,x')\coloneqq \beta^Ng(x')$ for all $x'\in E$. As the definition depends on $W_M^\pi(n+1,\cdot)$ the iterative update rule is given by
\[
W_{M+1}^\pi(n,x)=W_{M}^\pi(n,x)+\frac{1}{M+1}\bigg(\beta^nr(x,f(x))+W_M^\pi(n+1,X_{n+1}^{m+1})-W_M^\pi(n,x)\bigg).  
\]

\subsection{Policy Improvement}
In this subsection we give the idea to a fundamental concept in reinforcement learning, called policy improvement. Although, it is not directly related to Temporal Difference Learning, we decided to explain the policy improvement here, as the shown algorithms in the next section\ref{sec:rl_algorihms} rely on this concept.

We start with the following theorem.
\begin{thm}[Policy Improvement Theorem]\label{thm:policy_improvement}
Let $\pi=(f_0,\dots,f_{N-1}),\pi'=(f_0',\dots,f_{N-1}')$ be two policies, such that for all $x\in E$ and $n=0,\dots,N$
\[
Q^\pi(n,x,f_n'(x))\geq V_{n,\pi}(x).
\]
Then, for all $x\in E$ and $n=0,\dots,N$ we get
\[
V_{n,\pi'}(x)\geq V_{n,\pi}(x).
\]
\end{thm}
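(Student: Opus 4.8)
The plan is to run a backward induction on $n$, from $n = N$ down to $n = 0$, showing at each step that $V_{n,\pi'}(x) \geq V_{n,\pi}(x)$ for all $x \in E$. The base case $n = N$ is immediate since $V_{N,\pi'}(x) = \beta^N g(x) = V_{N,\pi}(x)$ by part \ref{enum:enum1:2} of the reward iteration theorem. For the inductive step, assume $V_{n+1,\pi'} \geq V_{n+1,\pi}$ pointwise on $E$.

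**The key chain of inequalities** I would like to establish at a fixed $x \in E$ is the following. Start from the reward iteration applied to $\pi'$:
\[
V_{n,\pi'}(x) = \beta^n r(x, f_n'(x)) + \int_E V_{n+1,\pi'}(x')\, p(\mathrm{d}x' \mid x, f_n'(x)).
\]
By the induction hypothesis $V_{n+1,\pi'} \geq V_{n+1,\pi}$, the integral term only decreases if we replace $V_{n+1,\pi'}$ by $V_{n+1,\pi}$, so
\[
V_{n,\pi'}(x) \geq \beta^n r(x, f_n'(x)) + \int_E V_{n+1,\pi}(x')\, p(\mathrm{d}x' \mid x, f_n'(x)).
\]
**The crux** is then to recognise that the right-hand side equals $Q^\pi(n, x, f_n'(x))$. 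This is essentially the reward-iteration identity for the $Q$-function of $\pi$: evaluating $Q^\pi(n, x, a)$ means taking action $a$ at time $n$ (collecting $\beta^n r(x,a)$ and transitioning via $p(\cdot \mid x, a)$) and thereafter following $\pi$, which from time $n+1$ onward yields exactly $V_{n+1,\pi}$ in expectation. I would either cite this as the analogue of Theorem \ref{thm:reward_iteration} for $Q^\pi$, or re-derive it in one line using the tower property exactly as in that proof. Combining with the hypothesis $Q^\pi(n, x, f_n'(x)) \geq V_{n,\pi}(x)$ gives $V_{n,\pi'}(x) \geq V_{n,\pi}(x)$, closing the induction.

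**The main obstacle** is purely bookkeeping: making precise the identity $Q^\pi(n,x,f_n'(x)) = \beta^n r(x, f_n'(x)) + \int_E V_{n+1,\pi}(x')\,p(\mathrm{d}x'\mid x, f_n'(x))$, since $Q^\pi$ was defined via the modified policy $\pi_{n,a}$ rather than through a one-step decomposition. One must check that under $\pi_{n,a}$ with $a = f_n'(x)$, conditioning on $X_{n+1}$ and using that $\pi_{n,a}$ agrees with $\pi$ from time $n+1$ on, the continuation expectation is indeed $V_{n+1,\pi}(X_{n+1})$; this is the same tower-property computation as in Theorem \ref{thm:reward_iteration}, just applied to $\pi_{n,a}$. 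No integrability issues arise beyond those already guaranteed by Assumption \eqref{eq:eq1}. Once this identity is in hand, the rest is the short monotone chain above.
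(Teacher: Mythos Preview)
Your proposal is correct and is the standard backward-induction argument for the finite-horizon version of this theorem; the identification $Q^\pi(n,x,a)=\beta^n r(x,a)+\int_E V_{n+1,\pi}(x')\,p(\mathrm{d}x'\mid x,a)$ follows exactly as you say, by applying the reward-iteration computation to $\pi_{n,a}$ and noting that $\pi_{n,a}$ coincides with $\pi$ from time $n+1$ onward. The paper does not give its own proof here but refers to Sutton and Barto, where the argument is presented as a forward telescoping chain $V_{n,\pi}(x)\leq Q^\pi(n,x,f_n'(x))=\beta^n r(x,f_n'(x))+\E[V_{n+1,\pi}(X_{n+1})]\leq\beta^n r(x,f_n'(x))+\E[Q^\pi(n+1,X_{n+1},f_{n+1}'(X_{n+1}))]\leq\cdots$; in the finite-horizon setting your backward induction is the natural and equivalent formulation.
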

The proof can be done in a straightforward manner, see e.g. in the book \cite{rl_book} section 4.2. 

With this theorem, we are able to create a  better policy $\pi'=(f_0',\dots,f_{N-1}')$, where for all  $x\in E$ and $n=0,\dots,N$ we pick
\[
f_n'(x)\in\argmax_{a\in A}Q^\pi(n,x,a).
\]
Note that these functions $f_n'$ might not be measurable anymore, thus they are invalid decision rules. As we give in this subsection only an idea on how the policy iteration works, we do not go into more detail.

With the observation that $V_{n,\pi}(x)=Q^\pi(n,x,f_n(x))$, we immediately get that
\[
V_{n,\pi}(x)=Q^\pi(n,x,f_n(x))\leq Q^\pi(n,x,f_n'(x)),
\]
thus the policy $\pi'$ satisfies the conditions needed for the policy improvement theorem \ref{thm:policy_improvement}. Therefore, we have found a policy $\pi'$, such that 
\[
V_{n,\pi'}(x)\geq V_{n,\pi}(x),
\]
for all $x\in E$ and $n=0,\dots,N$, by the policy improvement theorem \ref{thm:policy_improvement}.

We can now apply the policy improvement to the new policy $\pi'$ and get an even better policy.  In case the set of all policies $\Xi$ is finite (if e.g. $E,A$ are finite), this procedure will eventually give us to two policies $\pi$ and $\pi'$, such that $\pi=\pi'$. In this case, we know that
\[
V_{n,\pi}(x)=\max_{a\in A}\left\{\beta^nr(x,a)+\int_E V_{n+1,\pi}(x')p(\mathrm{d}x'|x,a)\right\},\quad \text{for }n\in\{0,\dots,N-1\},
\]
thus by the verification theorem \ref{thm:verification}, we get that $\pi$ is optimal.

This idea is implemented in the following algorithms.

\section{Algorithms}\label{sec:rl_algorihms}
In this section, we see how the theoretical advances are used to establish an algorithm for finding good policies. To do this, we first present the DDPG algorithm. Then, we introduce a slight modification of the DDPG algorithm, called the TD3 algorithm, which we used in the application. 
\subsection{DDPG}
This section is based on the paper \textit{'Continuous control with deep reinforcement learning'}, see \cite{ddpg_paper}. 

The initial idea for the DDPG-algorithm is the following. First, assume we have a set of functions
\[
\{Q_{\theta^Q}\colon E\times A\to\R\mid \theta^Q\in\R^d\},
\]
for some $d\in\N$, parameterised by some $\theta^Q\in\R^d$ for example by using neural networks. For each $\theta^Q$, we consider the greedy policy $\pi_{\theta^Q}(s)\in\mathrm{arg}\max_a Q^{\theta^Q}(s,a)$ and assume it exists. Note that this decision rule is time-independent. As the DDPG algorithm uses the same decision rule for all time steps, we use the term policy for the decision rule. In the case where time dependence is important to a problem, the current time step can be added to the state space, allowing time dependence. Our goal is to find a function approximator $Q_{\theta^Q}$ that minimises the loss
\begin{align}
L(\theta^Q)=\E_{s_t\sim \rho^{\theta^Q}}[(Q_{\theta^Q}(s_t,\pi_{\theta^Q}(s_t))-y_{\theta^Q,t})^2],\label{func:L}
\end{align}
where $\rho^{\theta^Q}$ is the state visitation distribution for the policy $\pi_{\theta^Q}$ and $y_{\theta^Q,t}$ is defined as 
\[
y_{\theta^Q,t}\coloneqq r(s_t,\pi_{\theta^Q}(s_t))+\beta Q_{\theta^Q}(s_{t+1} ,\pi_{\theta^Q}(s_{t+1})).
\]
We use this loss to train the $Q$-function to satisfy the reward iteration property \ref{enum:enum1:1}.

In this initial idea, some steps cannot be efficiently implemented or could lead to problems. We present the solutions described in the paper \cite{ddpg_paper}, which result in the formulation of the DDPG-algorithm.

In the following, we point out several problems in this initial idea, show solutions that the DDPG algorithm uses and finally write down the algorithm. 

We encounter the first problem in the computation of the greedy policy $\pi_{\theta^Q}$. Finding the maximising action $a_t$ for $Q^{\theta^Q}(s_t,a_t)$ for each time-step $t$ is computationally expensive for uncountable action spaces. Instead, we parameterise the policy function like the $Q$-function and update the policy parameters to maximise the reward. Let $\theta^\pi\in\R^m$, for some $m\in\N$, be the parameters for functions $\pi_{\theta^\pi}\colon E\to A$, for example by using neural networks.

Let $J\coloneqq \E_{x\sim p_0}[V_{0,\pi_{\theta^\pi}}(x)]$. According to the paper by Silver et al. (2014) \cite{dpg_paper}, the gradient of the policy performance can be written as
\begin{align*}
\nabla_{\theta^\pi} J &\approx \E_{s_t\sim \rho^{\theta^\pi}}\big[\nabla_{\theta^\pi}Q_{\theta^Q}(s,a)|_{s=s_t,a=\pi_{\theta^\pi}(s_t)}\big]\\
&=\E_{s_t\sim \rho^{\theta^\pi}}\big[
\nabla_aQ_{\theta^Q}(s,a)|_{s=s_t,a=\pi_{\theta^\pi}(s_t)} \nabla_{\theta^\pi}\pi_{\theta^\pi}(s)|_{s=s_t}\big].
\end{align*}
We update the policy using this gradient, such that the update maximises the expected return from the initial distribution. The parameters $\theta^Q$ for the $Q$-function are updated according to the loss \hyperref[func:L]{$L$}, see \ref{func:L}.

To compute the expectation with respect to the state visitation distribution $\rho^{\theta^Q}$, we store recent observed transitions $(s_t,a_t,r_t,s_{t+1})$ in a finite length queue called \textit{replay buffer}. At each time step we uniformly sample a minibatch from the replay buffer and update the parameters $\theta^Q$ and $\theta^\pi$. This helps to make the samples more independent and identically distributed than if they were generated by sequentially exploring the environment.

Another problem observed was that due to the definition of the loss $L$ in the equation \ref{func:L} the updates were prone to divergence, as the updated parameters $\theta^Q$ were also part of the target $y_{\theta^Q,t}$. The proposed solution is to use copies $\theta^{Q'},\theta^{\pi'}$ of the parameters $\theta^{Q},\theta^{\pi}$ for the target calculation in $y_{\theta^{Q},t}$ in the equation \ref{func:L}. The copies $\theta^{Q'},\theta^{\pi'}$ are then updated according to the following rule
\[
\theta'\leftarrow \tau\theta+(1-\tau)\theta',
\]
for a $\tau\ll1$, $\theta\in\{\theta^Q,\theta^\pi\}$ and the learned parameters $\theta^Q,\theta^\pi$. This may slow down the learning process but increases the stability of the problem.

The final challenge is to figure out how to explore the continuous action space. The paper \cite{ddpg_paper} suggests to change the policy used by adding noise from a noise process $\mathcal{R}$ to the policy $\pi_{\theta^\pi}$. So, we use the function
\[
\pi'(s_t)=\pi_{\theta^\pi}(s_t)+\mathcal{R}_t
\]
to do the exploration. Note that by adding the noise, the action may no longer be within the action space $\mathcal{A}$. Typically, clipping the action resolves this. 

The choice of the noise process $\mathcal{R}$ depends on the environment. 
Typically, the Ornstein-Uhlenbeck process  or the Gaussian process is used.

\vspace{.5cm}
Algorithm \ref{alg:ddpg} shows the DDPG algorithm. In the beginning, the different parameters and the replay buffer are initialised. After that, for a finite amount of episodes beginning at initial observed state $s_1$, actions are made according to the current policy plus noise. These transitions are then stored in the replay buffer, from which we sample a random minibatch and update the parameters $\theta^Q,\theta^\pi$. Note that in the update of the $Q$-function, the target values $y_i$ depend on the target parameters $\theta^{Q'},\theta^{\pi'}$. These target parameters are then updated in the last step according to the update rule $\theta'\leftarrow \tau\theta+(1-\tau)\theta'$, for $\theta \in\{\theta^Q,\theta^\pi\}$.   

\SetKwComment{Comment}{/* }{ */}
    \RestyleAlgo{ruled}
    \begin{algorithm}[H]
    {\small
    \caption{DDPG algorithm}\label{alg:ddpg}
    
    Randomly initialise Q-network $Q^{\theta^Q}(s,a)$ and policy $\pi^{\theta^\pi}(s)$ with parameters $\theta^Q,\theta^\pi$.\\
    Initialise target parameters with $\theta^{Q'}\leftarrow \theta^Q$, $\theta^{\pi'}\leftarrow \theta^\pi$.\\
    Initialise replay buffer $R$.\\
    \For{episode $=1,\dots,M$}{
    Initialise random process $\mathcal{R}$ for action exploration.\\
    Receive initial observation state $s_1$.\\
    \For{$t=1,\dots,N$}
    {
    Select action $a_t=\pi^{\theta^\pi}(s_t)+\mathcal{R}_t$\\
    Execute action $a_t$, observe reward $r_t$ and new state $s_{t+1}$.\\
    Store transition $(s_t,a_t,r_t,s_{t+1})$ in $R$\\
    Sample random minibatch of $K$ transitions $(s_i,a_i,r_i,s_{i+1})$ from $R$\\
    Set $y_i=r_i+\beta Q^{{\theta^{Q}}'}(s_{i+1},\pi^{{\theta^{\pi}}'}(s_{i+1}))$\\  
    Update $Q$-function parameters $\theta^Q$, minimising $L=\frac{1}{K}\sum_i(y_i-Q^{\theta^Q}(s_i,a_i))^2$\\
    Update policy parameters using the sampled policy gradient
    \[
    \nabla_{\theta^\pi}J\approx \frac{1}{K}\sum_i\nabla_a Q^{\theta^Q}(s,a)|_{s=s_i,a=\pi^{\theta^\pi}(s_i)}\nabla_{\theta^\pi}\pi^{\theta^\pi}(s)|_{s=s_i}
    \]
    
    Update target networks
    \[
    \theta'\leftarrow\tau\theta+(1-\tau)\theta', \text{ for $\theta\in\{\theta^Q,\theta^\pi\}$}
    \]
    }
    }
    }  
    \end{algorithm}

\vspace{.5cm}
We see that the DDPG algorithm depends on many parameters that need to be chosen. To begin with, there are structure parameters for the $Q$ and policy network, which are typically deep neural networks, further, parameters for the size of the replay buffer, for the random noise process $\mathcal{R}$, learning rates for the $Q$-function and policy updates and at last the parameter $\tau$. The performance of the DDPG algorithm strongly depends on those so called hyperparameters. Thus, in section \ref{section:hyper}, we investigate how to efficiently find good hyperparameters.


\subsection{TD3}\label{sec:TD3}
In this subsection, we introduce the Twin Delayed Deep Deterministic policy gradient algorithm, in short TD3 algorithm, which is an extension of the DDPG algorithm. The subsection is based on the paper \textit{'Addressing Function Approximation Error in Actor-Critic Methods'} by S. Fukimoto, H. van Hoof and D. Meger \cite{td3_paper}. In this paper, the use of target networks is linked to the overestimation of the $Q$-function, a common problem in value-based reinforcement learning methods, like in the DDPG algorithm. Here we show the improvements they suggest and the resulting algorithm.

In the paper \textit{'Issues in Using Function Approximation for Reinforcement Learning'} by S. Thrun and A. Schwartz \cite{overestimation_paper}, it is shown that if the value estimates are updated with a greedy target $y=r+\beta \max_{a'}Q(s',a')$ and the target is susceptible to some mean-zero error $\epsilon$ then the value with the error will, in general, be greater than the true maximum, as $\E_\epsilon[\max_{a'}(Q(s',a')+\epsilon)]\geq\max_{a'}Q(s',a')$. This overestimation is then propagated through the Bellman equation to the whole $Q$-function. Although, in the DDPG algorithm, the value updates aren't directly updated with a greedy target, the presence of the overestimation bias is still given, as shown in experiments in (Fukimoto, van Hoof and D. Meger, 2018) \cite{td3_paper}.

As this problem is induced by the function approximators, it is unavoidable. The following changes were made in the DDPG algorithm to try to minimise the impact.

The first improvement is to train 2 different $Q$-functions, instead of one. This will give us 2 separate value estimates. We then take the smaller value in the target computation $y_i$ and use that value to update both $Q$-functions. To be more precise, 2 random $Q$-value parameters $\theta_1,\theta_2$ will be initialised at the beginning. For a transition $(s_t,a_t,r_t,s_{t+1})$ from the replay buffer $\mathcal{R}$, we calculate the target value $y$ as $y=r_t+\beta\min_{j=1,2}Q^{{\theta_j}'}(s_{t+1},\pi^{{\theta^\pi}'}(s_t))$ and update the parameters $\theta_1,\theta_2$ to minimise $(y-Q^{\theta_j}(s_t,a_t))^2$ for $j=1,2$.

The next change is that the updates for the policy parameters $\theta^\pi$ and target networks will only be executed after a fixed number $d\in\N$ of updates on the $Q$-function parameters $\theta_1,\theta_2$. The reason for this change is to reduce the likelihood of repeatedly updating the policy and target parameters in  the case where the $Q$-function parameters do not change. This lowers the variance of the value estimates for the policy updates, and thus improves the quality of the policy updates.

The last change is in regard of the concern that the policy may overfit in the value estimate. The updates on the $Q$-function are highly affected by the function approximation errors if the learned target is using a deterministic policy, which increases the variance of the target. In order to reduce the variance, a regularisation term has been introduced. The target $y$ is calculated for a transition $(s_t,a_t,r_t,s_{t+1})$ as
 \[
 y=r_t+\beta\min_{j=1,2}Q^{{\theta_j}'}(s_{t+1},\pi^{{\theta^\pi}'}(s_t)+\epsilon),
 \]
where $\epsilon\sim\mathrm{clip}(\mathcal{N}(0,\sigma),-c,c)$, for some hyperparameters $\sigma,c$ and $\mathcal{N}$ is the normal distribution and $\mathrm{clip}$ is a clipping operator. The clipping has been added, such that the target remains close to the original action. 

In the paper introducing the TD3 algorithm \cite{td3_paper}, they set the noise process $\mathcal{R}$ as the normal distribution $\mathcal{N}(0,\sigma)$. We change that and allow any noise process, such that it can be fitted to the problem. Further, we make a small modification at the initialisation process of the replay buffer $R$. For a finite number of runs $L$, we add transitions to the replay buffer, using a uniform random policy, allowing transitions to be in the replay buffer from the start.

The TD3 algorithm is then given by

    \begin{algorithm}[H]
    {\small
    \caption{TD3 algorithm}\label{alg:td3}
    
    Randomly initialise Q-networks $Q^{\theta_1},Q^{\theta_2}$ and policy $\pi^{\theta^\pi}$ with parameters $\theta_1,\theta_2,\theta^\pi$.\\
    Initialise target parameters with $\theta_j'\leftarrow \theta_j$, for $j=1,2$, $\theta^{\pi'}\leftarrow \theta^\pi$.\\
    Initialise replay buffer $R$. Add $L$ runs of uniform random transitions to $R$.\\
    \For{episode $=1,\dots,M$}{
    Initialise random process $\mathcal{R}$ for action exploration.\\
    Receive initial observation state $s_1$.\\
    \For{$t=1,\dots,N$}
    {
    Select action $a_t=\pi^{\theta^\pi}(s_t)+\mathcal{R}_t$\\
    Execute action $a_t$, observe reward $r_t$ and new state $s_{t+1}$.\\
    Store transition $(s_t,a_t,r_t,s_{t+1})$ in $R$\\
    Sample random minibatch of $K$ transitions $(s_i,a_i,r_i,s_{i+1})$ from $R$\\
    Define $\tilde a_i\leftarrow \pi^{{\theta^\pi}'}+\epsilon$, where $\epsilon\sim\mathrm{clip}(\mathcal{N}(0,\sigma),-c,c)$\\
    Set $y_i=r_i+\beta\min_{j=1,2}Q^{{\theta_j}'}(s_{i+1},\tilde a_i)$\\  
    Update $Q$-function parameters $\theta_j$, minimising $L=\frac{1}{K}\sum_i(y_i-Q^{\theta_j}(s_i,a_i))^2$, for $j=1,2$\\
    \If{$t\;\mathrm{mod}\;d$}
    {
    Update policy parameters using the sampled policy gradient
    \[
    \nabla_{\theta^\pi}J\approx \frac{1}{K}\sum_i\nabla_a Q^{\theta_1}(s,a)|_{s=s_i,a=\pi^{\theta^\pi}(s_i)}\nabla_{\theta^\pi}\pi^{\theta^\pi}(s)|_{s=s_i}
    \]
    Update target networks
    \[
    \theta'\leftarrow\tau\theta+(1-\tau)\theta', \text{ for $\theta\in\{\theta_1,\theta_2,\theta^\pi\}$}
    \]
    }
    }
    }
    }  
    \end{algorithm}
Note that our TD3 algorithm implantation has even more hyperparameters than the DDPG algorithm. In total we have the following hyperparameters: shape $Q$-networks, policy networks, size replay buffer, amount of start runs $L$, number of episodes $M$, random process $\mathcal{R}$, size of minibatches $K$, clipping parameter $c$ and the number $d\in\N$.

In some applications the discount factor $\beta$ can be seen as a hyperparameter. For example, in cases where a Markov Decision Model has the discount factor $\beta=1$, one could change the discount factor in the algorithm and possibly get better results using the new discount factor.

\subsection{Neural Networks}
This subsection provides a brief introduction to neural networks. It is based on the book \textit{Understanding Machine Learning:
From Theory to Algorithms} by Shai Shalev-Shwartz and Shai Ben-David \cite{nn_book} and on my Bachelor's thesis \cite{BA_schuett}. 

Reinforcement learning algorithms use neural networks as approximators for the $Q$-function and policy. 
They are exceptionally useful due to their capacity for flexible adjustment of their complexity. Moreover, it can
be shown that neural networks are able to approximate any continuous function up to any $\varepsilon>0$, see \ref{thm:general_approx_thm}.

\begin{defn}[Neural Networks]
Let $\mathbb{X}\subseteq \R^{d_1}$ be the input space and $\mathbb{Y}\subseteq\R^{d_{n+1}}$ the output space for some $d_1,d_{n+1}\in\N$, of a function which we want to approximate.

Neural networks are a composition of functions $\Phi^{(1)},\dots,\Phi^{(n)}$, where each $\Phi^{(i)}\colon\R^{d_i}\to\R^{d_{i+1}}$, $d_i,d_{i+1}\in\N$, for $i\in\{1,\dots,n\}$, i.e.
\[
\Phi(x)=\Phi^{(n)}(\cdots(\Phi^{(1)}(x))\cdots),
\]
where for $i=1,\dots,n-1$ the function $\Phi^{(i)}$ has the structure:
\[
\Phi^{(i)}(x)=\psi(W^{(i)} x+b^{(i)}),
\]
and $\Phi^{(n)}$ has the form
\[
\Phi^{(n)}(x)=W^{(n)} x+b^{(n)},
\]
for a non-linear function $\psi\colon\R\to\R$, that is applied component wise, $W^{(i)}\in\R^{d_{i+1}\times d_i}$ and $b^{(i)}\in\R^{d_{i+1}}$.

The function $\psi$ is called the activation function, $W^{(i)}$ the weight matrix and $b^{(i)}$ the bias and define  $W:=(W^{(1)},\dots,W^{(n)})$ and $b:=(b^{(1)},\dots,b^{(n)})$. The number $n$ is called the number of hidden layers and for all $i\in\{1,\dots,n\}$ the number $d_i$ is called the number of nodes in the hidden layer $i$. 
\end{defn}

\begin{remark}
    For neural networks, the number of hidden layers $n$ and the number of nodes $d_i$ in each hidden layer $i$, for $i\in\{1,\dots,n\}$, can be used to vary the complexity of the neural network.

    Choosing the right number of hidden layers and nodes within each hidden layer is not an easy task and is highly dependent on the given problem. Therefore, in the Hyperparameter search section \ref{section:hyper}, we show how to efficiently find good hyperparameters. 

    In our further analysis, we use the non-linear activation function called \textit{rectified linear unit (ReLU)}, which is defined as 
    \[
    \psi_{\mathrm{relu}}(x)\coloneqq\left\{\begin{array}{cc}
      x& \text{if }\,\, x>0  \\
      0& \text{else} 
 \end{array}\right..
    \]
\end{remark}

The following theorem shows us that any continuous function can be approximated by neural networks. 
\begin{thm}[General Approximation Theorem, by \cite{wolf2018mathematical}]\label{thm:general_approx_thm}
Let $K\subseteq\R^{d_1}$ be compact and let $\psi\colon\R\to\R$ be an activation function, that is continuous and non-polynomial. Then, the set of functions representable by a neural network with a single hidden layer and activation function $\psi$ is dense in the space of continuous functions $f\colon K\to\R^{d_{n+1}}$ in the topology of uniform convergence. I.e. for all continuous function $f\colon K\to\R^{d_{n+1}}$ and $\varepsilon>0$ there exists a neural network $\Phi$, with depth 1, s.t.
\[
\sup_{x\in K}||\Phi(x)-f(x)||<\varepsilon.
\]
\end{thm}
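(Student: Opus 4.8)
The plan is to strip the statement down, through three successive reductions, to a single one-dimensional approximation fact, and then to prove that fact by a smoothing-and-differentiation argument in which the non-polynomiality of $\psi$ finally gets used. Write $\mathcal{N}$ for the family of scalar functions on $K$ computed by a single-hidden-layer $\psi$-network, i.e.\ the linear span $\Span\{x\mapsto\psi(\langle w,x\rangle+c):w\in\R^{d_1},\,c\in\R\}$; the goal is to show $\mathcal{N}$ is dense in $C(K,\R)$ for the sup-norm. First I would eliminate the output dimension: the $j$-th coordinate of a single-hidden-layer network $\R^{d_1}\to\R^{d_{n+1}}$ is itself a scalar single-hidden-layer network sharing the same hidden layer, and conversely $d_{n+1}$ scalar networks amalgamate into one by taking the union of their hidden nodes; since $\sup_{x\in K}\|\Phi(x)-f(x)\|\le\sqrt{d_{n+1}}\max_j\sup_{x\in K}|\Phi_j(x)-f_j(x)|$, it suffices to approximate scalar-valued $f$.

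Second, I would reduce to one variable through ridge functions. By the Weierstrass theorem in $\R^{d_1}$, polynomials are dense in $C(K,\R)$, so it is enough to approximate an arbitrary polynomial uniformly on $K$ by elements of $\mathcal{N}$. The polarisation identity writes every monomial, hence every homogeneous polynomial of degree $k$, hence every polynomial, as a finite linear combination of $k$-th powers $\langle w,\cdot\rangle^{k}$ of linear functionals, so by linearity it is enough to approximate each $x\mapsto\langle w,x\rangle^{k}$ on $K$. As $x\mapsto\langle w,x\rangle$ maps the compact set $K$ into a compact interval $I_w\subseteq\R$, precomposition reduces the whole problem to the one-dimensional claim: \emph{for every compact interval $I\subseteq\R$ and every integer $k\ge0$, the monomial $t\mapsto t^{k}$ lies in the sup-norm closure on $I$ of $\mathcal{S}:=\Span\{t\mapsto\psi(at+b):a,b\in\R\}$.}

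To prove the one-dimensional claim, I would show the closure $\overline{\mathcal{S}}$ (taken in $C(I)$; the same Riemann-sum constructions work over any fixed compact interval, which is what lets the affine parameters roam) is stable under two operations. Mollification: for $\phi\in C_c^\infty(\R)$, the convolution $\psi*\phi$ is, on any compact interval, a uniform limit of Riemann sums $\sum_i\psi(t-s_i)\phi(s_i)\Delta s_i$, which lie in $\mathcal{S}$; and a Baire-category argument — applied on $\mathcal{D}_{[-1,1]}$, the Fréchet space of smooth functions supported in $[-1,1]$, which is the increasing union of the closed subspaces $V_m:=\{\phi:\psi*\phi^{(m+1)}=0\}$, $m\ge0$ — shows that, since $\psi$ is not a polynomial, some $\tilde\psi:=\psi*\phi$ is again not a polynomial, while certainly $\tilde\psi\in C^\infty(\R)$. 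Differentiation in the dilation parameter: since every $t\mapsto\tilde\psi(ct+e)$ lies in $\overline{\mathcal{S}}$, so does each finite difference $h^{-k}\sum_{j=0}^{k}\binom{k}{j}(-1)^{k-j}\tilde\psi\big((a+jh)t+b\big)$, and by smoothness of $\tilde\psi$ these converge uniformly on $I$, as $h\to0$, to $t\mapsto t^{k}\tilde\psi^{(k)}(at+b)$; specialising $a=0$ puts $t\mapsto t^{k}\tilde\psi^{(k)}(b)$ in $\overline{\mathcal{S}}$ for every $b$. Since $\tilde\psi$ is smooth and not a polynomial, $\tilde\psi^{(k)}\not\equiv0$, so some $b_k$ has $\tilde\psi^{(k)}(b_k)\neq0$; rescaling gives $t\mapsto t^{k}\in\overline{\mathcal{S}}$ for all $k$, hence all polynomials lie in $\overline{\mathcal{S}}$, and a final use of Weierstrass on $I$ yields $\overline{\mathcal{S}}=C(I)$ — which closes the reduction chain.

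The first two reductions are routine (linear algebra plus two invocations of Weierstrass). The real work, and the step I expect to be the main obstacle, is inside the third paragraph: verifying carefully that the Riemann sums and finite-difference limits actually remain in the closed span $\overline{\mathcal{S}}$ as the affine parameters vary, and — the one genuinely non-formal point — proving that a continuous non-polynomial $\psi$ has a non-polynomial smooth mollification. For the latter: if $\psi*\phi$ were a polynomial for every $\phi\in C_c^\infty(\R)$, then $\mathcal{D}_{[-1,1]}=\bigcup_m V_m$ is a countable union of closed subspaces of a Fréchet space, so some $V_M=\mathcal{D}_{[-1,1]}$, i.e.\ $\psi*\phi^{(M+1)}=0$ for all $\phi$ supported in $[-1,1]$; mollifying $\psi^{(M+1)}$ against an approximate identity supported near $0$ then forces $\psi^{(M+1)}=0$ as a distribution, so $\psi$ is a polynomial of degree at most $M$. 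Once this lemma is in hand, the rest is bookkeeping with uniform continuity on compacta.
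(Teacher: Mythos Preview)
The paper does not supply its own proof of this theorem; it simply cites Theorem~2.7 in Wolf's lecture notes. Your argument is correct and is essentially the classical proof of Leshno, Lin, Pinkus and Schocken, which is also the route taken in the cited reference: reduce to scalar outputs, reduce to ridge functions via Weierstrass and polarisation, and then in one dimension mollify $\psi$ to obtain a smooth non-polynomial $\tilde\psi$ in the closed span and differentiate in the dilation parameter to extract every monomial. The two points you single out as needing care --- that the Riemann-sum and finite-difference limits remain in $\overline{\mathcal{S}}$, and that a continuous non-polynomial $\psi$ admits a non-polynomial smooth mollification --- are both handled correctly; the Baire-category argument on the Fr\'echet space $\mathcal{D}_{[-1,1]}$ is exactly the standard device for the latter.
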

The proof can be read in the book \textit{Mathematical foundations of supervised learning} by M. Wolf \cite{wolf2018mathematical} (see Theorem 2.7 in the book). Note that the ReLU activation function is continuous and non-polynomial.

\vspace{.5cm}
Now, we define another type of neural networks called convolutional neural networks. They are typically used, when 2D arrays are given as input. One could flatten the input spaces, but for some 2D array inputs, like pictures, the information of pixels next to each other might get lost. We have used them in our application to encode a histogram plot of the electrons to a 4-dimensional vector, that gives us information about the shape of the electrons in the histogram plot.   

The following part is based on \cite{cnn_book}.
\begin{defn}[Convolutional neural networks]\label{defn_cnn}
Let $\mathbb{X}\subseteq \R^{d_1,d_1'}$ be the input-space and $\mathbb{Y}\subseteq\R^{d_{n+1}}$ the output space for some
$d_1,d_1',d_{n+1}\in\N$, which we want to approximate.

Convolutional neural networks are a composition of functions $\Phi^{(1)},\dots,\Phi^{(n)}$, where each $\Phi^{(i)}\colon\R^{d_i,d_i'}\to\R^{d_{i+1},d_{i+1}'}$, $d_i,d_i',d_{i+1},d_{i+1}'\in\N$, for $i\in\{1,\dots,n-1\}$ and $\Phi^{(n)}\colon\R^{d_{n},d_{n}'}\to\R^{d_{n-1}\cdot d_{n-1}'}$, i.e.
\[
\Phi(x)=\Phi^{(n)}(\cdots(\Phi^{(1)}(x))\cdots),
\]
where $\Phi^{(n)}$ flattens/vectorises 2D arrays into vectors.

For all other $\Phi^{(i)}$, one of the following description holds
\begin{itemize}
    \item (activation layer) For a non-linear function $\psi\colon\R\to\R$, the function $\Phi^{(i)}\colon \R^{d_i,d_i'}\to\R^{d_i,d_i'}$ is the function, that applies the non-linear function component wise
    \item (max pooling layer) Let $A=(A_{i,j})\in\R^{d_i,d_i'}$. For stride lengths $s,s'\in\N$ and  a window size $(p+1)\times (p'+1)$, where $p,p'$ both  are even, the output $\Phi^{(i)}(A)=\widetilde A\in\R^{d_i-2s,d_i'-2s'}$ is given by the following rule for all $i\in\{s+1,\dots,d_i-s\}$ and $j\in\{s'+1,\dots,d_i'-s'\}$
    \[
    \widetilde A_{i-s,j-s'}=\max\big\{A_{k,l}\;|\;k\in\{i-\frac{p}{2},\dots,i+\frac{p}{2}\},l\in \{j-\frac{p'}{2},\dots,j+\frac{p'}{2}\}\big\},
    \]
    where in case the window leaves the matrix $A$, all non-defined $A_{k,l}$ are set to $-\infty$.
    
     In example \ref{ex:convNN} we see the function in more detail. 
    \item (convolutional layer) Let $A=(A_{i,j})\in\R^{d_i,d_i'}$. For stride lengths $s,s'\in\N$, a window size $(p+1)\times (p'+1)$, where $p,p'$ both are even, a weight matrix $W\in\R^{p+1,p'+1}$ an bias $b\in\R$ the output $\Phi^{(i)}(A)=\widetilde A\in\R^{d_i-2s,d_i'-2s'}$ is given by the following rule for all $i\in\{s+1,\dots,d_i-s\}$ and $j\in\{s'+1,\dots,d_i'-s'\}$
    \begin{align*}
    \widetilde A_{i-s,j-s'}=b+\mathrm{sum}\big\{A_{k,l}\cdot W_{k-(i-\frac{p}{2})+1,l-(j-\frac{p'}{2})+1}\;|&\;k\in\{i-\frac{p}{2},\dots,i+\frac{p}{2}\},\\
    &l\in \{j-\frac{p'}{2},\dots,j+\frac{p'}{2}\}\big\},
    \end{align*}
    where in case the window leaves the matrix $A$, all non-defined $A_{k,l}$ are set to $0$.
    
     As above, we see the convolution layer in more detail in the following example \ref{ex:convNN}. 
     
     We extend the definition by considering not one, but multiple weight matrices within one convolutional layer. The output is then in the space $\R^c\times\R^{d_i-2s,d_i'-2s'}$, where $c$ is the amount of weight matrices used, called the number of channels. Each weight matrix gets its own dimension. If the next layer gets such an 3-dimensional array as input, the activation layer remains to be defined component wise, the max pooling layer uses the 2 dimensional window on all channels and for a convolutional layer with number of channels $\tilde c$, the convolution of a weight matrix is done over all input channels and then summed up.

\end{itemize}
\end{defn}
\begin{example}\label{ex:convNN}
    Consider the matrix 
    \[
    A=\left(\begin{array}{ccccc}
         1&2&3&4&5  \\
         6&7&8&9&10  \\
         11&12&13&14&15 \\
         6&7&8&9&10  \\
         1&2&3&4&5  
    \end{array}\right).
    \]
    For stride lengths $s,s'=1$, window size $(3,3)$, the max pooling layer returns the following matrix
    \[
    \widetilde A = \left(\begin{array}{ccc}
         13&14&15  \\
         13&14&15 \\
         13&14&15 
         \end{array}\right).
    \]
    For the same stride lengths and window size, we get with the weight matrix
    \[
    W=\left(\begin{array}{ccc}
         1&1&1  \\
         0&0&0 \\
         -1&-1&-1 
         \end{array}\right)
    \]
    and bias $b=0$, the following output after a convolution layer with weights $W$
    \[
    \widetilde A=\left(\begin{array}{ccc}
         -30&-30&-30  \\
         0&0&0 \\
         30&30&30 
         \end{array}\right).
    \]
\end{example}
\begin{remark}
    The weights in a convolutional neural network are trainable parameters. The number and composition of activation/max-pooling and convolutional layers are hyperparameters of a problem which need to be chosen.
\end{remark}

\clearpage
\section{Partially Observable Markov Decision Processes}
In our application, the data which we are able to access is only a part of the underlying Markov Decision Process. Instead of the information on all electrons, we are only able to access accumulations (histogram plots) via sensors. Therefore, we can not observe the whole state space $E$, but an observable part. In this section we introduce partially observable Markov Decision Processes and connect them to Markov Decision Processes. The section is based on the book \textit{'Markov decision processes with applications to finance'} \cite{mdp_in_finance} by Nicole Bäuerle and Ulrich Rieder. 
\begin{defn}[Partially Observable Markov Decision Model]
    We call a tuple $(E_X\times E_Y,A,p,p_0,r,\beta,g)$ a partially observable Markov Decision Model, where
    \begin{itemize}
        \item $E_X\times E_Y$ is called state space. We assume that $E_X\subseteq \R^m$ and $E_Y\subseteq \R^n$, for some $m,n \in\N$ equipped with the Borel $\sigma$-algebras $\mathcal{E}_X,\mathcal{E}_Y$. For $(x,y)\in E_X\times E_Y$ we call $x$ the observable component of the state and $y$ the part that cannot be observed
        \item $A$ is a measurable space with $\sigma$-algebra $\mathcal{A}$, called state space
        \item $p$ is a stochastic transition kernel from $E_X\times E_Y\times A$ to $E_X\times E_Y$, i.e. for all $(x,y,a)\in E_X\times E_Y\times A$ and Borel sets $B\in \mathcal{E}_X\otimes\mathcal{E}_Y$ the function $(x,y,a)\mapsto p(B|x,y,a)$ is a probability measure on $\mathcal{E}_X\otimes\mathcal{E}_Y$ and the mapping $(x,y,a)\mapsto p(B|x,y,a)$ is measurable. The value $p(B|x,y,a)$ is interpreted as probability of set $B$ given current state $(x,y)\in E_X\times E_Y$ and chosen action $a\in A$.
        \item $p_0$ is called initial probability on $Y$
        \item $r\colon E_X\times E_Y\times A\to\R$ measurable function called reward function
        \item discount factor $\beta \in (0,1]$
        \item $g\colon E_X\times E_Y\to\R$ measurable function, called terminal reward function
    \end{itemize}
\end{defn}
The definition of a partially observable Markov decision model is similar to the definition of a Markov decision model. The main difference is that we distinguish between an observable and an unobservable part. 

We now define policies.
In contrast to the book \cite{mdp_in_finance}, where policies can take into account the entire observable history up to that point, we define policies $\pi$ as a sequence of decision rules $\pi=(f_0,\dots,f_{N-1})$, where all $f_i\colon E_X\to\A$ are measurable. 

Similar to the Markov decision model case, for a policy $\pi$ and initial state $x\in E_X$, a probability measure $\wP_x^\pi$ on $(E_X\times E_Y)^{N+1}$, equipped with the product $\sigma$-algebra, is defined as the initial conditional distribution $p_0$ together with transition probability $p$.

For $\omega=(x_0,y_0,\dots,x_N,y_N)\in(E_X\times E_Y)^{N+1}$, the random variables $X_n$ and $Y_n$ are defined as
\[
X_n(\omega)=x_n\quad\text{and}\quad Y_n(\omega)=y_n.
\]
The partially observable Markov decision process is then defined as $(X_n,Y_n)_{n=0,\dots,N}$.
\vspace{.8cm}

Note that the algorithms shown in section \ref{sec:rl_algorihms} only work for Markov decision models. This is because they are based on the Bellman equation and temporal difference learning insights that rely on the Markov property of the environmental dynamics.  

Nevertheless, the algorithms defined in section \ref{sec:rl_algorihms} are used in partially observable Markov decision models (see \cite{td3_for_pomdp_1}). In our application, we will apply the TD3 algorithm and check its performance.

\clearpage
\section{Hyperparameter search}\label{section:hyper}
This section is based on the paper \textit{Algorithms for Hyper-Parameter Optimization} \cite{hyper_para_paper}.

A common problem in training machine learning models is the choice of hyperparameters. For example, when training a deep neural network, the number of hidden layers and the number of nodes within each hidden layer must be carefully chosen, as these parameters determine the complexity of the model and thus lead to possible overfitting or underfitting. Another important hyperparameter that often needs to be tuned is the learning rate. The learning rate determines how much the parameters of the model change after each gradient descent step. If the learning rate is set too low, the model may get stuck in local minima, while if it is set too high, it may keep jumping over the global minima during gradient descent steps. These examples show that the hyperparameters play a crucial role in the training process and the performance of the trained model. 

Reinforcement learning algorithms tend to be very sensitive to the choice of parameters used. They also tend to have many hyperparameters. In this section, we show how to automatise the search for good hyperparameters.

In the paper \cite{hyper_para_paper}, on which this section is based, the authors use tree-structured hyperparameter spaces $\mathcal{X}$. In the sense that some leaf variables (such as the number of nodes in the 2nd layer of a deep neural network) are only well defined if node variables (such as the number of hidden layers of a deep neural network) take certain values. 

In our case, we assume that the space of all valid hyperparameters $\mathcal{X}$ can be modelled as a bounded subset of $(\R^m\times \N^n)$, for some $m,n\in\N$. 

The problem is to find hyperparameters $x^*\in\mathcal{X}$, that minimise a loss
\[
x^*\in\argmin_{x\in\mathcal{X}}\E[f(x)],
\]
where $f(x)$ is a random variable on $(\R,\mathcal{B}(\R))$. Typically, the value $f(x)$ describes a loss on a training set, after training a model with hyperparameters $x$. As the training process of models is typically random, the performance on the test set is also random.
\vspace{.3cm}

Examples of real-valued hyperparameters are the learning rate or the variance of a noise process. On the other hand, the number of hidden layers and the number of nodes in the first and last layers of a deep neural network can be given as integers. The ideal number of nodes in the $k$-th layer of a deep neural network can be found by predefining a list of possible neural network shapes and letting the algorithm find the best index from the list. Implementing this concept, we do not need to use the tree-hyperparameter space.

In this bounded case, the simplest and most commonly used methods are the \emph{grid search} and the \emph{random search}.

For each hyperparameter, the grid search method discretises the set of  parameters into a finite set. Then, for each combination of different hyperparameter values the performance of the model is checked (typically on a test set) and the best performing hyperparameters are used. This method is great for finding combinations of hyperparameters that perform well. The problem is that for a good performance the mesh-size of the grid needs to be small, which is computationally expensive. For 10 hyperparameters with each only 5 different values in the grid search, a total of $5^{10}=9765625$ independent training and evaluation cycles must be performed.

The other commonly used method is \emph{random search}. Here, uniformly (or log-uniformly) hyperparameters are sampled from the bounded set of all valid hyperparameters, evaluated and the performing hyperparameters are then used.

In \cite{random_vs_grid_paper}, it was shown that random search, typically, finds hyperparameters as good or better than grid search, within a small fraction of the computation time. 

Note that it is possible to find good hyperparameters by hand. But this process usually takes a lot of time and when many hyperparameters are used, it is difficult to keep track of which parameters are causing good or bad training performances.

In the following, we formalise the hyperparameter search process,  generalise the set of valid hyperparameters and show algorithms that are able to take into account the history of the hyperparameter search, thus minimising the time spent evaluating bad hyperparameters.


Now, we consider two different Sequential Model-Based Global Optimization (SMBO) algorithms. These algorithms approximate a costly to evaluate function $f\colon\mathcal{X}\to\R$. The following algorithm shows the pseudocode of SMBO algorithms:

\begin{algorithm}[H]
    \caption{SMBO algorithm}\label{alg:SMBO}
    \KwData{function $f\colon \mathcal{X}\to\R$, Model of function $f$ called $M_0$, number of iterations $T\in\N$, loss function $S$}
    \KwResult{Observation history $\mathcal{H}=((x_i,f(x_i)))_{i=1}^T$}
    $\mathcal{H}\leftarrow\emptyset$\;
    \For{$i=1,\dots,T$}{
    $x_i\leftarrow\argmin_x S(x,M_{i-1})$\;
    Evaluate $f(x_i)$\;
    $\mathcal{H}\leftarrow \mathcal{H}\cup(x_i,f(x_i))$\;
    Fit new model $M_i$ to $\mathcal{H}$\;
    }
\end{algorithm}

SMBO algorithms differ in the loss function $S$ used and on how a new model is fitted to the observation history $\mathcal{H}$.

For the hyperparameter-search, the function $f\colon\mathcal{X}\to\R$ is random, i.e. there exists a measurable space $(\Omega,\mathcal{F})$ with $f\colon\mathcal{X}\times\Omega\to\R$ and $f(x,\cdot)$ is measurable for all $x\in\mathcal{X}$. To evaluate the function $f$, a model with parameters in $\mathcal{X}$ is trained for a fixed amount of time or iterations and then $f(x)$ is given as the performance of the model on a test set. It is important to note that in this section the performance function $f$ becomes smaller the better the performance. Instead of maximising some reward, we want to minimise some loss.

The following algorithms will optimise the Expected Improvement (EI) criterion. For a model $M$ of $f$, the expected improvement is the expected value of how much $f$ negatively exceeds some threshold $y^*\in\R$, i.e.
\[
\mathrm{EI}_{y^*}(x)\coloneqq\int_{-\infty}^\infty\max(y^*-y,0)p_M(y|x)\mathrm{d}y.
\]
\subsection{Gaussian Processes Approach}
In this subsection, we first briefly explain Gaussian processes and then show how they can be used for the hyperparameter search. The theoretical explanation of Gaussian processes is based on \cite{gaussian_processes_book}.

We start with the assumption that $f$ has the domain $\R^m$ for a $m\in\N$ and $f$ is linear, i.e. a $w\in\R^m$ exists with $f(x)=x^Tw$. Further, we assume that the observations of $f$, which we denote by $y$, are given by $y=f(x)+\varepsilon$, where epsilon is normally distributed with mean 0.

Let $\mathcal{D}$ be a set of $n$ observations, i.e. $\mathcal{D}=\{(x_i,y_i)|i=1,\dots,n\}\subseteq \R^m\times\R$. 
We denote the matrix where the $i$-th column equals $x_i$ as $X\in\R^{m\times n}$ and $y\coloneqq(y_1,\dots,y_n)$. Then, we can write $\mathcal{D}=(X,y)$.

Now, assume that the noise $\epsilon$ depends on the size of $\mathcal{D}$, i.e. $\epsilon \sim \mathcal{N}(0,\sigma_n^2)$.

The likelihood (probability density) of observing $y$ given $X$ and $w$ is 
\begin{align*}
    p(y|X,w)&=\prod_{i=1}^np(y_i|x_i,w)=\prod_{i=1}^n\frac{1}{\sqrt{2\pi}\sigma_n}\exp\left(-\frac{(y_i-x_i^Tw)^2}{2\sigma^2_n}\right)\\
    &=\frac{1}{(2\pi\sigma_n^2)^{n/2}}\exp\left(-\frac{1}{2\sigma^2_n}||y-X^Tw||_2^2 \right),
\end{align*}
which is the probability density of $\mathcal{N}(X^Tw,\sigma_n^2I)$.

Gaussian processes use priors over the space of parameters. We assume that 
\[
w\sim\mathcal{N}(\mathbf{0},\Sigma_p),
\]
where $\Sigma_p$ is a covariance matrix. By Bayes' rule
\[
p(w|y,X)=\frac{p(y|X,w)p(w)}{p(y|X)},
\]
where $p(y|X)$ is called the marginal likelihood and is given by
\[
p(y|X)=\int p(y|X,w)p(w)\mathrm{d}w.
\]
The marginal likelihood $p(y|X)$ is independent of the parameter $w$. We obtain
\begin{align*}
    p(w|X,y)&\propto \exp\left( -\frac{1}{2\sigma_n^2}(y-X^Tw)^T(y-X^Tw)
    \right)\exp\left(-\frac{1}{2}w^T\Sigma_p^{-1}w\right)\\
    &\propto \exp\left( -\frac{1}{2}(w-\Bar{w})^T(\frac{1}{\sigma_n^2}XX^T+\Sigma_p^{-1})(w-\Bar{w}) \right),
\end{align*}
where $\Bar{w}\coloneqq\sigma_n^{-2}(\sigma_n^{-2}XX^T+\Sigma_p^{-1})^{-1}Xy$. With this, we observe that $p(w|X,y)$ is the probability density of a normal distribution with mean $\Bar{w}$ and covariance matrix $A^{-1}$, where $A\coloneqq \sigma_n^{-2}XX^T+\Sigma_p^{-1}$. Note that $\Bar{w}=\frac{1}{\sigma_n^2}A^{-1}Xy$.

We analyse the predictive distribution for $f_*\triangleq f(x_*)$ at $x_*$. Using the formula
\[
p(f_*|x_*,X,y)=\int p(f_*|x_*,w)p(w|X,y)\mathrm{d}w,
\]
it is shown that $p(f_*|x_*,X,y)$ is equal to the probability density of $\mathcal{N}(\frac{1}{\sigma_n^2}x_*^TA^{-1}Xy,x_*^TA^{-1}x_*)$. Thus, the observation $y$ of $f(x_*)$ with the most likelihood equals $\frac{1}{\sigma_n^2}x_*^TA^{-1}Xy$. Additionally, with the variance $x_*^TA^{-1}x_*$ shows the confidence in this prediction, given the data $(X,y)$.  
\vspace{.5cm}

Using a feature map or kernel (kernel trick), briefly explained in \cite{BA_schuett}, we extend this theory from a linear problem to non-linear problems. For a feature map $\phi\colon\R^m\to\R^M$, we set the model of $f$ to $f(x)=\phi(x)^Tw$. The predictive distribution $p(f_*|x_*,X,y)$ has the density of the following normal distribution
\begin{align}\label{eq:normal_feature}
\mathcal{N}\left(\frac{1}{\sigma_n^2}\phi(x_*)^TA^{-1}\Phi y,\phi(x_*)^TA^{-1}\phi(x_*)\right),
\end{align}
where $\Phi=\Phi(X)$ and $\Phi(X)$ is the aggregation of columns $\phi(x_i)$ and $A\coloneqq \sigma_n^{-2}\Phi\Phi^T+\Sigma_p^{-1}$.

\begin{example}
    The following function is a commonly used feature map:
    
    For some $n\in\N$, one can consider the function $\phi_n\colon\R\to\R^{n+1}$, where
        \[
        \phi_n(x)=(1,x,x^2,\dots,x^n)^T
        \]
    Given this feature map, we are now able to model the functions $f$ as 
    \[
    f(x)=\phi_n(x)^Tw=w_0+w_1x+w_2x^2+\dots+w_nx^n,
    \]
    for a $w=(w_0,\dots,w_n)\in\R^{n+1}$. So with this feature map at hand, we are able to model polynomial functions.
    
\end{example}
    
A Gaussian process is formally defined as:
\begin{defn}[Gaussian process]
    A Gaussian process is a collection of random variables, such that every finite combination of the random variables has a joint Gaussian distribution.
\end{defn}

\begin{remark}
    A Gaussian process is characterised by its mean and covariance functions. For a real-valued function $f(x)$, the mean function $m(x)$ and the covariance function $k(x,x')$ are given by
    \begin{align*}
    m(x)&=\E[f(x)],\\
    k(x,x')&=\E[(f(x)-m(x))(f(x')-m(x'))].
    \end{align*}
    In this case, we write for the function $f$:
    \[
    f(x)\sim\mathcal{GP}(m(x),k(x,x')).
    \]
\end{remark}
Now, we show how Gaussian processes are used to maximise $\mathrm{EI}_{y^*}$, which is defined as
\[
\mathrm{EI}_{y^*}(x)\coloneqq\int_{-\infty}^\infty\max(y^*-y,0)p_M(y|x)\mathrm{d}y,
\]
for some threshold $y^*\in\R$ and $M$ a model of $f\colon\mathcal{X}\to\R$. First, set $y^*$ to the best (lowest) value in the history $\mathcal{H}$, i.e. $y^*\coloneqq\min\{f(x_i),1\leq i\leq n\}$. Let the model $p_M$ in the function $\mathrm{EI}_{y^*}$  be the posterior Gaussian processes after observing $\mathcal{H}$. Then, the expected improvement $\mathrm{EI}_{y^*}$ gives us for all $x\in\mathcal{X}$ the expected value over all $y$ that satisfy $y\leq {y^*}$. Since a Gaussian process posterior is used, the $\mathrm{EI}$ function captures uncertainty and thus models the difference between explored and under-explored regions. How to chose the next to be observed $x_i$ can be read in detail in \cite{hyper_para_paper}. In short, one uses gradient-free evolutionary algorithms for the optimization. 

In case the configuration space $\mathcal{X}$ is tree-structured, multiple Gaussian processes over the space $\mathcal{X}$ are needed, since in the space $\mathcal{X}$ not all variables are always well defined: For example, the number of nodes in the 3rd layer of a 2-layer deep neural network. In this case, individual Gaussian processes are used for these conditional hyperparameters.


\subsection{Tree-structured Parzen Estimator Approach}\label{sec:tree_struc}
The Tree-structured Parzen Estimator Approach (TPE) is another SMBO algorithm that is used in the application of this thesis. It works well for tasks with many hyperparameters and unlike the Gaussian process approach, it does not model $p(y|x)$ directly, but instead models $p(x|y)$ and $p(y)$ and then obtains $p(y|x)$ with Bayes' rule. In this subsection, we give a brief overview of the approach, for a more detailed explanation see \cite{hyper_para_paper}.

The algorithm is specifically designed for tree-structured configuration spaces and deals with them in the following way: The generative process of the configuration space (for example first choosing the number of layers then the number of nodes within them) is replaced by distributions of configuration priors with nonparametric densities. 

In our case, where the hyperparameter space $\mathcal{X}$ is given as $(\R^m\times \N^n)$, for some $m,n\in\N$ we define the nonparametric densities directly on the configuration priors.

The probabilities $p(x|y)$ for $x\in\mathcal{X}$ and $y\in\R$ are then defined using densities $l,g$ as 
\[
p(x|y)=\left\{\begin{array}{lc}
     l(x)&\text{if $y<y^*$}  \\
     g(x)&\text{if $y\geq y^*$} 
\end{array}\right.,
\]
where $l,g$ are non-parametric densities. After observing $\{x^{(1)},\dots,x^{(k)}\}$ the density $l(x)$ is the non-parametric density using the observations $\{x^{(i)}\}$, such that for the loss $f(x^{(i)})$ holds $f(x^{(i)})<y^*$. The density $g(x)$ uses the other observations. Unlike the Gaussian process algorithm, where $y^*$ is chosen as the smallest loss, the TPE algorithm chooses a $y^*$ that is some quantile $0<\gamma<1$ of all observations $y_i$, i.e. $p(y<y^*)=\gamma$, which can be approximated without  any model for $p(y)$. Using such a quantile guarantees that both, $l$ and $g$ will get observations $x^{(i)}$.

We address now the question which $x$ to choose after observing the history $\mathcal{H}$. Note that for all $x\in\mathcal{X}$, we have
\[
\mathrm{EI}_{y^*}(x)=\int_{-\infty}^{y^*}(y^*-y)p(y|x)\mathrm{d}y=\int_{-\infty}^{y^*}(y^*-y)\frac{p(x|y)p(y)}{p(x)}\mathrm{d}y.
\]
With the quantile $\gamma=p(y<y^*)$, we observe that 
\[
p(x)=\int_{\R} p(x|y)p(y)\mathrm{d}y=\gamma l(x)+(1-\gamma) g(x).
\]
We continue with
\begin{align*}
\int_{-\infty}^{y^*}(y^*-y)p(x|y)p(y)\mathrm{d}y&=\int_{-\infty}^{y^*}(y^*-y)l(x)p(y)\mathrm{d}y\\
&=l(x)\left(\int_{- \infty}^{y^*}y^*p(y)\mathrm{d}y-\int_{- \infty}^{y^*}yp(y)\mathrm{d}y\right)\\
&=l(x)y^*\gamma-l(x)\int_{-\infty}^{y^*}yp(y)\mathrm{d}y. 
\end{align*}
Combined we achieve
\begin{align*}
    \mathrm{EI}_{y^*}(x)&=\frac{1}{p(x)}\int_{-\infty}^{y^*}(y^*-y)p(x|y)p(y)\mathrm{d}y\\
    &=\frac{ l(x)y^*\gamma-l(x)\int_{-\infty}^{y^*}yp(y)\mathrm{d}y  }{\gamma l(x)+(1-\gamma) g(x)}\\
    &=\frac{ l(x)(y^*\gamma-\int_{-\infty}^{y^*}yp(y)\mathrm{d}y)}{\gamma l(x)+(1-\gamma) g(x)}\\
    &\propto\frac{ l(x)}{\gamma l(x)+(1-\gamma) g(x)}\\
    &=\left(\gamma+\frac{g(x)}{l(x)}(1-\gamma)\right)^{-1}.
\end{align*}
To maximise the expected improvement, we need to find a $x\in\mathcal{X}$ that has high probability under $l(x)$ and low probability under $g(x)$. In each iteration, a large sample of candidates is drawn from the distribution $l$ and then evaluated by $g(x)/l(x)$. The candidate $x$ with the lowest $g(x)/l(x)$ score is then picked to evaluate $f(x)$.

\section{Summary}
In this chapter, we have seen the foundations of reinforcement learning.

After defining a Markov Decision Model/Process and a policy, we were able to describe a Markov Decision Process which depends on a policy $\pi$, by changing the probability measure. Then, we presented the Bellman equation and showed that under the structure assumptions \ref{def:structure_assumption} an optimal policy can be calculated, see algorithm \ref{alg:backward_induction}. As this algorithm only works great for finite action and state spaces, we introduced the algorithms DDPG and TD3 in section \ref{sec:rl_algorihms}. As the performance of the algorithms depends on the chosen hyperparameters, we introduced in section \ref{section:hyper} the automatisation of the search for good hyperparameters.

\chapter{Physical Background}\label{chap:physics}
This chapter introduces the physics behind the synchrotron light source BESSY II. We start with a brief introduction to the basic functionality of BESSY II. Then, we introduce the functionality of the non-linear kicker.

\section{BESSY II}
This section is based on my Bachelor's thesis \cite{BA_schuett}, where minor changes have been made. It gives an intuitive idea of how BESSY II works.
\vspace{.5cm}

BESSY II is a synchrotron light source located in Adlershof, Berlin. Inside, electrons are first accelerated in a linear accelerator. They are then accelerated to almost the speed of light in a synchrotron. Such a synchrotron consists of multiple deflection magnets that keep the electrons on a circular path, with linear acceleration sections in between to speed them up. 

Once the electrons have reached a certain energy level, they are injected into the storage ring. The aim of this thesis is to optimise this injection. The storage ring consists of more deflection magnets and undulators. Undulators are special magnetic arrangements that force the electrons to follow a a slalom course. In this slalom course, the electrons experience acceleration and thus emit energy in form of electromagnetic radiation, called synchrotron radiation. For further explanation, we refer to a paper by \textit{Schwinger \cite{synchrotron_schwinger}}. This synchrotron radiation is then used in a variety of experiments, for example in materials science.

The following image shows all the parts together:
\begin{figure}[H]
    \centering
    \includegraphics[width=.96\linewidth]{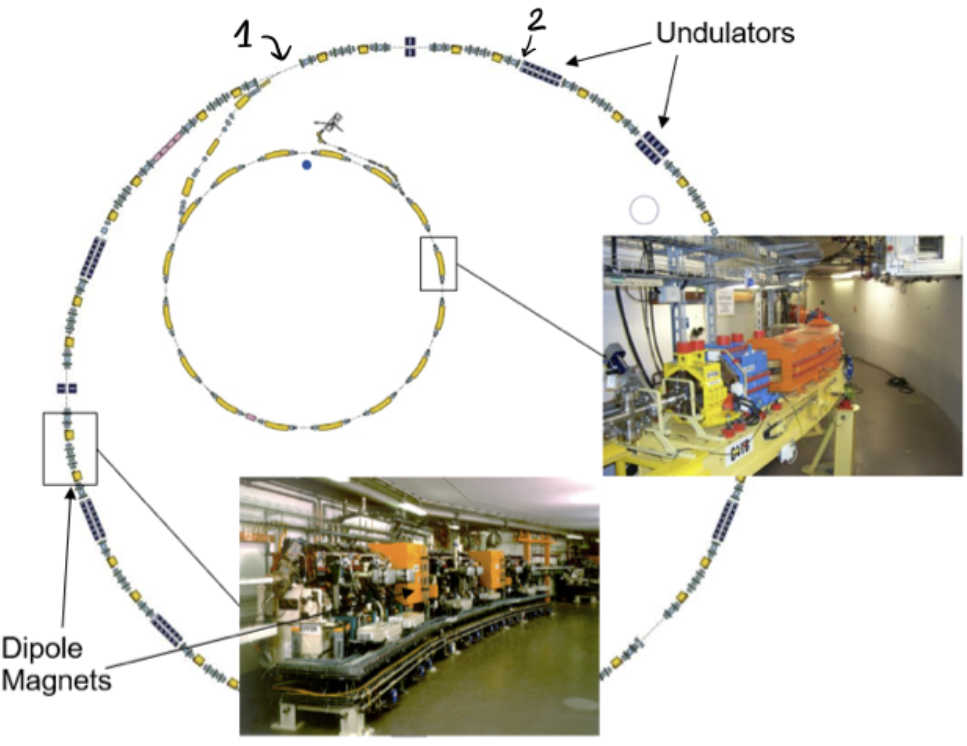}
    \caption{Basic setup of BESSY II, provided by the Helmholtz Zentrum Berlin. The inner ring shows a synchrotron, the outer ring is called storage ring. The small straight section above the synchrotron is the electron gun, where electrons are created. At the mark 1, the septum-sheet is located; at mark 2 the non-linear kicker. The role of the septum-sheet is explained in subsection \ref{subsec:without_kick}.}
    \label{fig:bessy2}
\end{figure}
This image shows both rings. The inner ring is the synchrotron, where the electrons are accelerated and the outer ring is the storage ring. The photo on the right shows some magnets in the synchrotron, while the photo on the left shows a section of the storage ring. Furthermore, inside the outer ring, above the synchrotron, we can see the electron gun, that creates and injects the electrons. The injector pre-accelerates the electrons and then injects them into the synchrotron. 

In the following, we take a closer look at the different parts.
\subsection{Deflection magnets}
To force the electrons into the circular track, BESSY II uses deflection magnets. 
Using the principle of the Lorentz force, these magnets bend the path of the electrons.

The Lorentz force law states that the electromagnetic force $\vec{F}_L$ on a point charge of charge $q$ at a given point and time is a function of velocity $\vec v$ and magnetic field strength $\vec{B}$ as
\[
\vec{F}_L=q (\vec{v} \times\vec{B}).
\]
Visually, the Lorentz force law for an electron can be understood as:
\begin{figure}[H]
    \centering
    \includegraphics[width=0.4\linewidth]{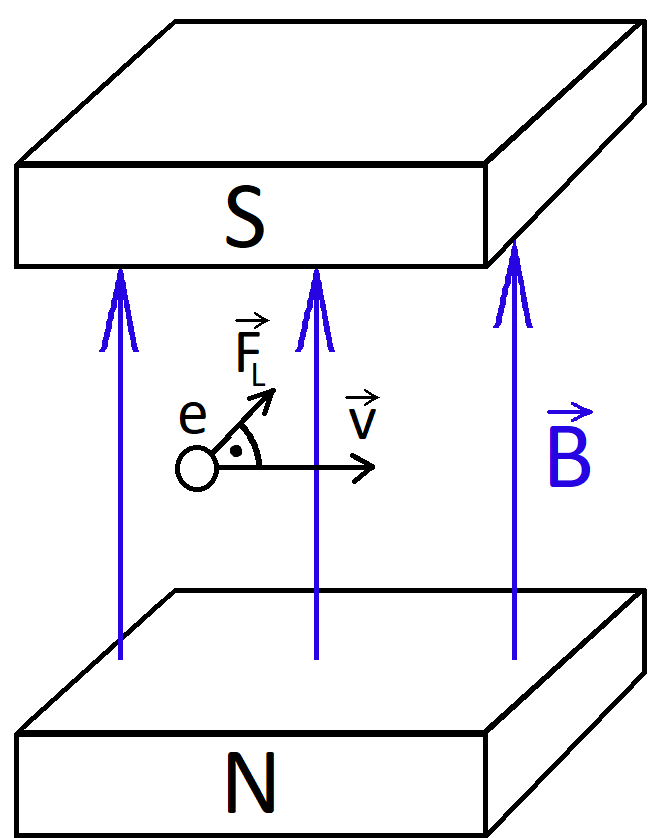}
    \caption{Visualisation of the Lorentz force law.}
\end{figure}
Next, we note that the faster the electrons get, the stronger the magnetic fields have to be. This is achieved by using electromagnets whose magnetic field strength can be controlled by the intensity of the electric current. 
\subsection{Linear acceleration sections}
In this subsection, we analyse the linear acceleration sections. These are implemented to accelerate the electrons. This is achieved by applying an alternating voltage to tabular electrodes that are aligned in a straight line. When an electron approaches an electrode, the electrode becomes positive to attract the electron. Ideally, as soon as the electron passes through half of the electrode, the electrode gets negative and repels the electron. With this technique we can accelerate the electron by first pulling it towards the electrode and then pushing it away.

This procedure is visualised in the following 2 figures.

First, we attract the electron with the positive electrode:
\begin{figure}[H]
    \centering
    \includegraphics[width=0.7\linewidth]{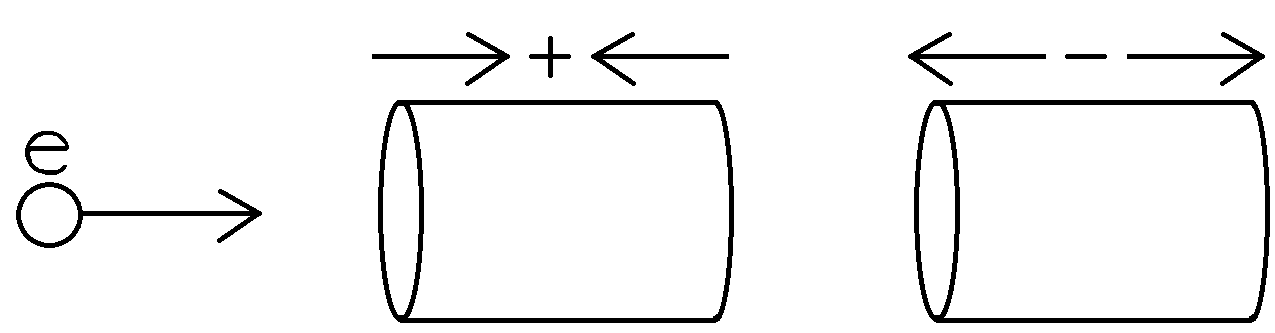}
    \caption{Visualisation of attraction of an electron.}
\end{figure} 
If the electron were to pass through half of the electrode and we wouldn't change its electrical charge, we would decelerate the electron, which we don't want.

That is, why we swap the electrical charges of the electrodes:
\begin{figure}[H]
    \centering
    \includegraphics[width=0.5\linewidth]{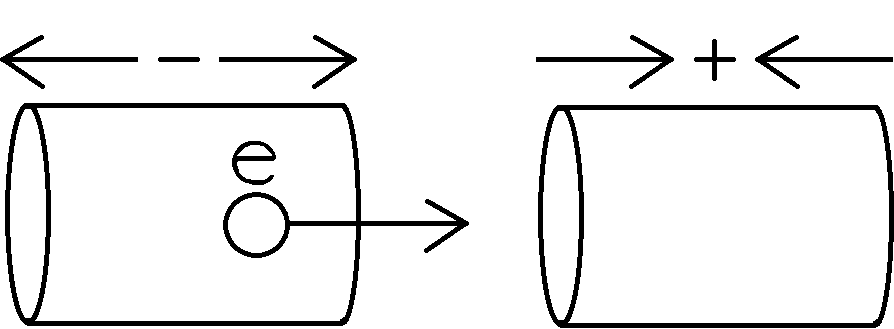}
    \caption{Visualisation of how the electrode changes its electrical charge and pushes the electron.}
\end{figure} 
Now, the first electrode pushes the electron and the next electrode pulls the electron, accelerating it. Note that there is a gap between the two electrodes, so the electron has space to accelerate. 

In practice, there are even more electrodes set up after another, to repeat this procedure and accelerate the electron multiple times.

For this technique it is essential to calculate the exact frequencies at which the electrodes must change their electrical charge, otherwise no acceleration takes place and the electrons might slow down. Further, note that these frequencies need to be adapted to the current speed of the electrons, since in the synchrotron setting the electrons go through the same linear acceleration setting  several times.

\clearpage
\section{Non-Linear Kicker}
In this section, we look at the physical background behind the non-linear kicker. The figure below shows its structure:
\begin{figure}[H]
    \centering
    \includegraphics[width=9cm]{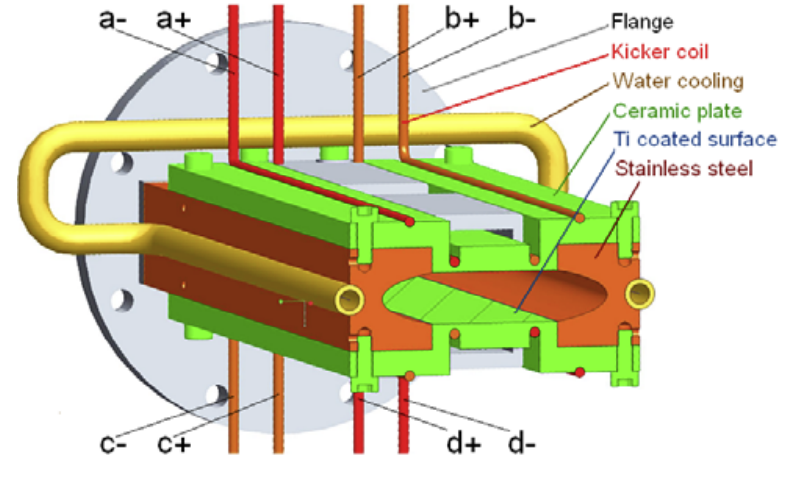}
    \caption{Structure of the Non-Linear Kicker, by \cite{nlk_paper}}
    \label{fig:nlk_structure}
\end{figure}
In figure \ref{fig:nlk_structure}, we can see the schematic structure of the non-linear kicker. At the front of the figure, the oval shape shows the cross-section of the beam-line where the stored and newly injected electrons are located. We  focus on the two main components: the 4 kicker coils, which we call wires and the titanium coating on the ceramic surface.

First, we see how a magnetic field is created around a single wire. Then, we analyse eddy-currents, a side-effect of the  changing magnetic fields around a single wire, on the titanium coating and in the last two subsections we combine both analyses and show how to model the non-linear kicker.  
\subsection{Field around a wire}
This subsection is based on the book \textit{Electromagnetic fields and energy}, by H. A. Haus and J. R. Melcher \cite{physics_book}.

Now, we analyse the magnetic fields created around the wires. The special arrangement of the wires, combined with the titanium coating,  creates the non-linear magnetic field. 

Let $\mathcal{I}=[0,T]\subset\R$ be a time interval, for a $T>0$ and $S$ be the cross-sectional area of the wire.

Ampere's law shows the relationship between the magnetic field density $\mathbf{H}\colon\R^3\times \mathcal{I}\to\R^3$ and current density $\mathbf{J}\colon\R^3\times \mathcal{I}\to\R^3$:
\begin{align}\label{physik_1}
    \oint_{C}\mathbf{H}\mathrm{d}\mathbf{s}= \int_S \mathbf{J}\mathrm{d}\mathbf{a}+\frac{\mathrm{d}}{\mathrm{d}t}\int_S \epsilon_0\mathbf{E} \mathrm{d}\mathbf{a},
\end{align}
where $C\subset\R^3$ is the contour of a surface $S\subset\R^3$, $\mathbf{E}\colon\R^3\times\mathcal{I}\to\R^3$ is the electric field, $\epsilon_0\in\R$ is the vacuum permittivity and we fix any time in $\mathcal{I}$. In the following, we assume that the second term on the right hand side describing the field around a changing electric field $\mathbf{E}$ is negligible, following the magnetoquasistatic approximation (MQS).  

We model the wires as lines, so let the surface area $A$ of $S$ go to $0$ and the field density to infinity. We get as current $I\colon\mathcal{I}\to\R$ 
\begin{align}\label{physik_2}
    I=\lim_{\substack{J\to\infty\\ A\to0}}\int_S\mathbf{J}\mathrm{d}\mathbf{a}.
\end{align}
Combining equations \ref{physik_1} and \ref{physik_2}
 we find that  the field around a wire is given by
 \begin{align}
     \oint_C \mathbf{H}\mathrm{d}s=I,
 \end{align}
using the MQS approximation. If $C$ is a circle with radius $r$, we can reformulate the integral on the left hand side to
\begin{align}
I=\oint_C\mathbf{H}\mathrm{d}s=\int_0^{2\pi}H_{\varphi}r\mathrm{d}\varphi.
\end{align}
Since the magnetic field density $H_\varphi$ does not depend on the angle $\varphi$, we find that
\begin{align}
    I=\int_0^{2\pi}H_\varphi r\mathrm{d}\varphi=H_{\varphi_0}2\pi r
\end{align}
for any $\varphi_0\in[0,2\pi]$.

Using the linear constitutive law, the  magnetic flux density $\mathbf{B}\colon\R^3\times\mathcal{I}\to\R^3$, can be written as $\mathbf{B}=\mu_0 \mathbf{H}$, with $\mu_0\in\R$ as magnetic vacuum permeability. This gives us
\begin{align}
    I=H_{\varphi_0}2\pi r=\frac{1}{\mu_0}B_{\varphi_0}2\pi r,
\end{align}
which is equivalent to
\begin{align}\label{physik_6}
    B_{\varphi_0}=\mu_0\frac{I}{2\pi r}l,
\end{align}
for all $\varphi_0\in[0,2\pi]$.

This equation allows us to describe the magnetic flux density around a single wire, which we need later.


\subsection{Modelling eddy current effects}
This subsection is based on the book \textit{Electromagnetic fields and energy}, by H. A. Haus and J. R. Melcher \cite{physics_book}.

In the following, we set $S$ as the surface of the titanium coating.

When a surface $S$ is exposed to a changing magnetic field, induced for example by a changing line current, the motion of the electrons on the surface creates a new current, called an eddy current, which itself induces a magnetic field. 

For the non-linear kicker, the changing magnetic field is caused by the changing wire current.      

We note that a magnetic field $\mathbf{B}$ induces an electric field $\mathbf{E}$. Their relation is given by
\begin{align}\label{physik_3}
    \oint_C\mathbf{E}\mathrm{d}\mathbf{s}=\frac{\mathrm{d}}{\mathrm{d}t}\int_S\mathbf{B}\mathrm{d}\mathbf{a}. 
\end{align}
We now focus on the currents induced in the electrically conductive sheet. According to Ohm's "law", the current induced in the sheet is given by
\begin{align}
    \mathbf{J}=\sigma \mathbf{E}
\end{align}
with the conductance $\sigma=\frac{1}{\rho}$ of the material and $\rho$ equal to its specific resistance.

Using Ampere's law, see \ref{physik_1}, we can write the surface current $K$ as the fields tangential to the surface, i.e.
\begin{align}
    -H_t^a+H_t^b=K, 
\end{align}
where $H_t^a$ is the magnetic field density generated by the wire and 
$H_t^b$ by the eddy current.

Since $K$ is the surface current and $J$ the current density, we get
\begin{align}\label{physik_4}
    K\equiv \Delta_m J=\Delta_m\sigma E,\quad\text{thus}\quad E=\frac{K}{\Delta_m\sigma},
\end{align}
where $\Delta_m$ is the thickness of the material.

Combining equations \ref{physik_3} and \ref{physik_4} we get
\begin{align}
    K\oint_C\frac{\mathrm{d}s}{\Delta_m(s)\sigma(s)}=\frac{\mathrm{d}}{\mathrm{d}t}\int_S\mathbf{B}\mathrm{d}\mathbf{a}.
\end{align}
Since the thickness $\Delta_m(s)$ and the conductivity $\sigma(s)$ are uniform we get
\begin{align}\label{physik_5}
    \frac{KP}{\Delta_m\sigma}=\frac{\mathrm{d}}{\mathrm{d}t}\int_S\mathbf{B}\mathrm{d}\mathbf{a},
\end{align}
where $P\in\R$ is the peripheral length of $C$.

We assume that the material is perfectly conducting, so $\sigma=\infty$ holds, thus the left side of the above equation is $0$.

We thus get
\begin{align}\label{physik_7}
0=\frac{\mathrm{d}}{\mathrm{d}t}\int_S\mathbf{B}\mathrm{d}\mathbf{a} =\int_S\frac{\mathrm{d}}{\mathrm{d}t}\mathbf{B}\mathrm{d}\mathbf{a},    
\end{align}
where the second equation follows the Leibniz integral rule.

Since equation \ref{physik_7} holds for any surface $S$,
the integrand
\begin{align}
    \frac{\mathrm{d}}{\mathrm{d}t}\mathbf{B}\mathrm{d}\mathbf{a}\doteq 0.
\end{align}  
Therefore, at the surface boundary any time varying magnetic inductance $B$ perpendicular to the surface must be 0. 

The line current of a perfectly conducting sheet can be modelled by placing a "mirror line current" opposite to the line current (similar to the construction of the light path of a mirror in geometrical optics).

The resulting magnetic field is then perpendicular to the surface vector. 


Then, we calculate the resulting field, which is shown in figure \ref{fig:plot_FEM_vs_approx} in the following subsection.

The following figure shows the position of all the wires and where the "mirror line currents" have been placed.
\begin{figure}[H]
    \centering
    \includegraphics[width=10cm]{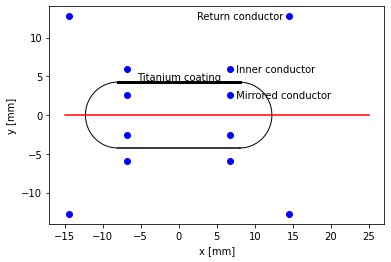}
    \caption{Position of the wires and "mirror line currents" in the storage ring cross-section.}
    \label{fig:nlk_positions}
\end{figure}
In the figure \ref{fig:nlk_positions}, we can see the positions of the wires and the "mirror line currents" in the cross-section of the storage ring. Note that the titanium coating is both on the top and bottom of the storage ring. The "mirror line currents" have been placed in the positions shown, to model the magnetic field being orthogonal to the titanium surface. In the real non-linear kicker there are no "mirror line currents".

\subsection{Modelling the Non-Linear Kicker}
We model the non-linear kicker field as a 2-dimensional field, since the effects on the beam are modelled as an infinitely thin kick. The kicker consists of 4 wires which are activated by a short pulse. A thin titanium coating is placed on the surface.

This arrangement creates a magnetic field, which we model in the following way:
\begin{itemize}
    \item The field generated by the wires is modelled using equation \ref{physik_6}.
    \item The effect of the thin titanium coating is modelled in the following manner:
\begin{itemize}
    \item The magnetic field along the symmetry plane was calculated in \cite{nlk_paper} using an FEM model. Further, the magnetic field was measured on the real device.
    \item The position of the mirror image currents is derived by mirroring the current close to the layer, as one would do for a perfectly conducting layer.   
    \item But because the titanium layer is 10$\mu$m thick, the assumption of perfect conductivity does not hold, so it expels only a fraction of the field
    \item We model this by adjusting the mirror current fraction to minimise the deviation from the field calculated by the FEM methods.
\end{itemize}
\end{itemize}

The following figure shows the magnetic field measured by the FEM model and the approximation we used for our studies.
\begin{figure}[H]
    \centering
    \includegraphics[width=10cm]{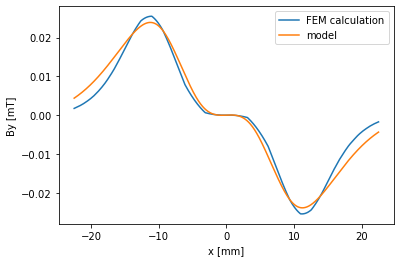}
    \caption{On the x-axis the position relative to the centre of the cross-section of the storage ring can be seen, on the y-axis the  magnetic field measured by the FEM model and the approximation we used for our studies.}
    \label{fig:plot_FEM_vs_approx}
\end{figure}
We can see that our model does not fit the FEM calculation perfectly. But for x-values between -15mm and +15mm the approximation is close enough for our further analysis.

\subsection{Thin Lens Approximation}
This subsection should give an intuitive idea of why we can model the non-linear kicker field as an infinitely thin element. 

First, note that magnetic fields deflect electric charges $q$ by 
\begin{align}
    \mathbf{F}_L=q\mathbf{v}\times\mathbf{B},
\end{align}
where $\mathbf{v}$ is the velocity and the force $\mathbf{F}_L$ is called the Lorentz-force.

Since the electron velocity in BESSY II is highly relativistic, i.e. very close to the speed of light, the electrons "see" the non-linear kicker as a rather short element, in particular due to the relativistic space contraction effect. Therefore, we model the non-linear kicker as an infinitely thin lens, and it only affect the direction of the electrons path. The non-linear kicker changes the electron impulse $p$ by 
\begin{align}
    \Delta p_x=-L\cdot B_y\quad\Delta p_y=-L\cdot B_x.
\end{align}
Properly,  beam dynamics can be treated using Hamiltonian dynamics, see for example \cite{hamiltonian_paper}. These analyses show that one must take into account the effect of the derivative of the magnetic field.

Furthermore, the path of the electrons through a device are modelled by a sequence of "drift-kicks", i.e. a sequence of drifting/flying for a finite length through a device and then adding the effect of the magnetic fields to the  direction of flight. For a short device this can be reduced to a single kick and drift cycle.

\clearpage
\section{Round-to-round behaviour}\label{sec:r2rbehav}
As described at the end of the last section, the path of electrons through a device can be modelled by a sequence of "drift-kicks", see \cite{thscsi_paper}. This can also be done for the whole storage ring of BESSY II, which consists of a large number of magnets and drift sections. 

The following figure shows the typical behaviour of newly injected electrons in the first few rounds:
\begin{figure}[H]
    \centering
    \includegraphics[width=10cm]{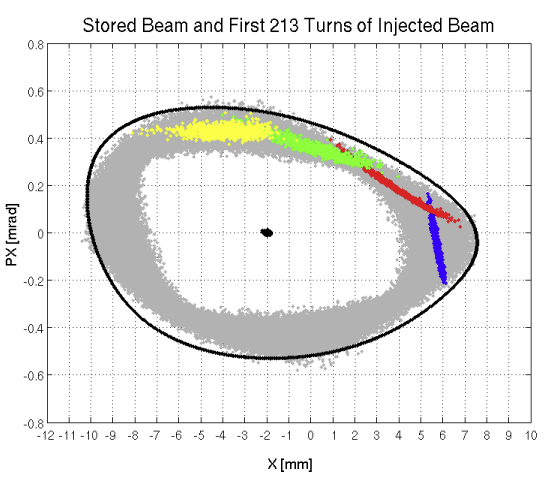}
    \caption{Example of phase space transition at the NLK for storage ring Sirius (in Brazil). Blue first round, red second, green third and yellow fourth round. The beam for next 210 rounds is shown in grey. Original plot from Lin Liu et al. \cite{Liu2016InjectionDF}. Image was mirrored, so that the injected is placed on the right side. For this storage ring the injection happens at around 6mm, in our case at 15mm}
    \label{fig:stored beam}
\end{figure}
Note that in figure \ref{fig:stored beam}, the x-axis shows the x-dimensional position of the electrons with respect to the centre of the beam line in mm. On the y-axis the direction of flight is given in mrad. As we are only kicking in x dimension, these values are sufficient to consider.

Each dot represents an electron and each coloured cloud consists of 1000 electrons. The grey dots represent the electrons in the next 210 turns. We observe that they follow a phase space transition path.

Since the path of an electron can be calculated from the sequence of "drift-kicks" we assume that there exists a function $\mathcal{T}\colon[-1,1]\times\R^2\to\R^2$, i.e. for all non-linear kicker strengths $a\in[-1,1]$ and $(x,p_x)\in\R^2$ values of an electron at an beam monitor close to the septum-sheet, which will be explained below in subsection \ref{subsec:without_kick}, the vector 
\[
(x',p_x')=\mathcal{T}(a,x,p_x)\in\R^2,
\]
gives us the x-position $x'$ and the direction of flight value $p_x'$ of the electron in the next round at the septum-sheet. For now, we assume that the round-to-round behaviour $\mathcal{T}$ is deterministic. In the model description section \ref{sec:model_description} we  add noise to this round-to-round behaviour, which can be induced by measurement errors and imperfect magnetic fields.

We also assume that functions 
\[
\mathcal{T}^\mu\colon[-1,1]\times\R^2\times\R^{2,2}\to\R^2\quad\text{and}\quad \mathcal{T}^\Sigma\colon[-1,1]\times\R^2\times\R^{2,2}\to\R^{2,2}
\]
exists, which allows the prediction of the distribution of the electrons in the next round, given the kicker strength $a$, the $x$, $p_x$ mean and covariance of the normally distributed electrons at the septum-sheet.

\section{Intuition}
In the last section we have seen how to calculate the round-to-round behaviour using "drift-kicks". In this section we present a first analysis of the injection problem using a simulation of BESSY II, that was made using the "drift-kicks". First, we analyse the number of rounds before the electrons collide with the septum sheet. Second, we investigate how to use the non-linear kicker optimally.

\subsection{Without Kick}\label{subsec:without_kick}
In this subsection, we investigate how many rounds the electrons survive without being kicked by the non-linear kicker.

Note that in BESSY II, there is a barrier in the beam line which causes the loss of electrons. The following figure \ref{fig:septum} shows a visualisation.
\begin{figure}[H]
    \centering
    \includegraphics[width=7cm]{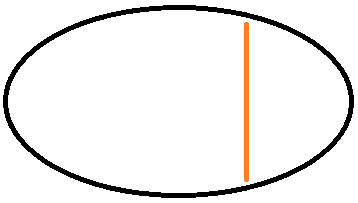}
    \caption{Visualisation of the cross-section of the beam line and a barrier, called the septum sheet. In the first round, the newly injected electrons are to the right of the septum-sheet. In the next rounds they are located to the left of the septum-sheet. The stored beam is located in the centre of the beam line.}
    \label{fig:septum}
\end{figure}
The stored beam is located in the centre of the cross section of the beam line, as seen in figure \ref{fig:septum}. The barrier shown is a magnet, called the septum-sheet. It plays an important role in the 4-kicker bump, seen in figure \ref{fig:4kicker_bump}, as it merges the injected electrons with the stored electrons. It is also important for the non-linear kicker injection, as it pulls the electrons towards the centre of the beam line. The position of the septum-sheet, in comparison to the non-linear kicker, is shown in figure \ref{fig:bessy2}.

In the first round, where the electrons are injected, the electrons are on the right side of the septum-sheet. In the following rounds, the location of the electrons at the septum-sheet must be on the left side of the septum-sheet. Otherwise, the electrons have crashed into the septum-sheet and are considered lost.

As the electrons follow a phase transition path, visualised in figure \ref{fig:stored beam}, they will eventually crash into the septum-sheet, so we need to influence them with the non-linear kicker.

To visualise the amount of rounds the electrons survive without the non-linear kicker, we have created a plot, that shows for $x,p_x$ values of single electrons at the septum-sheet how many rounds they survive without any influence from the non-linear kicker. We further assume that there is no noise in the round-to-round behaviour. The results are given in the following figure \ref{fig:without_kick}.
\begin{figure}[H]
    \centering
    \includegraphics[width=13cm]{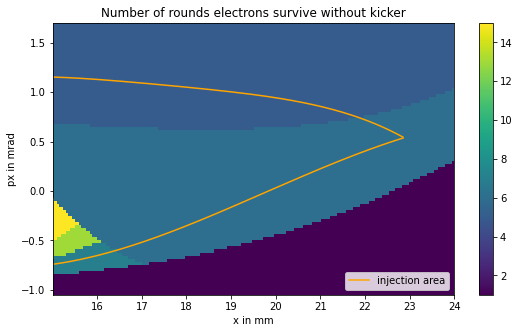}
    \caption{Plot showing the amount of rounds newly injected electrons survive without kick. The area within the orange lines shows the injection area selected for the studies presented here.}
    \label{fig:without_kick}
\end{figure}
On the x-axis, we observe the x-position of the electrons with respect to the centre of the beam line. Note that these x-positions are $>15\mathrm{mm}$ because the septum-sheet is positioned at $15\mathrm{mm}$ and we focus here on newly injected electrons. On the y-axis, we can see the direction of flight of the electrons in $\mathrm{mrad}$. The direction of flight value gives in $\mathrm{mrad}$ in which angle the electrons are moving left/right as they fly. The colour in the middle (at $x=19\mathrm{mm}$, $\mathrm{px}=0.0\mathrm{mrad}$) represents 6 rounds, above this area we have 4 rounds, below a single round and the 3 small areas represent from bottom to top: 7,13 and 15 rounds. 

Within this plot we highlighted an area which we call injection area for further studies. It was selected due to boundary conditions of the BESSY II machine. The area is described as the region between 2 polynomial functions. For our injection problem, we assume that the mean values of the electron distribution at the beginning of the injection are within this area.

In particular, the plot \ref{fig:without_kick} shows that we do not need to kick the electrons in the very first round. Instead, there are at least five rounds in the injection area that we could wait for before taking action. In the following subsection, we provide an understanding of when and how the electrons are kicked optimally. 

\subsection{Optimal Kicker Usage}\label{subsec:optimal_kicker_usage}
In this subsection, we give an introduction to when and how to use the non-linear kicker optimally. For now, we continue our analysis on a deterministic system. First, we state a necessary condition that electrons must fulfil to be successfully kicked and then show how to find the optimal kicker strength.

The necessary condition for a single electron is:
\begin{quote}
\textit{The x-position of the electron at the non-linear kick must be between $-15\mathrm{mm}$ and $+15\mathrm{mm}$.}
\end{quote}
Suppose an electron does not satisfy this condition. Since the electrons follow a phase space transition, as shown in the figure \ref{fig:stored beam}, no matter much how we change the $\mathrm{px}$ value of the electron, after a few rounds it reaches the $15\mathrm{mm}$ mark at the non-linear kicker, thus also at the septum-sheet and therefore they crash into it. For this reason, the condition must be fulfilled.

Now assume that at the non-linear kicker the x-position of an electron is within $(-15\mathrm{mm},15\mathrm{mm})$. If we could now use the non-linear kicker to change the direction of flight value $\mathrm{px}$ of the electron to approximately $0$, the electron  survives the injection. As in the phase space transition a $\mathrm{px}$ value of $0$ indicates that the electron is at the highest/lowest x-value it will get. For this reason, all further x-values of the electron at the non-linear kicker, including the septum sheet will not exceed the value $15\mathrm{mm}$ and therefore will not crash into the septum-sheet. 

Now, we show how the linearity of the magnetic field strength with respect to the kicker strength is used to calculate the optimal kicker strength. Let $(x,p_x)$ be the information of an electron at the non-linear kicker. Assume that $x\in(-15\mathrm{mm},15\mathrm{mm})$. Let $\mathcal{B}(\alpha,x)$ be the function that maps the kicker strength $\alpha\in[-1,1]$ and $x$ position of an electron to the change in $p_x$ value, i.e. $p_x' = p_x+\mathcal{B}(\alpha,x)$, where $p_x'$ is the $p_x$ value of the electron immediately after the non-linear kicker. Note that $\mathcal{B}$ is linear in $\alpha$. The optimal kicker strength $\alpha^*$ is calculated as
\[
p_x' \overset{!}{=} 0 = p_x+\mathcal{B}(\alpha^*,x) \quad\Leftrightarrow\quad \alpha^*\mathcal{B}(1,x) = -p_x \quad\Leftrightarrow\quad \alpha^* = -\frac{p_x}{\mathcal{B}(1,x)}.
\]
With this equation, we have found an easy way to calculate the optimal kicker strength $\alpha^*$. Note that for small values of $\mathcal{B}(1,x)$ or large values of $p_x$, the calculated strength $\alpha^*$ may not be realisable, as it is no longer within $[-1,1]$. In this case, it is not possible to kick the electrons optimally.

The following plots show the $x,p_x$ values at the septum-sheet in the very first round, when the values meet the condition (not the white regions) and when the kick was successful (yellow (1000) region). In all other regions (not white or yellow), the kick was not successful. There are several reasons for this, but the most common one is that the simulation has difficulties with some borderline cases.

\begin{figure}[H]
    \centering
    \includegraphics[width=9cm]{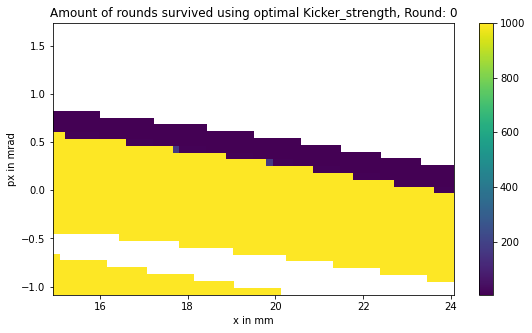}
    \caption{Regions where electrons can be optimally kicked in round 0}
    \label{fig:round0}
\end{figure}
\begin{figure}[H]
    \centering
    \includegraphics[width=9cm]{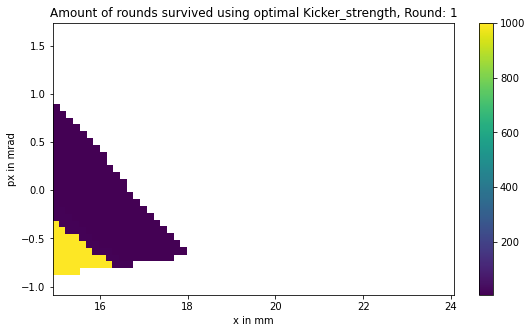}
    \caption{Regions where electrons can be optimally kicked in round 1}
    \label{fig:round1}
\end{figure}
\begin{figure}[H]
    \centering
    \includegraphics[width=9cm]{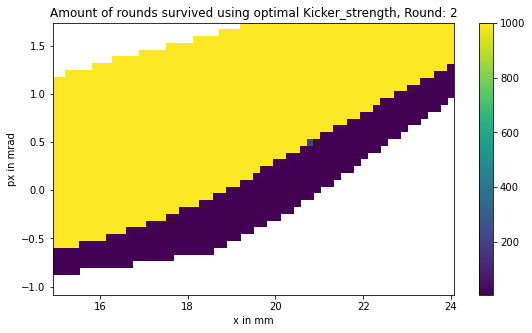}
    \caption{Regions where electrons can be optimally kicked in round 2}
    \label{fig:round2}
\end{figure}
\begin{figure}[H]
    \centering
    \includegraphics[width=9cm]{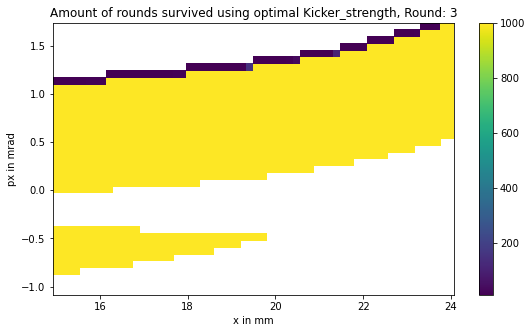}
    \caption{Regions where electrons can be optimally kicked in round 3}
    \label{fig:round3}
\end{figure}
\begin{figure}[H]
    \centering
    \includegraphics[width=9cm]{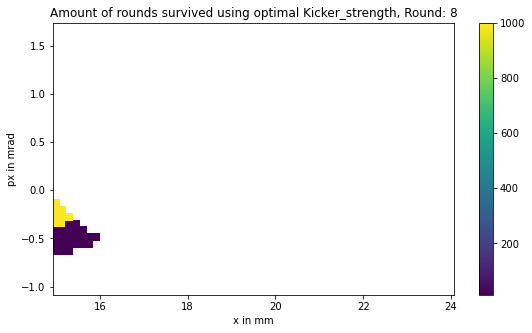}
    \caption{Regions where electrons can be optimally kicked in round 8}
    \label{fig:round8}
\end{figure}
These figures show for which initial values the electrons are successfully kicked, depending on which round the non-linear kicker was activated. We observe that it is not optimal to kick only in round 0 or in any other round, as different rounds cover different regions. Interestingly, using the non-linear kicker in round 1 is significantly worse compared to the rounds 0,2,3. The plots for all other rounds can be seen in the programming reference, referenced at the beginning of the chapter \ref{chap:calculations}.

    

The next plots show the area in which electrons are successfully kicked, together with the optimal kicker strength for the whole beam line. The x and px values shown are measured at the septum-sheet.  All regions in figure \ref{fig:optimal_area}, with values between 0 and 1000, meet the necessary condition for a successful kick. For the yellow area (value: 1000) the kick is successful. The values -200 in figure \ref{fig:optimal_area} and -1 in figure \ref{fig:optimal_strength} are just background colours to highlight the other regions.
\begin{figure}[H]
    \centering
    \includegraphics[width=9cm]{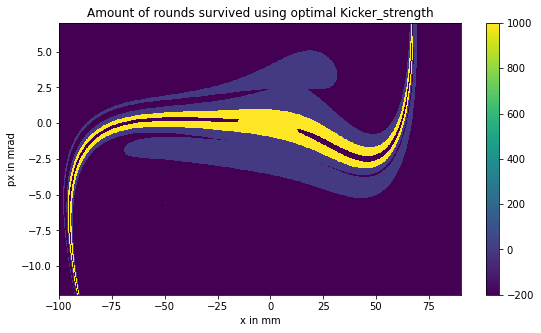}
    \caption{Given x,px values at the septum sheet, the plot shows for which of them the injection can be done (score: 1000 rounds).}
    \label{fig:optimal_area}
\end{figure}
\begin{figure}[H]
    \centering
    \includegraphics[width=9cm]{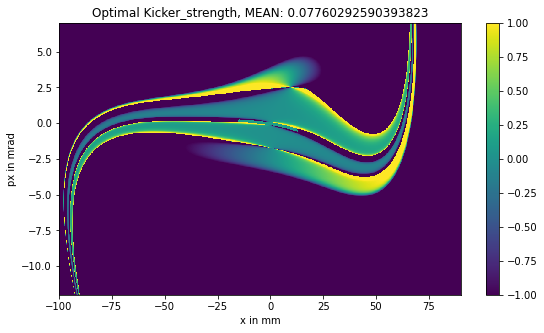}
    \caption{Corresponding optimal kicker strength, see figure \ref{fig:optimal_area}.}
    \label{fig:optimal_strength}
\end{figure}

We intend to answer the question of how the injection efficiency changes for different kicker strengths. For this experiment, we added noise to the round-to-round behaviour, that is for both x and px we added zero mean normal distributed noise, where the noise for the x values had std: $6.5\cdot 10^{-6}\mathrm{m}$ and for px std: $5.2\cdot 10^{-6}\mathrm{m}$). For an explanation of why we chose these noise values see subsection \ref{subsec:creating_stochas}. No specific noise was added to the kicker strength. The following figure shows the result
\begin{figure}[H]
    \centering
    \includegraphics[width=9cm]{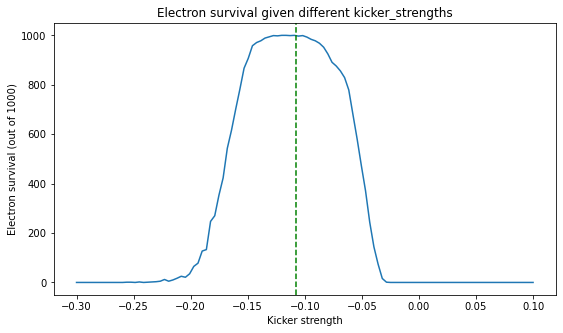}
    \caption{For each kicker strength, we run 1000 simulations with the same parameters (x-value $16.4\mathrm{mm}$, px $0.0\mathrm{mm}$, activated NLK in round 4 with given kicker strength). The y-value shows in how many of these 1000 experiments the electron survived. The green dashed line shows the optimal kicker strength.}
    \label{fig:kicker_strength_behav}
\end{figure}

In figure \ref{fig:kicker_strength_behav}, we observe that the injection efficiency is highest around the optimal kicker strength with a small plateau. Outside the plateau we see a very steep drop in efficiency.

\clearpage
\subsection{Effect on stored beam}
The following figure shows the absolute effect in the direction of flight (in mrad) for different x/px values at the septum sheet close to the centre, where the stored beam is.

\begin{figure}[H]
    \centering
    \includegraphics[width=9cm]{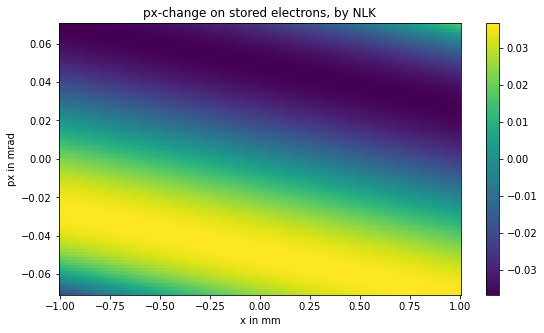}
    \caption{Plot showing the difference in direction of flight, created by the non-linear kicker with strength 1. Each x/px value shows the value at the septum sheet.}
    \label{fig:effect_on_stored_electrons}
\end{figure}
In the figure \ref{fig:effect_on_stored_electrons} we observe the effect of the non-linear kicker, activated with the highest possible strength 1, on the stored electrons which lie in the range of -1mm to 1mm. We observe that the change in direction of flight is in the range of $[-0.04\mathrm{mrad},0.04\mathrm{mrad}]$, which is not significantly higher than the previous values.  

As the non-linear kicker is typically not used at full strength, the effect is even less, see figure \ref{fig:optimal_strength}.

\chapter{Problem Modelling}\label{chap:problem_modelling}
In this chapter, we examine how to model the non-linear kicker injection as a Markov Decision Process.

\section{Problem Description}
%
%
In this section, we describe exactly what the non-linear kicker injection problem is.

The goal of the injection problem is to maximise the number of electrons that survive the injection. We say that an electron has survived the injection if it does not crash into the septum-sheet in the first 1000 rounds. 

To achieve this goal, the use of the non-linear kicker must be optimised. The activation round of the kicker and the kicker-strength are to choose. Underlying, there is the constraint that the non-linear kicker can only be activated once, as we do not have enough time during the injection to recharge the non-linear kicker.

At the beginning of the injection process the electrons have a normal distributed shape. They maintain this normal distributed shape in each round, unless some of the electrons crash into the septum-sheet. In this case, the electrons follow a truncated normal distribution.

In the next section, we describe this problem as a Markov Decision Process. We propose four descriptions. 

In the first 3 descriptions we consider Markov Decision Models with a planning horizon of $N=1000$ and assume that we are able to adjust the non-linear kicker as we observe the electrons at the septum-sheet. Although this assumption is unrealistic as the electrons travel at almost the speed of light and the septum-sheet and non-linear kicker are positioned close to each other, we want to analyse how reinforcement learning algorithms deal with such environments. The first environment shows the model environment for truncated normal distributions. In the second, we model the injection for a single electron and then, in the third model, for 1000 electrons.

The last description is the most realistic to implement. In this fourth model, we require that the activation round of the non-linear kicker together with the kicker strength be predicted after the electrons are observed at the septum sheet in the very first round. For details of how this model is implemented at BESSY II see remark \ref{rem:one_step_MDP}.

\section{Model}\label{sec:model_description}

\begin{defn}[truncated normal distributions]\label{def:trunc_normal}
    Let $\mu\in\R^2$ be the mean vector and $\Sigma\in\R^{2,2}$ be the covariance matrix of a 2-dimensional normal distribution with density function
    \[
    f(x;\mu,\Sigma))=\frac{1}{\sqrt{(2\pi)^2\det(\Sigma)}}\exp{\left(-\dfrac12(x-\mu)^T\Sigma^{-1}(x-\mu)\right)}.
    \]
    We truncate this normal distribution by dividing the $\R^2$ space into 2 half spaces and setting the density on one half space to zero. To do this, we need a vector $y\in\R^2$ and direction $z\in\R^2$. Further, we truncate the density function $f$ by:
    \[
    f^1(x;\mu,\Sigma,y,z)=\left\{\begin{array}{rl}
        f(x;\mu,\Sigma,y,z), & \text{ if } \langle x-y,z\rangle > 0  \\
        0, & \text{ else}
    \end{array}\right.
    \]
    where $\langle.\,,.\rangle$ is the standard scalar product on $\R^2$.
    
    Note that the truncated function $f^1$ is no longer a density function, since the integral over $\R^2$ is not 1. We intentionally do not normalise it, to calculate in a straightforward manner how much of the initial distribution remains after the truncation, by taking the integral over the remaining density.

    In the following, we consider $n$ times truncated normal distributions, where for the sequences of vectors $(y_k)_{k=1}^m\coloneqq(y_k)_{k=1,\dots,m},(z_k)_{k=1}^m\coloneqq(z_k)_{k=1,\dots,m}\subset \R^2$ and all $m=1,\dots,n$, we have
    \[
    f^m(x;\mu,\Sigma,(y_k)_{k=1}^m,(z_k)_{k=1}^m)=\left\{\begin{array}{rl}
        f^{m-1}(x;\mu,\Sigma,(y_k)_{k=1}^{m-1},(z_k)_{k=1}^{m-1}), & \text{ if } \langle x-y_m,z_m\rangle > 0  \\
        0, & \text{ else}
    \end{array}\right.
    \]
    with $f^0=f$.
\end{defn} 
\clearpage
Now, we provide the definition of the Markov Decision Model, which shows how the non-linear kicker can be used. After the definition, the details are explained in remark \ref{rem:mdp_explanation}.

We consider the Markov Decision Model with planning horizon $N=1000$ with
\begin{itemize}
        \item state space
        \[
        E\coloneqq \bigcup_{m=0}^{999}\{(\R^2\times \R^{2,2}\times\{0,1\})^{m+1}\times\{\dagger,\dagger,\dagger\}^{999-m}\},
        \] 
        where each element $((\mu_0,\Sigma_0,k_0),\dots,(\mu_m,\Sigma_m,k_m),\{\dagger,\dagger,\dagger\}^{999-m})\in E$ consists of mean vector $\mu_t$, covariance matrix $\Sigma_t$, the information if the kicker has already been used $k_t$  for all rounds $t=0,\dots,m$. For the remaining rounds, to indicate that there is no information yet, we denote them with $\dagger$. Further, we equip the state space $E$ with the Borel $\sigma$-algebra, i.e. $\mathcal{E}\coloneqq\mathcal{B}(E)$. 
        \item Action space $A\coloneqq[-1,1]$ with Borel $\sigma$-algebra $\mathcal{A}\coloneqq\mathcal{B}([-1,1])$
        \item To define the stochastic transition kernel $p$, consider the following transitions between time steps. Let $\mathcal{T}^\mu,\mathcal{T}^\Sigma$ be the round-to-round transition functions defined in section \ref{sec:r2rbehav}. For state $s_t=((\mu_0, \Sigma_0, k_0), \dots,(\mu_t, \Sigma_t, k_t), \{\dagger,\dagger,\dagger\}^{999-t})\in E$, for a $t\in\{0,\dots,999\}$ and action $a_t\in A$ the state at time step $t+1$ follows the rules:
        \begin{itemize}
                \item $\tilde a_t = (a_t+\varepsilon_t^a)\mathbbm{1}_{\{a_t\neq 0\}}$
                \item $\mu_{t+1}=\mathcal{T}^\mu(\tilde a_t(1-k_t),\mu_t,\Sigma_t)+\varepsilon_t^\mu$
                \item $\Sigma_{t+1}=\mathcal{T}^\Sigma(\tilde a_t(1-k_t),\mu_t,\Sigma_t)+\varepsilon_t^\Sigma$
                \item $k_{t+1}=\max\{k_t,\mathbbm{1}_{\{a_t\neq 0\}}\}$
            \end{itemize}
        where $\varepsilon_t^a,\varepsilon_t^\mu$ and $\varepsilon_t^\Sigma$ are normal distributed independent noise in $\R,\R^2$ and $\R^{2,2}$ respectively. We then set the state $s_{t+1}$ equal to
        \[
        s_{t+1}=((\mu_1, \Sigma_1, k_1), \dots,(\mu_t, \Sigma_t, k_t),(\mu_{t+1}, \Sigma_{t+1}, k_{t+1}), \{\dagger,\dagger,\dagger\}^{999-t-1})\in E
        \]
        We consider the process $(s_t)_{t=0,\dots,100}$ as stochastic process on the measurable space $(E,\mathcal{B}(E))$. 

        The stochastic transition kernel is set as the natural kernel given the transition rules. 

        \item the initial distribution $p_0$ is set as follows way: All $s_0$ have structure
        \[
        s_0=((\mu_0,\Sigma_0,0),\{\dagger,\dagger,\dagger\}^{999}),
        \]
        where $\mu_0,\Sigma_0$ are uniformly distributed from an area, defined in section \ref{subsec:creating_stochas}.
        \item the running reward function is set to 0, so $r\colon (x,a)\mapsto0$
        \item discount factor is set to $\beta=1$
        \item terminal reward function $g\colon E\to\R$ is defined as
        \[
        g(s)=\int_{\R^2}\left(\mathcal{T}^g(s)\right)(x) \mathrm{d}x,
        \]
        where $\mathcal{T}^g(s)$ is a 1001-times truncated density of a normal distribution. The function $\mathcal{T}^g(s)$ is defined in the following remark \ref{rem:mdp_explanation}. 
    \end{itemize}

\begin{remark}\label{rem:mdp_explanation}
   We have chosen to define the state space $E$ as
   \[
        E\coloneqq \bigcup_{m=0}^{999}\{(\R^2\times \R^{2,2}\times\{0,1\})^{m+1}\times\{\dagger,\dagger,\dagger\}^{999-m}\},   
   \]
   to capture the history of the mean and sigma values, as this is important in calculating the terminal reward $g$, and with the information $k_t\in\{0,1\}$, we know whether the kicker was used or not. We added the $\dagger$ to keep track of which rounds we have not yet visited.  

   We set the action space $A$ to $A=[-1,1]$ because the intensity of the kick can be varied, and negative values are possible if the wires can be pulsed with the opposite current. 

   For the transitions we first added noise to the chosen action $a_t$, because there is also noise at the chosen kicker strength.  
   With the transition function $\mathcal{T}^\mu,\mathcal{T}^\Sigma$ from section \ref{sec:r2rbehav}, we can calculate $\mu_{t+1},\Sigma_{t+1}$. We add noise to both, for some uncertainty. Note that the action is set to $\tilde a_t(1-k_t)$ because if the kicker has not been activated, this value will be $\tilde a_t$, and if the kicker has already been activated, i.e. $k_t=1$, the value will be $0$. For the update of $k_t$ we used the formula $k_{t+1}=\max\{k_t,\mathbbm{1}_{\{a_t\neq 0\}}\}$, such that if the kicker has already been activated, i.e. $k_t=1$, the value $k_{t+1}$ is set to $1$ and if the kicker was activated at time $t$, i.e. $k_t=0$ and $a_t\neq0$, the value is also updated to 1.   
   The running reward function $r$ is set to 0, since the goal is to maximise the number of electrons that survive round $1000$, and thus only round $1000$ is used for the reward. The discount factor $\beta$ is set in a similar way to $1$.

    We move on to define the terminal reward function $g\colon E\to\R$. The idea is that in each round a fraction of the electrons crash into the septum sheet. We model the survived electrons using truncated densities of the normal distribution defined in definition \ref{def:trunc_normal}.

    Let $s_t=((\mu_0, \Sigma_0, k_0), \dots,(\mu_t, \Sigma_t, k_t), \{\dagger,\dagger,\dagger\}^{999-t})\in E$ be arbitrary. We iteratively define the $t+1$ times truncated density of a normal distribution $\mathcal{T}^g(s_t)\colon \R^2\to\R$.

    In definition \ref{def:trunc_normal}, we denoted the density of a normal distribution with mean $\mu$ and covariance matrix $\Sigma$ by $f(x;\mu,\Sigma)$ and the truncation was given by 
    \[
    f^1(x;\mu,\Sigma,y,z)=\left\{\begin{array}{rl}
        f(x;\mu,\Sigma,y,z), & \text{ if } \langle x-y,z\rangle > 0  \\
        0, & \text{ else}
    \end{array}\right..
    \]
    In the first round, the truncation is given as $f^1(x;\mu_0,\Sigma_0,(15,0),(1,0))$. We use $y_1=(15,0)\in\R^2$, because the septum-sheet is at the x-value $15\mathrm{mm}$ and $z_1=(1,0)$, as the electrons with x-value lower or equal the value of $15\mathrm{mm}$ are truncated, since in the first round the electrons are outside (to the right) of the septum-sheet.  

    To propagate truncations from one round to the next, we assume that functions 
    \[
    y'=\mathcal{T}^y(\mu_t,\Sigma_t,\mu_{t+1},\Sigma_{t+1},y) \quad\text{and}\quad
    z'=\mathcal{T}^z(\mu_t,\Sigma_t,\mu_{t+1},\Sigma_{t+1},z)
    \] 
    exist, that are able to propagate the truncations. This can be achieved by using the round-to-round behaviour function for single electrons $\mathcal{T}$, where one could follow the edge of the truncation. Note that for the function $\mathcal{T}
    ^y$ and $\mathcal{T}^z$, the information of the mean and covariance at time-points $t$ and $t+1$ are needed, in order to take the noise into account. For well-posedness, we set $\mu_{1002}\coloneqq\mathcal{T}^\mu(0,\mu_{1001})$, analogous for $\Sigma_{1002}$. We do not go into more detail, as these functions are not needed in our implementation. 

    Instead, we continue to iteratively define the function $\mathcal{T}^g$. Suppose we have already truncated the normal distribution $m$-times with vectors $(y_k)_{k=1}^m,(z_k)_{k=1}^m$.
    Now we propagate them to the next round with
    \begin{align}
    y_k'\coloneqq    \mathcal{T}^y(\mu_t,\Sigma_t,\mu_{t+1},\Sigma_{t+1},y_k) \quad\text{and}\quad
    z_k'\coloneqq\mathcal{T}^z(\mu_t,\Sigma_t,\mu_{t+1},\Sigma_{t+1},z_k).
    \end{align}
     Additionally, we add the truncation from the new round and set
     \[
     (y_1',\dots,y_m',(15,0))\quad\text{and}\quad(z_1',\dots,z_m',(-1,0))
     \]
    as the truncations for the next round. Note that the new z-value is set to $(-1,0)$, as now all electrons with an x-value greater or equal $15\mathrm{mm}$ are truncated.

    We continue with these truncations up to round $t$. Let $(y_k)_{k=1}^t,(z_k)_{k=1}^t$ be the final truncations. We then define the function $\mathcal{T}^g(s_t)$ as
    \[
    \mathcal{T}^g(s_t)\coloneqq  f^t(\cdot;\mu_t,\Sigma_t,(y_k)_{k=1}^t,(z_k)_{k=1}^t).
    \]
    Now, we use use this function to determine the percentage of electrons that survived the 1000 rounds by calculating
    \[
    g(s_t)=\int_{\R^2}\left(\mathcal{T}^g(s_t)\right)(x) \mathrm{d}x.
    \]
    Note that $\mathcal{T}^g$ is integrable because only finite truncations have been made. Also, the function $g$ is measurable since $\mathcal{T}^g(s_t)$ is measurable. 
\end{remark}
\vspace{.8cm}
This ends the part on the Markov Decision Process (MDP). The explained MDP models the problem of deciding when to activate the non-linear kicker and with what strength, under the unrealistic assumption that we have enough time to process the information and activate the non-linear kicker. The problem with this modelling is the high dimensional state space and the computational cost of dealing with distributions. 

Instead, we introduce slightly modified MDPs and Partially Observable Markov Decision Processes (POMDPs), which describe the problem of optimally using the non-linear kicker in different scenarios and levels of difficulty. In the first model, we have limited ourselves to a single electron. After that we see a POMDP for 1000 electrons and at the end a one-step POMDP for 1000 electrons.  

\subsection{Single Electron Kick}\label{subsec:sing_elec}
For the single electron kick, we consider the following Markov Decision Model with planning horizon $N=1000$ with
\begin{itemize}
        \item state space
        \[
        E\coloneqq \R^2\times\{0,1\},\times\{0,\dots,1000\}\times\{0,1\},
        \] 
        where each element $(x,p^x,k,r,i)\in E$ consists of the current  x-position $x$ and the direction of flight information $p^x$, if the kicker has already been activated $k$, the round number $r\in\{0,\dots,1000\}$ and the information $i$ whether the electron has hit the septum-sheet or not. Furthermore, we equip the state space $E$ with the Borel $\sigma$-algebra, i.e. $\mathcal{E}\coloneqq\mathcal{B}(E)$. 
        \item Action space $A\coloneqq[-1,1]$ with Borel $\sigma$-algebra $\mathcal{A}\coloneqq\mathcal{B}([-1,1])$
        \item To define the stochastic transition kernel $p$, consider the following transitions between time steps. Let $\mathcal{T}$ be the round-to-round transition function defined in section \ref{sec:r2rbehav}. For state $s_t=(x_t,p^x_t,k_t,r_t,i_t)$, for a $t\in\{0,\dots,999\}$ and action $a_t\in A$, the state at time step $t+1$ follows the rules:
        \begin{itemize}
                \item $\tilde a_t = (a_t+\varepsilon_t^a)\mathbbm{1}_{\{a_t\neq 0\}}$
                \item $(x_{t+1},p^x_{t+1})=\mathcal{T}(\tilde a_t(1-k_t),x_t,p^x_t)+(\varepsilon_t^x,\varepsilon_t^{p^x})$
                \item $k_{t+1}=\max\{k_t,\mathbbm{1}_{\{a_t\neq 0\}}\}$
                \item $r_{t+1}=r_t+1$
                \item $i_{t+1}=\max\{i_t,\mathbbm{1}_{\{x_{t+1}\geq 15\mathrm{mm}\}}\}$   
            \end{itemize}
        where $\varepsilon_t^a,\varepsilon_t^x$ and $\varepsilon_t^{p^x}$ are normal distributed independent noise in $\R$. We then set the state $s_{t+1}$ equal to
        \[
        s_{t+1}=(x_{t+1},p^x_{t+1},k_{t+1},r_{t+1},i_{t+1})\in E
        \]
        We consider the process $(s_t)_{t=0,\dots,1000}$ as a stochastic process on the measurable space $(E,\mathcal{B}(E))$. 

        The stochastic transition kernel is set as the natural kernel given the transition rules. 

        \item the initial distribution $p_0$ is set as follows: All $s_0$ have the structure
        \[
        s_0=(x_0,p^x_0,0,1,\mathbbm{1}_{\{x_0\leq 15\}}),
        \]
        where $x_0,p^x_0$ are uniformly distributed from an area defined in section \ref{subsec:without_kick}.
        \item the running reward function is set to 0, so $r\colon (x,a)\mapsto0$
        \item discount factor is set to $\beta=1$
        \item terminal reward function $g\colon E\to\R$ is defined as
        \[
        g((x_t,p^x_t,k_t,r_t,i_t))\coloneqq\mathbbm{1}_{\{i_{t}=0\}},
        \]
    \end{itemize}
\begin{remark}
    We have added the round number $r_t$ in the state space because the TD3 algorithm does not change the policy depending on the round. By adding the round number to the state space, it may depend on the round. Additionally, the information $i_t$ is added, whether the single electron has crashed into the septum-sheet or not. 

    The terminal reward is $1$ if the electron survived the 1000 rounds, otherwise it is $0$.
\end{remark}

\subsection{1000-Electron Kick}\label{subsec:1000_elec}
For the 1000 electron kick, we consider the following Partially Observable Markov Decision Model with planning horizon $N=1000$ with
\begin{itemize}
        \item state space
        \[
        E_X\times E_Y\coloneqq \big(\R^2\times\R^4\times\{0,1\},\times\{0,\dots,1000\}\big)\times \big((\R^2\times\{0,1\})^{1000}\times \{0,1\}\big),
        \] 
        where the observable part $(\mu,\mathcal{I},k,r)\in E_X$ consists of the empirical mean of the $x/p^x$-values $\mu\in\R^2$, a shape description of the electron cloud $\mathcal{I}\in\R^4$, the information whether the kicker has already been activated $k$ and a round number $r\in\{0,\dots,1000\}$. 

        The non-observable part $(((x_l,p^x_l),i_l)_{l=1}^{1000},k)\in E_Y$ consists of the x-position $x_l$, the direction of flight $p^x_l$ and the information $i_l$ whether the electron with number $l\in\{1,\dots,1000\}$ has crashed against the septum-sheet or not. The number $k$ describes whether the kicker has already been activated or not.
        \item Action space $A\coloneqq[-1,1]$ with Borel $\sigma$-algebra $\mathcal{A}\coloneqq\mathcal{B}([-1,1])$
        \item To define the stochastic transition kernel $p$, consider the following transitions of the non-observable space between time steps. Let $\mathcal{T}$ be the round-to-round transition function defined in section \ref{sec:r2rbehav}. For a state $s_t=(((x_l^t,p_{x,l}^t),i_l^t)_{l=1}^{1000},k_t)\in E_Y$, for a $t\in\{0,\dots,999\}$ and action $a_t\in A$ the state in time step $t+1$ follows the rules for all $l=1,\dots,100$:
        \begin{itemize}
                \item $\tilde a_t = (a_t+\varepsilon_t^a)\mathbbm{1}_{\{a_t\neq 0\}}$
                \item $(x^{t+1}_l,p_{x,l}^{t+1})=\mathcal{T}(\tilde a_t(1-k_t),x^t_l,p_{x,l}^t)+(\varepsilon_t^x,\varepsilon_t^{p^x})$
                \item $i^{t+1}_l=\max\{i^t_l,\mathbbm{1}_{\{x^{t+1}_l\geq 15\mathrm{mm}\}}\}$  
                \item $k_{t+1}=\max\{k_t,\mathbbm{1}_{\{a_t\neq 0\}}\}$
            \end{itemize}
        where $\varepsilon_t^a,\varepsilon_t^x$ and $\varepsilon_t^{p^x}$ are normal distributed independent noises in $\R$ and in particular $\varepsilon_t^x,\varepsilon_t^{p^x}$ are equal for all $l$. We then set the state $s_{t+1}$ equal to
        \[
        s_{t+1}=(((x_l^{t+1},p_{x,l}^{t+1}),i_l^{t+1})_{l=1}^{1000},k_{t+1})\in E_Y.
        \]
        The observable part is then deterministically given by
        \begin{itemize}
            \item Let $n\coloneqq\sum_l\mathbbm{1}_{\{i_l^{t+1}=0\}}$, the number of electrons that  have not crashed by time-point $t+1$, then we define
            \[
            \mu_{t+1}=\frac{1}{n}\sum^{1000}_{\substack{l=1\\ \text{with }i_l^{t+1}=0}}(x_l^{t+1},p_{x,l}^{t+1})\in\R^2
            \]
            \item Let $(H^z_{ij})_{i,j=0,\dots,50}\in[0,1000]$ be the histogram matrix around the mean $\mu_{t+1}$ with window lengths $z\in\R^2$. Let $\mathcal{H}_\theta\colon\R^{51,51}\to\R^4$ be a continuous function which may depend on parameters $\theta\in\R^{m}$, for some $m\in\N$. Then, define
            \[
            \mathcal{I}_{t+1}\coloneqq \mathcal{H}_\theta(H^z).
            \]
        \end{itemize}
        The observable part is then given by $o_{t+1}=(\mu_{t+1},\mathcal{I}_{t+1},k_{t+1},t+1)\in E_X$.
 
        The stochastic transition kernel is set as the natural kernel given the transition rules. 
        \item The initial distribution $p_0$ is set as follows: All $s_0\in E_Y$ have structure
        \[
        s_0=(((x_l,p_{x,l}),0)_{l=1}^{1000},0)\in E_Y
        \]
        where all $x_l,p_{x,l}$ are sampled from a normal distribution with mean uniformly sampled from the area defined in section \ref{subsec:without_kick} and the covariance matrix
        \[
        \Sigma = \begin{pmatrix}
            0.00112 & 0  \\
            0 & 6.29\cdot10^{-5}
                 \end{pmatrix}
        \]
        in meters and rad.
        
        An explanation of why this covariance matrix was chosen can be found in subsection \ref{subsec:creating_stochas}.
        \item the running reward function is set to 0, so $r\colon (x,a)\mapsto0$
        \item discount factor is set to $\beta=1$
        \item terminal reward function $g\colon E_X\times E_Y\to\R$ is defined as
        \[
        g(o_t,s_t)\coloneqq\sum_{l=1}^{1000}\mathbbm{1}_{\{i_l^{t}=0\}},
        \]
        for all $s_t=(((x_l^t,p_{x,l}^t),i_l^t)_{l=1}^{1000},k_t)\in E_Y$ and $o_t\in E_X$.
    \end{itemize}
\begin{remark}
    The function $H_\theta$ in the definition of $\mathcal{I}_{t+1}$ is used to describe what the distribution of the electrons looks like. There are a number of ways to do this. For example, one could calculate the covariance matrix or some of the stochastic moments of the distribution. In our calculations we trained a convolutional neural network, that was given the histogram matrix and returned a 4-dimensional vector, with the goal of maximising the reward. 
\end{remark}

\subsection{One-step 1000-Electron Kick}\label{subsec:1000_elec_1step}
Now, we slightly modify the 1000-electron kick to a one-step 1000-electron kick. We do this, by considering the following Partially Observable Markov Decision Model with planning horizon $N=1$ with
\begin{itemize}
        \item state space
        \[
        E_X\times E_Y\coloneqq \big(\R^2\times\R^4\big)\times \big((\R^2\times\{0,1\})^{1000}\big),
        \] 
        where the observable part $(\mu,\mathcal{I})\in E_X$ consists of the empirical mean of the $x/p^x$ values $\mu\in\R^2$, a shape description of the electron cloud $\mathcal{I}\in\R^4$.

        The non-observable part $((x_l,p^x_l),i_l)_{l=1}^{1000}\in E_Y$ consists of the x-position $x_l$, the direction of flight $p^x_l$ and the information $i_l$ whether the electron with number $l\in\{1,\dots,1000\}$, has crashed against the septum-sheet or not. 
        \item Action space $A\coloneqq[-1,1]\times[-1,1]$ with Borel $\sigma$-algebra $\mathcal{A}\coloneqq\mathcal{B}(A)$.
        For an action $a=(a^1,a^2)$ the number $a_1\in[-1,1]$, is used to determine in which round the electrons are kicked. Let $\tilde a^1\coloneqq (a^1+1)\cdot 5$, then the round $r$ in which the electrons are kicked is given by
        \[
        r=\underline{\tilde a^1}+\varepsilon
        \]
        where $\varepsilon\sim\mathrm{Ber}(\tilde a^1-\underline{\tilde a^1})$ distributed. With $\underline a$ we denote the operation to round down $a$ to the next integer.

        The number $a_2\in[-1,1]$ is used for the kicker strength.
        \item To define the stochastic transition kernel $p$, consider the following transitions of the non-observable space between time steps. Let $s_0=((x_l^0,p_{x,l}^0),i_l^0)_{l=1}^{1000}\in E_Y$ be the initial state and  $a_0=(a^1,a^2)\in A$ the chosen action. Let $r$ be calculated as above.

        Define $a_t=0$ for all $t\in\{0,\dots,1000\}\setminus\{r\}$ and set $a_r=a^2$. 

        With these actions we now follow the transition rules given in \ref{subsec:1000_elec}, and let $\tilde s_{1000}=((x_l^{1000},p_{x,l}^{1000}),i_l^{1000})_{l=1}^{1000}\in E_Y$ be the last state.
        
         Then, we set the state $s_{1} = \tilde  s_{1000}$.
         
        The observable part is deterministically given for all $s_t\in E_Y$ by
        \begin{itemize}
            \item Let $n\coloneqq\sum_l\mathbbm{1}_{\{i_l^{t}=0\}}$, the number of electrons that have not crashed by time-point $t$, then we define
            \[
            \mu_{t}=\frac{1}{n}\sum^{1000}_{\substack{l=1\\ \text{with }i_l^{t}=0}}(x_l^{t},p_{x,l}^{t})\in\R^2
            \]
            \item Let $(H^z_{ij})_{i,j=0,\dots,50}\in[0,1000]$ be the histogram matrix around the mean $\mu_{t+1}$ with window lengths $z\in\R^2$. Let $\mathcal{H}_\theta\colon\R^{51,51}\to\R^4$ be a continuous function that may depend on parameters $\theta\in\R^{m}$, for some $m\in\N$. Then, define
            \[
            \mathcal{I}_{t}\coloneqq \mathcal{H}_\theta(H^z).
            \]
        \end{itemize}
        The observable part is then given by $o_{t}=(\mu_{t},\mathcal{I}_{t})\in E_X$.
 
        The stochastic transition kernel is set as the natural kernel given the transition rules. 
        \item the initial distribution $p_0$ is set as follows: All $s_0\in E_Y$ have structure
        \[
        s_0=(((x_l,p_{x,l}),0)_{l=1}^{1000},0)\in E_Y
        \]
        where all $x_l,p_{x,l}$ are sampled from a normal distribution with mean uniformly sampled from the area defined in section \ref{subsec:without_kick} and the covariance matrix
        \[
        \Sigma = \begin{pmatrix}
            0.00112 & 0  \\
            0 & 6.29\cdot10^{-5}
                 \end{pmatrix}
        \]
        in meters and rad.
        \item the running reward function is set to 0, so $r\colon (x,a)\mapsto0$
        \item discount factor is set to $\beta=1$
        \item terminal reward function $g\colon E_X\times E_Y\to\R$ is defined as
        \[
        g(o_t,s_t)\coloneqq\sum_{l=1}^{1000}\mathbbm{1}_{\{i_l^{t}=0\}},
        \]
        for all $s_t=((x_l^t,p_{x,l}^t),i_l^t)_{l=1}^{1000}\in E_Y$ and $o_t\in E_X$.
    \end{itemize}
\begin{remark}\label{rem:one_step_MDP}
The advantage of the one-step kick is that it can be realistically implemented at BESSY II. 

Since the position and shape of the newly injected electrons change only slightly from one injection to the next, we can use the information from the last injection to predict in which round and with which kicker strength the non-linear kicker should be used in the next injection. 

\end{remark}

\section{Summary}
In this chapter, we have seen 4 different ways of formulating the injection process.

At the beginning of chapter \ref{sec:model_description}, we introduced the first model. It showed the injection for a distribution of electrons. Although, only a finite number of electrons are injected each time since the sensor data is given as a histogram plot, we decided to model it as a distribution of electrons. We have shown the dynamics of the problem and described the final reward. 

The following 3 models are the models on which we applied reinforcement learning algorithms to find good policies. They were designed to have a low dimensional state space.

First, we described the single electron kick in subsection \ref{subsec:sing_elec}. In each round, we obtained the x and px information of an electron and could decide if we wanted to activate the non-linear kicker and if so with what strength.

Second, we described the 1000-step 1000-electron kick in subsection \ref{subsec:1000_elec}. In this model we obtained some information about the shape of the distribution of electrons and the mean values for x and px, in each round. With that we could decide once again in each round if we wanted to kick the electrons and with what strength.

Third, we described the 1-step 1000-electron injection in subsection \ref{subsec:1000_elec_1step}. Here, we only had the information about the shape and mean of the electrons in the very first round. Once again, we could use this information to decide in which round and with what strength we want to kick the electrons. We saw there, how to make the round decision a continuous action value, as this is needed for the DDPG and TD3 algorithms.

In section \ref{sec:programming_results}, we showed the performance of reinforcement learning algorithms on the different models.

\chapter{Calculations}\label{chap:calculations}
In this chapter, we investigate all the steps needed to train the reinforcement learning algorithms. We start with how we added stochasticity to the "drift-kick" simulation of BESSY II, explained in \ref{sec:r2rbehav} and how we made the simulation significantly faster, by approximating the environment. 

Then, we describe small modifications we made on the Markov Decision Models explained in section \ref{sec:model_description}. Finally, we show the best results in all the different models.

The programming reference can be found under the url \url{https://tubcloud.tu-berlin.de/s/RdKEm7sw9e5gyzY}.

\section{Implementation overview}
%

\subsection{Introduction of Stochasticity}\label{subsec:creating_stochas}
In section \ref{sec:r2rbehav}, we discovered how to create a simulation of BESSY II. We assume that the round-to-round behaviour function $\mathcal{T}$ is known. In the explanation of the Markov Decision Models in section \ref{sec:model_description}, we argued to add noise to the px and x values of the electrons in each round, as well as to the kicker strength. In this subsection, we show how the noises are distributed. 

All noise is zero mean normally distributed with the following standard deviations:

\begin{center}
    \begin{tabular}{lc}
        \toprule
        Noise type & std\\
        \midrule
        first round x and px:  & $6.5\cdot 10^{-5}$ in $\mathrm{m}$ and $\mathrm{rad}$  \\
        \midrule
        x: & $6.5\cdot 10^{-6}$ in $\mathrm{m}$\\
        \midrule
        px: & $5.2\cdot 10^{-6}$ in $\mathrm{rad}$\\
        \midrule
        noise kicker: & $0.0125$\\
    \bottomrule
    \end{tabular}
\end{center}

The amount of noise added to the electrons is the highest in the first round. This is because the magnets that guide the electrons close to the septum-sheet may create an additional unwanted influence on the electrons.

In every other round we also add noise to account for measurement errors and other possible unwanted influences from magnets. 
Additionally, we add noise to the non-linear kicker strength, since the desired strength can never be set exactly.

The last missing stochastic element is the distribution of the 1000 electrons, which we need in the 1000-electron kick and the one-step 1000-electron kick. We know that in the first round the distribution of the newly injected electrons has standard deviations of $\sigma_x=0.00112\mathrm{m}$ and $\sigma_{\mathrm{px}}=6.29\cdot10^{-5}\mathrm{rad}$. See the PhD-Thesis \textit{Transverse Resonance Island Buckets at BESSY II -A new Bunch Separation Scheme-} by Felix Armborst (\cite{felixarmborst}, section 4.2). Also, we assume that there is no correlation between both values.

\subsection{Creating the approximated simulation}
In this subsection, we investigate the creation of our own approximated simulation of non-linear kicker problem at BESSY II. Since the "drift-kick" simulation, see \cite{thscsi_paper}, took 180 seconds to simulate 1000 rounds for 1000 electrons, the idea is to create a similar but much faster environment. It is used to find good hyperparameters within a short time. Then, only the training for the best performing hyperparameters is re-executed using on original slower environment. In subsection \ref{subsec:1000s1000e_results}, we will show that the faster simulation is so close to the real simulation that good policies in the faster simulation also work well in the real simulation.

We have divided our analysis into two parts. In the first, we will consider how the round-to-round behaviour without the non-linear kicker influence can be approximated. In the second, we are interested in what to do when the non-linear kicker is activated. 
\subsubsection{Round-to-round behaviour}
We have approximated the round-to-round behaviour without the non-linear kicker by interpolation. Let $\mathbf{x}=(x_1,\dots,x_m)$ and $\mathbf{p^x}=(p^x_1,\dots,p^x_n)$  be a grid (sorted and equidistant) over all realistic x and px values. Now calculate for all $x_i,p^x_j$ the actual round-to-round behaviour $\tilde x_i,\tilde p^x_j=\mathcal{T}(0,x_i,p^x_j)$, with the "drift-kick" simulation, provided by \cite{thscsi_paper}. 

If we want to compute the round-to-round behaviour for new values $x,p^x$, we can do this in the following way:
\begin{enumerate}
    \item Assume $x\in[x_1,x_m]$ and $p^x\in[p^x_1,p^x_n]$. Otherwise, we assume that the electron is so far out, that it is lost.
    \item Let $i,j$ be the indices, such that $x\in[x_i,x_{i+1}]$ and $p^x\in[p^x_j,p^x_{j+1}]$. We call $\mathbf{b}=\{(x_{i},p^x_{j}),(x_{i},p^x_{j+1}),(x_{i+1},p^x_{j}),(x_{i+1},p^x_{j+1})\}$ the set of all border points.
    \item Calculate the euclidean distances of $(x,p^x)$ to all border points. Let $\mathbf{d}\in\R^4$ be the distances (sorted as in the definition of $\mathbf{b}$).
    \item Normalise the distance vector $\mathbf{d}$ by setting $\mathbf{d}_n\coloneqq\frac{\mathbf{d}}{||\mathbf{d}||_2}$.
    \item As closer points have smaller distances, but are probably more accurate, define 
    \[
    \mathbf{d}_r=\frac{(1-\mathbf{d}_n)}{||(1-\mathbf{d}_n)||_2}.
    \]
    \item Let $\widetilde{\mathbf{x}}\coloneqq\{\mathcal{T}(0,\tilde x,\tilde p^x)|_{x}\,|\, (\tilde x,\tilde p^x)\in\mathbf{b}\}$ and $\widetilde{\mathbf{p^x}}\coloneqq\{\mathcal{T}(0,\tilde x,\tilde p^x)|_{p^x}\,|\, (\tilde x,\tilde p^x)\in\mathbf{b}\}$ be the sets of the new x- and px-values of the border points. Assume that we can treat these sets as vectors (sorted as in the definition of $\mathbf{b}$).
    \item The transitioned x- and px-values starting at $x,p^x$ can be calculated using:
    \[
    x'\coloneqq \mathbf{d}_r^T\widetilde{\mathbf{x}}\quad\text{and}\quad {p^x}'\coloneqq \mathbf{d}_r^T\widetilde{\mathbf{p^x}}.
    \]
\end{enumerate}
With this method, we have found a fast way to approximate the transition function $\mathcal{T}(0,\cdot,\cdot)$. Note that the smaller the grid size, the better the approximation, and that the calculation for 1000 electrons can be nicely parallelised in NumPy, see \cite{numpy}.

\subsubsection{Non-linear kicker behaviour}
Now, we show what was done to approximate the non-linear kicker behaviour.

The first thing we did was to fit an easy to evaluate function to the electron survival rate seen.
\begin{figure}[H]
    \centering
    \includegraphics[width=9cm]{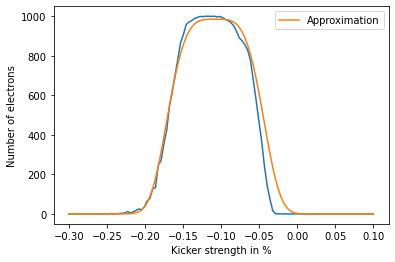}
    \caption{Extended figure \ref{fig:kicker_strength_behav}. The orange line maps a kicker strength $\alpha$ to \\ $985\cdot\mathrm{exp}(-(14.5\cdot(\alpha-\alpha^*))^4)$, where $\alpha^*=-0.1075$ is the optimal kicker strength. The value represents: How many electrons out of of 1000  survive the injection, given the kicker strength. Note that the optimal kicker strength $\alpha^*$ depends on the x,px position at the kicker, see section \ref{subsec:optimal_kicker_usage} for details.}
    \label{fig:kicker_strength_behav_w_approx}
\end{figure}
In figure \ref{fig:kicker_strength_behav}, one can see which function was fitted to the electron survival rate. 

Note that we used the scaling factor $985$ in the approximation shown in figure \ref{fig:kicker_strength_behav_w_approx}, since the real injection will never reach 100\% efficiency. 

We have now combined the models shown in the figures \ref{fig:optimal_area}, \ref{fig:optimal_strength} and \ref{fig:kicker_strength_behav_w_approx} to determine the electron survival probability at current x- and px-values $x,p^x$ and kicker strength $\alpha$. Figure \ref{fig:optimal_area} shows whether the electron could be successfully kicked given the current x- and px-values at the septum sheet. Figure \ref{fig:optimal_strength} shows the corresponding optimal kicker strengths and figure \ref{fig:kicker_strength_behav_w_approx} shows the injection probability for different kicker strengths. We have approximated the non-linear kicker usage for a single electron with x- and px-values $x,p^x$ and kicker strength $\alpha$ in the following way:
\begin{enumerate}
    \item For the values $x,p^x$ we checked if one of the border points in the grid shown in figure \ref{fig:optimal_area} survived the kick, i.e. had value 1000. If not, we set the injection probability to 0.
    \item In a similar way to approximating the round-to-round behaviour explained above, we approximated the optimal kicker strength by interpolation (the optimal kicker strength for all not survived border points have been set to -1, such that it remained well-defined. One could have ignored the points and used only the survived border points, which we did not do. The way we did it was simpler and faster).
    \item Now, using the approximation $985\cdot\mathrm{exp}(-(14.5\cdot(\alpha-\alpha^*))^4)$ from figure \ref{fig:kicker_strength_behav_w_approx}, where $\alpha^*$ is the approximated optimal kicker strength and $\alpha$ is the chosen kicker strength, we can now approximate the survival probability by dividing the value  $985\cdot\mathrm{exp}(-(14.5\cdot(\alpha-\alpha^*))^4)$ by 1000.
\end{enumerate}
So, we have found a way to approximate the electron survival probability in a fast way. Note that the approximation $985\cdot\mathrm{exp}(-(14.5\cdot(\alpha-\alpha^*))^4)$ was only fitted to a single pair of $x,p^x$ values, namely $x=16.4\mathrm{mm}$ and $p^x=0.0\mathrm{mrad}$. As an approximation, we assume that this holds for all $x,p^x$ values. 

Now, we can calculate the probability of the electrons surviving the injection at the moment the electrons are kicked. We no longer need to transition them to round 1000.

This also gives us a way of simultaneously approximating the expected number of electrons that survive the injection for 1000 electrons, by summing up each individual survival probability.

Note that with this approximation and not transitioning the electrons to round 1000, we no longer follow the Markov Decision Models defined in section \ref{sec:model_description}. We move on to show that this shortcut does not change the value functions.

From the reward iteration theorem \ref{thm:reward_iteration}, we know that
\[
V_{n,\pi}(x) = \beta^nr(x,f_n(x))+\int_E V_{n+1,\pi}(x')p(\mathrm{d}x'|x,f_n(x)),\quad\text{for all }  x\in E.
\]
Let $n$ be the round in which the non-linear kicker was activated. From the definition of the Markov Decision Models in section \ref{sec:model_description}, we know that from now on no action will affect the transitions and all rewards $r$ are set to 0. Since in all further rounds the states $x'$ are independent of the chosen action, we get 
\begin{align*}
V_{n,\pi}(x) &= \int_E V_{n+1,\pi}(x')p(\mathrm{d}x'|x,f_n(x))\\
&= \E[V_{n+1,\pi}(x')|x, f_n(x)]\\
&= \E[\E[V_{n+2,\pi}(x'')|x']|x,f_n(x)]\\
&= \E[V_{n+2,\pi}(x'')|x,f_n(x)]\\
&=...\\
&= \E[{g(x^{(1000-n)}}')|x,f_n(x)]
\end{align*}
for all  $x\in E$, which is equal to the expected amount of electrons that survive. 

This allows us to stop at the moment where the kicker was activated and return the expected number of surviving electrons as reward.

We have thus created a faster simulation of the non-linear kicker problem, which we can use to find good hyperparameters and even use the learned policies for the actual simulation.

\subsection{Adapting actions}\label{subsec:adapting_actions}
As in the 1000-step single-electron model and the 1000-step 1000-electron model, the non-linear kicker is activated as soon as the chosen action is not equal to 0. We wanted to increase the interval where electrons are not kicked, such that waiting a round and possibly kicking the electrons in the next round would be easier for the policies to learn.

To do this, we introduced the following activation function around the chosen actions
\[
\phi(x)=\sign(x)\cdot x^4
\]
and decided that in the interval $[-0.16,
0.16]$ the chosen actions do not lead to the activation of the non-linear kicker. The interval limits were chosen so that a kicker activation of $0.16^4$ leads to a the change in the electrons direction of flight similar in magnitude to the added noise.

\subsection{Reward design}\label{subsec:reward_design}
A small change in the running reward function $r$ was made to the first two models, where the information of the electrons were given in each round. In case the non-linear kicker was activated, but no electrons were in the accepted kicker area, shown in figure \ref{fig:optimal_area}, we decided to give a small reward that would increase the closer the chosen action was to 0, such that the algorithm learns not to kick the electrons in that round.

To do this, we added two hyperparameters $r_1,r_2$ to the models, which we call reward design parameters. We set the reward function $r$ to
\[
r_1-r_2\cdot|\alpha|,
\]
where $\alpha$ is the chosen kicker strength, but only in the case where the kicker was activated and all electrons were outside the kick-able area. For the hyperparameter search we allowed the values $r_1\in[-50,50]$ and $r_2\in[0,50]$.

\section{Programming results}\label{sec:programming_results}
In this section, we will present the best performing policies that we have been able to train. All policies were found using the TD3 algorithm described in subsection \ref{sec:TD3} and have been optimised using the tree-structured parzen estimator hyperparameter search algorithm, described in \ref{sec:tree_struc}. We also made the small adjustments described in subsections \ref{subsec:adapting_actions} and \ref{subsec:reward_design}, where we added an activation function around the action and changed the reward design in a specific case.

\subsection{1000-Step Single-Electron Model}
We start with the model explained in subsection \ref{subsec:sing_elec}. In this model we get the information of the x and px value of the electron in each round and can decide if we want to use the non-linear kicker and if so, with which kicker strength. 

The following figures show the performance of the best trained policy on the injection area. The first plot shows the reward, the second the round in which the different electrons were been kicked. 

\begin{figure}[H]
    \centering
    \includegraphics[width=12cm]{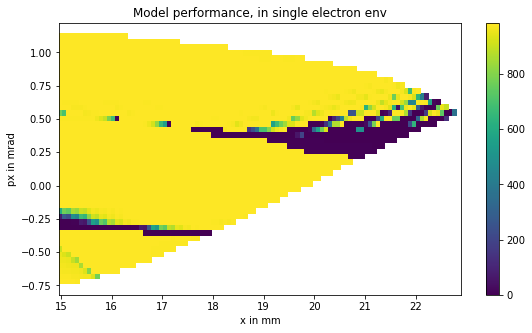}
    \caption{Model performance of the best policy on the injection area. The performance was checked on the deterministic version of the environment to make the results comparable. The x,px values are the initial values at the septum sheet. The performance shown is 1000 times the final reward. The average score over the whole injection area is 875 out of 1000 possible.}
    \label{fig:sing_elec_reward}
\end{figure}
\begin{figure}[H]
    \centering
    \includegraphics[width=12cm]{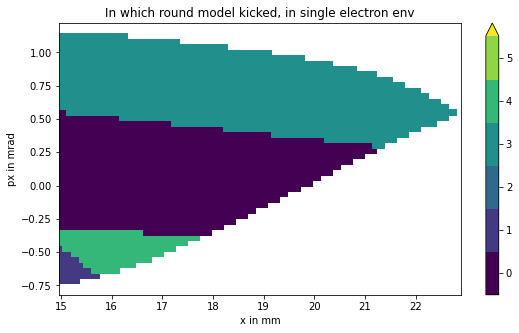}
    \caption{Corresponding rounds for the model shown in figure \ref{fig:sing_elec_reward}.}
    \label{fig:sing_elec_rounds}
\end{figure}
In figure \ref{fig:sing_elec_reward}, we can see that the performance of the model in the injection area is very good. In two smaller areas the electrons cannot be successfully injected. In figure \ref{fig:sing_elec_rounds}, we can see that rounds 0,1,3 and 4 have been used.

For the training we used the following hyperparameters: 
\begin{itemize}
    \item neural network (NN) shape of the policy network: $(256,128)$
    \item NN-shape of the value function network: $(32,32)$
    \item learning rate: $0.002722$
    \item batch size $64$
    \item number of runs before starting learning: $250$
    \item $\tau=0.005$
    \item action-noise: Ornstein-Uhlenbeck noise with standard variation: $0.377$
    \item reward design parameters $r_1=11,r_2=7$ 
    \item discount factor $\gamma = 1.3552$. 
\end{itemize}
Note that we did not use a discount factor ($\gamma$=1) to measure the performance. For the training process we have seen that it makes a difference and we have had the best results with discount factors greater than 1. This is very unusual, as the discount factor is usually chosen between 0 and 1. In our case, it helped the model to explore later rounds to kick the electrons.

\subsection{1000-Step 1000-Electron Model}\label{subsec:1000s1000e_results}
We continue  with the model explained in subsection \ref{subsec:1000_elec}. Here we have the information of the x and px means, and the shape of the 1000 electrons in each round and can decide if we want to use the non-linear kicker and if so, with which kicker strength. 

In this subsection, we look at some additional analysis that can also be applied to the next model. First, we show the performance of the best trained agent. Then, we compare it with the theoretical best performing policy and finally, we show how the agent performs in the real simulation.  

The following figures show the performance of the best trained policy on the injection area. The first plot shows the reward, the second the round in which the different electrons were kicked. 

\begin{figure}[H]
    \centering
    \includegraphics[width=12cm]{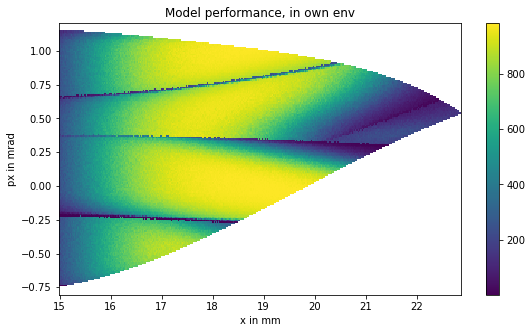}
    \caption{Model performance of the best policy on the injection area. The performance was checked on the deterministic version of the environment to make the results comparable. The x,px values show the mean of the distribution of electrons at the septum sheet. The performance shown corresponds to how many of the 1000 electrons survived the injection. The mean value over the whole injection area is 644.}
    \label{fig:1000_elec_reward}
\end{figure}
\begin{figure}[H]
    \centering
    \includegraphics[width=12cm]{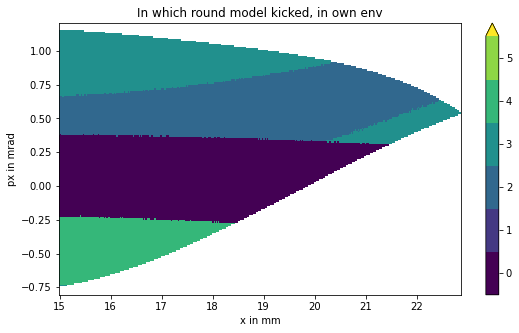}
    \caption{Corresponding rounds for the model shown in figure \ref{fig:1000_elec_reward}.}
    \label{fig:1000_elec_rounds}
\end{figure}
In figure \ref{fig:1000_elec_reward}, we can see that the performance of the model in the injection area is good. The area close to the septum sheet is not injected as successfully. The reason for this is that when the x-mean is close to 15mm, half of the electrons are already lost in the septum-sheet, before we are able to use the non-linear kicker. Similar to the performance of the model in the single-electron case, the electrons with x-mean greater than 20.5mm are not injected successfully. In figure \ref{fig:1000_elec_rounds}, we can see that the rounds 0,2,3 and 4 have been used.

The following hyperparameters were used for training: 
\begin{itemize}
    \item  NN-shape of the policy network: $(256,128)$
    \item NN-shape of the value function network: $(512)$
    \item learning rate: $0.000464$
    \item batch size $64$
    \item number of runs before starting learning: $250$
    \item $\tau=0.005$
    \item action-noise: Ornstein-Uhlenbeck noise with standard variation: $0.841$
    \item reward design parameters $r_1=11.129,r_2=6.67$
    \item discount factor $\gamma = 1.463$. 
\end{itemize}
Again, we did not use a discount factor ($\gamma$=1) in the performance measure.
As a continuous transformation function $\mathcal{H}_\theta\colon\R^{51,51}\to\R^4$, which takes the histogram plots as an input, we trained a convolutional neural network with the continuous ReLU activation function.  

The following code shows the used convolutional neural network:
\begin{figure}[H]
    \centering
\begin{lstlisting}
from torch import nn

cnn_used = nn.Sequential(
    nn.Conv2d(1, 8, kernel_size=6, stride=2, padding=0),
    nn.ReLU(),
    nn.MaxPool2d(2),
    nn.Conv2d(8, 4, kernel_size=3, stride=1, padding=0),
    nn.ReLU(),
    nn.MaxPool2d(2),
    nn.Conv2d(4, 4, kernel_size=2, stride=1, padding=0),
    nn.ReLU(),
    nn.Conv2d(4, 2, kernel_size=2, stride=1, padding=0),
    nn.ReLU(),
    nn.Flatten(),
    nn.Linear(8,4))    
\end{lstlisting}
\caption{Used CNN structure.}
\label{code:cnn_structure}
\end{figure}
In each \verb|nn.Conv2d| the first input is the number of input channels, the second input is the number of output channels. The input of all \verb|nn.MaxPool2d| is the window length. Note that we used multiple times in our code an even window length. As in our definition of CNNs (definition \ref{defn_cnn}) only odd windows lengths were allowed, we refer to the paper \textit{A guide to convolution arithmetic for deep learning}  by V. Dumoulin and F. Visin \cite{cnn_even} to see how it is used.

To evaluate the performance, we calculated the theoretical optimal performance on the injection area for the 1000 electron injections. For each combination of x and px means, we calculated the optimal kicker strength in each round and how many electrons would survive the injection.

The following results were obtained:
\begin{figure}[H]
    \centering
    \includegraphics[width=12cm]{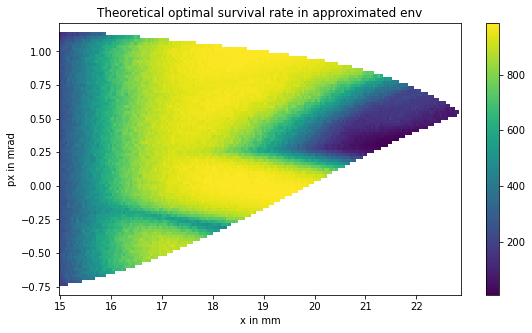}
    \caption{Theoretical best performance on the injection area. The performance was checked on the deterministic version of the environment to make the results comparable. The x,px values show the mean of the distribution of electrons at the septum sheet. The performance shown corresponds to how many of the 1000 electrons survived the injection. The mean value over the whole injection area is 676.57.}
    \label{fig:optimal_1000_reward}
\end{figure}
\begin{figure}[H]
    \centering
    \includegraphics[width=12cm]{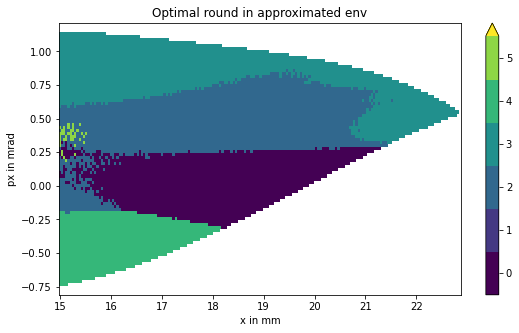}
    \caption{Corresponding rounds for the model shown in figure \ref{fig:optimal_1000_reward}.}
    \label{fig:optimal_1000_round}
\end{figure}
We observe that the performance of our trained policy is close to the performance of the optimal policy. The optimal policy has lower scores for the electrons that are far outside than the trained policy. The rounds chosen to activate the non-linear kicker are slightly different.

Now, we show the performance of the model on the real simulation we have approximated, to ensure that the approximated simulation is close to the "drift-kick" simulation.

\begin{figure}[H]
    \centering
    \includegraphics[width=12cm]{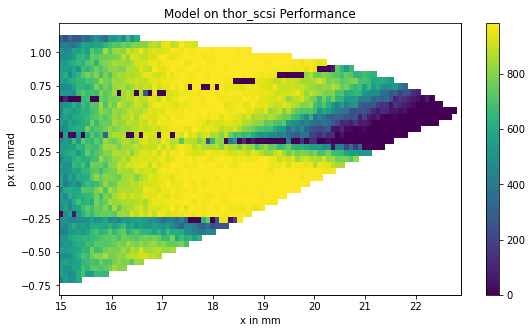}
    \caption{Performance of best trained model on the injection area in the "drift kick" simulation. The performance was checked on the deterministic version of the "drift kick" simulation to make the results comparable. The x,px values show the mean of the distribution of electrons at the septum sheet. The performance shown corresponds to how many of the 1000 electrons survived the injection. The mean value over the whole injection area is 727.7.}
    \label{fig:model_on_thrscsi}
\end{figure}
\begin{figure}[H]
    \centering
    \includegraphics[width=12cm]{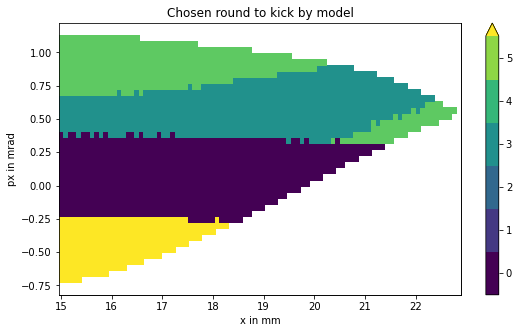}
    \caption{Corresponding rounds for the model shown in figure \ref{fig:model_on_thrscsi}.}
    \label{fig:model_on_thrscsi_rounds}
\end{figure}

We see that the performances are similar. The total score for the injection area is 727.7. This is even higher than the performance in the approximated simulation. However, in the regions where the injection round changes, the performance is worse. These observations suggest that the approximation of the non-linear kicker behaviour is not exact. In some regions where the optimal kicker strength changes rapidly, see figure \ref{fig:optimal_strength}, the plateau drop in electron survival probability should have been faster, see figure \ref{fig:kicker_strength_behav_w_approx}. In other regions, where the optimal kicker strength changes slowly, it could have been more relaxed. Overall, our approximation remains good, because the overall score is lower and the approximation of the round-to-round behaviour is very close.

\subsection{1-Step 1000-Electron Model}
We end with the model explained in subsection \ref{subsec:sing_elec}. Here we only get the information of the x and px means and the shape of the 1000 electrons in the very first round and after that we have to decide directly when to use the non-linear kicker and with which kicker strength. 

The following figures show the performance of the best trained policy on the injection area. The first plot shows the reward, the second the round in which round the different electrons were kicked. 

\begin{figure}[H]
    \centering
    \includegraphics[width=12cm]{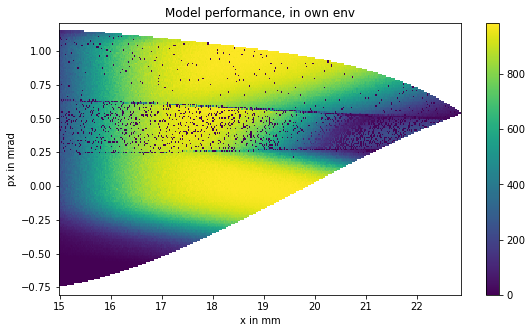}
    \caption{Model performance of the best policy on the injection area. The performance was checked on the deterministic version of the environment to make the results comparable. The x,px values show the mean of the distribution of electrons at the septum sheet. The performance shown corresponds to how many of the 1000 electrons survived the injection. The mean value over the whole injection area is 600.}
    \label{fig:single_step_best}
\end{figure}
\begin{figure}[H]
    \centering
    \includegraphics[width=12cm]{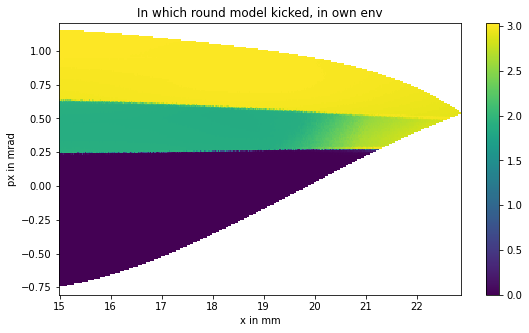}
    \caption{Corresponding rounds for the model shown in figure \ref{fig:single_step_best}.}
    \label{fig:single_step_best_rounds}
\end{figure}
The figure \ref{fig:single_step_best} shows the performance of the model in the injection area. Note that the image noise seen in the figure is the result of the stochastic choice of round, described in \ref{subsec:1000_elec_1step}. There is no noise in the lower half of the injection area in the performance figure \ref{fig:single_step_best}, because the lower end of an action space is easier to hit by clipping operators than values in between.

Compared to the other results, its performance is the worst. In particular, the performances for electrons with high x-value and low px values is not great.

In figure \ref{fig:single_step_best_rounds}, we can see that the used rounds almost equals the used rounds of the best 1000-step 1000-electron case, see figure \ref{fig:1000_elec_rounds}. For the electrons with low px values, the best 1000-step 1000-electron model used the round 4 to kick the electrons, which the model here was model able to learn. Possibly, with other hyperparameters and a longer training duration, this would have been possible. 

The following hyperparameters were used for training:

\begin{itemize}
    \item NN-shape of the policy network: $(256, 128)$
    \item NN-shape of the value function network: $(256, 128)$
    \item learning rate: $0.001$
    \item batch size $64$
    \item number of runs before learning starts: $250$
    \item $\tau=0.005$
    \item action-noise: Ornstein-Uhlenbeck noise with standard variation: $1.4777$ 
    \item discount factor $\gamma = 1$.
\end{itemize}
    Again, no discount factor ($\gamma$=1) was used in the performance measure. We used the same convolutional neural network  for the continuous transformation function $\mathcal{H}_\theta\colon\R^{51,51}\to\R^4$, as shown in figure \ref{code:cnn_structure}.

\section{Summary}
At the beginning of this chapter, we saw how we added noise to the "drift-kick" simulation to make the injection process more realistic. We then created an approximate simulation of the "drift-kick" simulation. For this approximate simulation, we approximated the round-to-round behaviour using interpolation and once the non-linear kicker was activated, we approximated the number of electrons that would survive this injection. This approximated simulation is much faster, remains precise and the noise can be added in the same way. We also found a shortcut by stopping the Markov decision process as soon as the non-linear was activated, which reduced the time needed for the simulation.

We needed this faster simulation in order to find good hyperparameters and policies in a reasonable time. For all 3 different Markov decision models, with varying degrees of complexity, we found very good performing policies. We have shown that these policies trained on the approximate simulation even work well for the "drift kick" simulation, and they show how to kick the electrons dependent on their position. Particularly, we were able to identify regions where the electrons are most likely to survive the injection.

\chapter{Conclusion}\label{chap:conclusion}
This thesis started off with an introduction to reinforcement learning. We saw the definition of a Markov decision process, defined the value function and proved the Bellman equation, which allowed us to compute optimal policies. As this was not efficient for continuous action spaces, we saw the DDPG and TD3 algorithms and how they solved the problem using function approximators.

We also gave an introduction to hyperparameter search. As the performance of reinforcement learning algorithms depends heavily on how well the hyperparameters fit to the problem, we presented in section \ref{section:hyper} different ways to search the hyperparameter space for good parameters. 

We also defined partially observable Markov decision models and explained the difference to Markov decision models.

We then took a closer look at an application for reinforcement learning. The aim was to optimise the non-linear kicker injection at BESSY II. We looked at the physical background of BESSY II and the non-linear kicker and gave some intuition on how to use the non-linear kicker. We explained how we added stochasticity to the environment and created our own approximate simulation of the BESSY II injection.

The non-linear kicker injection was described in 3 different ways with varying degrees of complexity. For all 3 different models we found very good policies. 

In particular, the 1-step 1000-electron model, where the policies decided the kicker activation round and strength uses only the information from the electrons in round 0, could be used for the real injection at BESSY II, as this environment reflects the real machine capabilities of BESSY II. Since the shape and position of the newly injected electrons typically change only slightly from one injection to the next, the information from the last injection could be used to find the optimal activation round and strength.

Note that before actually applying a trained policy on the real machine BESSY II, one would need to train the policy to be more robust in case of environment changes. 

In addition, another reinforcement learning agent could be trained to use the magnetic steerers, leading the electrons to the septum-sheet, to always steer the electrons into a region of high injection efficiency.  

The results show that it is possible to train reinforcement learning agents to optimise the non-linear kicker injection.
The best trained 1-step 1000-electron policy is now ready to be tested on the BESSY II machine.

\nocite{*}
\printbibliography

\end{document}